\newtheorem{te}{Theorem}[section]
\makeatletter \@addtoreset{equation}{section}
\newcommand{\garchpq}{GARCH($p,q$) }
\newcommand{\garchone}{GARCH($1,1$)}
\providecommand{\norm}[1]{\lVert#1\rVert} \providecommand{\abs}[1]{\lvert#1\rvert}
\title{Evaluation of extremal properties of GARCH(p,q) processes}
\author{Fabrizio Laurini\\ Department of Economics and Management\\
University of Parma \\
Via J.F. Kennedy 6, 43125 Parma, Italy \\
fabrizio.laurini@unipr.it \\
\and Paul Fearnhead \& Jonathan A. Tawn\\
Department of Mathematics and Statistics\\
Lancaster University\\ Lancaster LA1 4YF, UK, \\
p.fearnhead@lancaster.ac.uk, j.tawn@lancaster.ac.uk
}
\date{}
\begin{document}

\maketitle

\today

\abstract{Generalized autoregressive conditionally heteroskedastic  (GARCH) processes are widely used for modelling
features commonly found in observed financial returns. The extremal properties of these processes are of considerable interest for market risk management. For the simplest \garchpq process, with $\max(p,q) = 1$, all extremal features have been fully characterised.  Although the marginal features of extreme values of the process have been theoretically characterised when $\max(p,q)\ge 2$, much remains to be found about both marginal and dependence structure during extreme excursions. Specifically, a reliable method is required for evaluating the tail index, which regulates the marginal tail behaviour and there is a need for methods and algorithms for determining clustering. 
In particular, for the latter, the mean number of extreme values in a short-term cluster, i.e., the reciprocal of the extremal index, has only been characterised in special cases which exclude all \garchpq processes that are used in practice. Although recent research has identified the multivariate regular variation property of stationary \garchpq processes, currently there are no reliable methods for numerically evaluating key components of these characterisations.  We overcome these issues and are able to generate the forward tail chain of the process to derive the extremal index and
a range of other cluster functionals for all \garchpq processes including integrated GARCH processes and processes with unbounded and asymmetric innovations. The new theory and methods we present extend to assessing the strict stationarity and extremal properties for a much broader class of stochastic recurrence equations. }

\vspace{5mm} \noindent \textbf{Keywords:} \textit{Cluster of
extremes; extremal index; fixed point distributions; GARCH process; multivariate regular
variation, particle filtering, stochastic recurrence equations} \vspace{5mm}

\section{Introduction}

Risk management in the stock markets, commonly called {\it market risk management}, requires the use of statistical tools and models which aim at reducing the potential size of losses, occurring by sudden drops or growth in the value of stock. Losses can be amplified during periods of large volatility. Risk managers routinely use strategies to handle, model and predict the volatility of daily log-returns, defined as $X_t = \log{P_t} - \log{P_{t-1}}$, where $P_t$, $t = 1,2,\ldots,$ is the price of a generic asset. The behaviour of the extreme values of daily log-returns is critically important for market risk management. Isolated extreme values of daily log-return can often be managed, but there is major risk when there is a clustering of these extreme values, and so the study of this dependence structure during extreme events is essential.

It is standard to assume that the series $\{X_t\}$ is a stationary series. The most widely adopted models for $\{X_t\}$
are the generalised autoregressive conditionally heteroskedastic (GARCH) models \citep{bol:86} and stochastic volatility (SV) models \citep{ta:86}. These models are capable of capturing many of the empirical features of daily log-returns. Both processes have heavy tailed marginal distributions with the leading decay rates the same for both models. Where they differ is in terms of their extremal dependence structure. One of the most common ways to measure this is through the lag $\tau$ tail dependence
\begin{equation}
\chi_X(\tau)=\lim_{x\rightarrow \infty}\Pr(X_{t+\tau}>x~\mid~X_t>x),
\label{eqn:extremogram}
\end{equation}
proposed by \cite{led+ta:03}, with \cite{da+mi:09b} terming $\{\chi_{\tau}\}_{\tau\ge 0}$ the extremogram.  For SV models \cite{br+da:98} show that there is no clustering of extreme values, so that $\chi_{\tau}=0$
for all $\tau>0$. Thus extreme values from SV processes occur in temporal isolation. In contrast, for any \garchpq process $\chi_{\tau}>0$ for at least one value of $\tau>0$. But, the values of the extremogram, and other extremal dependence features, are only known for a very restricted subclass of \garchpq processes. The aim of this paper is to derive these extremal features for all
\garchpq models used in typical financial applications and to present algorithms for their evaluation. 

 We consider \garchpq models, for $p \in \mathbb{N}$ and  $q\in \mathbb{N}_+$, of the form
\begin{equation} \label{mult.model.intro}
X_t = \sigma_t Z_t
\end{equation}

\noindent where, for every fixed $t\in \mathbb{Z}$, the random variables $Z_t$ and $\sigma_t$ are independent. Furthermore, we assume that $\{Z_t\}$ are independent and identically distributed continuous random variables with $E(Z_t) = 0$ and $\mbox{Var}(Z_t) = 1$. The process $\{\sigma_t\}$, commonly referred to as the conditional volatility of $\{X_t\}$, is given by

\begin{equation} \label{eq:general.garchpq}
\sigma_t^2  = \alpha_0 + \sum_{i=1}^{q} \alpha_i X_{t-i}^2 + \sum_{j=1}^{p} \beta_j \sigma_{t-j}^2, \hspace{5mm} t \in \mathbb{Z},
\end{equation}
\noindent where $\alpha_0 > 0$ and the parameters $\alpha_i \geq 0, i = 1,\ldots, q-1$, $\alpha_q>0$ and $\beta_j \geq 0,
j=1,\ldots,p-1$, $\beta_p>0$ for $p\ge 1$, have to satisfy additional constraints for the process to be strictly stationary; see Section~\ref{sec:assumptions}. When $p=0$,  equations~\eqref{mult.model.intro} and \eqref{eq:general.garchpq} correspond to an ARCH($q$) process and when 
\begin{equation} \label{garchpq.second.order.stationary}
\sum_{i = 1}^q \alpha_i + \sum_{j = 1}^p \beta_j =1
\end{equation}
they correspond to an integrated GARCH($p,q$) process, denoted IGARCH($p,q$), which is strictly stationary but not second-order stationary. We develop new numerically robust and efficient methods for assessing whether any \garchpq process is strictly stationary. Our results cover all of the GARCH($p,q$) processes including these special cases. 

Existing theoretical and computational methods for deriving extremal properties are well established for special cases of the GARCH($p,q$) process, namely: for symmetric $Z_t$ with $p=0,q=1$, corresponding to the ARCH(1) process \citep{deh+r+r+dev:89} and for $p=q=1$, corresponding to a GARCH(1,1) \citep{la+ta:2012}; and for  asymmetric $Z_t$
with $p=q=1$ \citep{eh+fi+ja+sc:2015}. Additional results of other
tails probabilities are derived in the two-dimensional case by
\cite{co+di+vi:2014} but are effective only for ARCH(1) as further
complications arise for the GARCH($1,1$).

For general \garchpq models, with arbitrary $(p,q)$ many theoretical extremal properties have been derived by \cite{ba+da+mi:02b}, \cite{da+mi:09} and \cite{ba+se:09}, including the tails of marginal distributions and some results for the extremal clustering properties.  At first sight it seems that these results give everything that is needed for numerical evaluation of the required extremal properties. But this is far from the case, as we will show.

Firstly consider the marginal tail behaviour of $X_t$ and $X_t^2$. \cite{ba+da+mi:02b} showed that for GARCH($p,q$) processes there is an explicit theoretical expression for $\kappa \ge 0$ such that for fixed $x>1$ as $u\rightarrow \infty$ then
\[
\frac{\Pr(X_t>ux)}{\Pr(X_t>u)}\rightarrow x^{-2\kappa} \mbox{ and } \frac{\Pr(X_t^2>ux)}{\Pr(X_t^2>u)}\rightarrow x^{-\kappa}.
\]
These papers only give an asymptotic limiting expression for the evaluation of $\kappa$, but they do not illustrate its application. We find that direct computation using their  expression gives very poor numerical  performance. \cite{ja:2010} presents an alternative approach to evaluate
$\kappa$, however that approach applies only under the assumption that the innovation $Z_t$  has bounded support, ruling out many important distributions used  by practitioners, e.g., $Z_t$ being Gaussian or $t$-distributed. Furthermore, the associated numerical methods are very slow. We propose the first reliable and computational efficient numerical algorithms for the valuation of $\kappa$, which are valid irrespective of whether $Z_t$ are unbounded or bounded. We also find a new formulation for $\kappa$ which gives new insights and we show that for all IGARCH($p,q$) processes $\kappa=1$.

Now consider the extremal dependence/clustering features of the process.  \cite{ba+se:09} and \cite{ba+seERR:2011} propose computational algorithms for their evaluation. However, these methods have major limitations, which they identify, and only apply to some stochastic recurrence equations with bounded innovations, but do not hold for any GARCH($p,q$) processes, see Section~\ref{sec:first}. So currently no extremal clustering features for GARCH($p,q$) processes, when $\max(p,q)\ge 2$, can be evaluated. We propose an entirely new numerical algorithm to evaluate a range of cluster features for any GARCH($p,q$) process, regardless of the values of $p$ and $q$ and without imposing any restrictive assumptions.
 
There are a range of extremal dependence features of GARCH process that are of interest to practitioners. The most standard features are summarised by the time-normalized point process $N_n$ of exceedances of a level $u_n$,  defined by

\[
N_n(B) = \# \{k/n \in B: X_k >u_n \},
\]

\noindent where $u_n$ tends to the upper endpoint of the distribution of $X_t$ as $n\rightarrow \infty$, such that
$n \bar F_{X} (u_n) \rightarrow \psi$, for any finite $\psi>0$, where $F_{X}$ and $\bar F_{X}$
are the marginal distribution and survivor functions of $\{X_t\}$ respectively. As $n\to\infty$,
$N_n$ converges to a compound Poisson process $N$, where events occur as in a homogeneous Poisson
process with intensity $\psi\theta_X$, where $0<\theta_X\le 1$ is termed the extremal index and with multiplicities  distribution denoted by $\pi_X(k)$ for $k\ge 1$ \citep{hs+hu+lea:88}. We use the term clusters to describe the independent extreme events, with the associated multiplicities corresponding to the number/size of extreme values in each cluster. 
It follows that

\begin{equation} \label{eq.clust.size}
\sum_{i=1}^{\infty} i \pi_{X}(i) = 1/\theta_{X},
\end{equation}

\noindent i.e., the extremal index $\theta_X$ is the reciprocal of the limiting mean cluster size of extreme values. The smaller the extremal index then the larger the average number of extreme values per cluster. The special case $\theta_X = 1$ corresponds to there being no clustering of extremes, so extreme values occur in isolation in time. 
Thus, minimally there is a need to derive $\{\chi_{\tau}\}_{\tau>0}$ and $\theta_X$, and to get the latter we need $\pi_X(\cdot)$. Other functionals are also of interest to financial institutions for managing the duration of a stress period or predicting the total amount of losses that can be faced in such stress period, such as the aggregate of excesses over a cluster (total loss). These can also be derived using our methods. 

All of these cluster functionals can be obtained from the tail chain of the process,  which has been widely used for studying extremal clustering (\cite{roo:88}, \cite{sm+ta+co:97}, \cite{se:03}, \cite{pl+so:2018}). The tail chain is defined for a heavy tailed process $\{X_t\}$ in the following way. When $u\rightarrow  \infty$, if for any $t\in \mathbb{N}_{+}$
\[
(X_0/u, X_1/u, \ldots ,X_t/u )\mid X_0>u,
\]
converges weakly to $(\hat{X}_0, \hat{X}_1, \ldots, \hat{X}_t)$, with $\hat{X}_0$ non-degenerate, 
then the limit process $\{\hat{X}_t\}_{t\ge 0}$ is termed the forward tail chain. For SV models $\hat{X}_t=0$ almost surely for all $t\ge1$, so large values are not followed by large values for SV processes. In contrast, for any GARCH($p,q$) process at least one $\hat{X}_t$, for $t\ge1$, is non-degenerate, and every element of the tail chain is non-degenerate if 
$\min(\alpha_1, \ldots ,\alpha_q)>0$ and $\min(\beta_1, \ldots ,\beta_p)>0$, see Section~\ref{sec:extremogram} for further details. 

Similarly, there is a  backward tail chain $\{\hat{X}_t\}_{t\le 0}$ with identical definition for negative $t$. Here $\hat{X}_0$ is the same value for both forward and backward chains. The connections between forward and backward chains were determined by \cite{ja+se:2014}. However, in many cases only the forward chain is required to  derive the functionals of interest. For example, the extremogram $\{\chi_{X}(\tau); \tau \ge 0\}$ and the extremal index of $\{X_t\}$ can be expressed, respectively,  as
\[
\chi_{X}(\tau)=\Pr(\hat{X}_{\tau}>1 \mid \hat{X}_0>1)\mbox{ and }
\theta_X=\lim_{t\rightarrow \infty} \Pr(\hat{X}_1<1, \ldots ,\hat{X}_t<1\mid \hat{X}_0>1).
\]

In this paper we derive the theory for obtaining the forward tail chain, the extremogram, the extremal index and cluster size distribution for any \garchpq process with bounded or unbounded support for the innovations. We provide a new fast, yet accurate, Monte Carlo algorithm for the numerical evaluation of these  extremal features. We first obtain the
forward tail chain for the squared \garchpq process and then use this in a filtering argument, similar to  \cite{deh+r+r+dev:89}, to derive the features of interest. 

The paper is structured in the following way. In Section~\ref{sec:assumptions} we give the required background details for
the properties of stationary \garchpq processes and the theory of multivariate regular variation that is required for our methodology. In Section~\ref{sec:gamma.kappa} we give new results for testing stationarity and a new formulation for the tail index $\kappa$. In Sections~\ref{sec:tail_chain} and \ref{sec:tailchain} we derive the tail chain for the series
squared and original \garchpq processes respectively, with Section~\ref{sec:initial} containing the key particle filtering algorithm which enables us to sample from a $(p+q)$-dimensional  extreme state of the tail chain. Critical to the development of this algorithm is the theory of fixed point distributions. Section~\ref{sec:numerical} discusses the novel numerical evaluation of all components of the method, including checking for stationarity and evaluating $\kappa$, and it also illustrates the rapid convergence of the particle filter algorithm.
Section~\ref{sec:results} has a study of a range of extremal dependence
features of the \garchpq process over a variety of parameter values. In Section~\ref{sec:discuss} we identify that the methods and algorithms that are developed here in the context of GARCH($p,q$) processes immediately extend to a much broader class of stochastic recurrence equations, and so they are likely to have a much wider impact. 
The proofs of the theorems are given in the Appendix~\ref{sec:app.proofs}.

\section{Known properties of \garchpq processes}
\label{sec:assumptions}

\subsection{Strict Stationarity}
\label{sec:stationarity}

Let us start by defining strict stationarity for \garchpq processes. We focus on the squared GARCH process, $X_t^2$, and rewrite the process as a stochastic recurrence equation (SRE) as this enables the exploitation of a range of established results  \citep{ke:73} for such processes, e.g., the existence of results for the marginal distribution.

Let the $(p+q)$ vector $\mathbf{Y}_t$, the  $(p+q) \times (p+q)$ matrix $\mathbf{A}_t$ and the $(p+q)$ vector $\mathbf{B}_t$ be
\begin{equation} \label{transition.matrix}
\mathbf{Y}_{t} = \begin{pmatrix} X_t^2 \\ \vdots \\ X_{t-q+1}^2\\ \sigma_t^2 \\ \vdots \\ \sigma_{t-p+1}^2 \end{pmatrix}\quad
\mathbf{A}_t = \begin{pmatrix} \alpha^{(q-1)}Z^2_t &\alpha_q Z_t^2&\beta^{(p-1)}Z^2_t &
\beta_p Z_t^2 \\ \boldsymbol{I}_{q-1} &0_{q-1}& \boldsymbol{0}_{(q-1)\times (p-1)} & 0_{q-1} \\ \alpha^{(q-1)} &\alpha_q &\beta^{(p-1)} &
\beta_p \\
\boldsymbol{0}_{(p-1)\times(q-1)} &0_{p-1} & \boldsymbol{I}_{p-1} & 0_{p-1}\end{pmatrix}
\quad \mathbf{B}_t = \begin{pmatrix} \alpha_0 Z^2_t \\0_{q-1} \\\alpha_0 \\0_{q-1} \end{pmatrix}
\end{equation}
where $\alpha^{(s)} = (\alpha_1 , \ldots, \alpha_{s}) \in \mathbb{R}^{s}$,
 $\beta^{(s)}= (\beta_1 , \ldots, \beta_{s})  \in \mathbb{R}^{s}$,
  $\boldsymbol{I}_s$ is the identity matrix of size $s$,   $\boldsymbol{0}_{(r\times s)}$ is a matrix of zeros with $r$ rows and $s$ columns and
  $0_s$ is a column vector of zeros having length $s$. In each case here if $s<0$ then these terms are to be interpreted as being dimensionless. Then it follows that the squared \garchpq processes satisfies the SRE
\begin{equation} \label{eq.sre}
\mathbf{Y}_t = \mathbf{A}_t \mathbf{Y}_{t-1} + \mathbf{B}_t, \hspace{5mm} t \in \mathbb{Z},
\end{equation}

\noindent where $\{\mathbf{A}_t\}$ and $\{\mathbf{B}_t\}$ are each sequences of independent and identically distributed stochastic matrices and vectors.

The formulation of the SRE  via~\eqref{transition.matrix} is due to \citet[Section 2.2.2, page 29]{fr+za:2010}. This SRE formulation  is less parsimonious than that of \cite{bo+pi:92}, but has the benefit of covering all \garchpq processes, unlike
 that of \cite{bo+pi:92} which does not include the case $p=q=1$. In contrast here when $p=q=1$ we have that
 the terms in expression~\eqref{transition.matrix} simplify to
 \begin{equation} \label{transition.matrix.garchone}
\mathbf{Y}_{t} = \begin{pmatrix} X_t^2 \\ \sigma_t^2  \end{pmatrix}\quad \quad \mathbf{A}_t = \begin{pmatrix} \alpha_1 Z_t^2& \beta_1 Z_t^2 \\ \alpha_1 &\beta_1 \end{pmatrix} \quad \quad \mathbf{B}_t = \begin{pmatrix} \alpha_0 Z^2_t \\ \alpha_0 \end{pmatrix},
\end{equation}
\noindent where  $\alpha^{(s)}$ and  $\beta^{(s)}$ are scalar and none of the $\boldsymbol{I}_s, \boldsymbol{0}_{r\times s}$ or $0_s$ are included.

For general SRE of the form~\eqref{eq.sre}, but without the specific specification of $\mathbf{A}_t$  and  $\mathbf{B}_t$
corresponding to a GARCH($p,q$) process, \citet[Theorem 2.4 - page 30/32]{fr+za:2010} show that it is necessary and sufficient that there is a negative top Lyapunov exponent of $\mathbf{A}_t$ for the existence of a unique, strictly stationary solution. Before defining the Lyapunov exponent of a stochastic matrix
first consider the spectral radius of a deterministic square matrix $\mathbf{A}$, denoted $\rho(\mathbf{A})$.
Here $\rho(\mathbf{A})$ is the greatest modulus of its eigenvalues, and an important algebraic result is $\lim_{t \to \infty} t^{-1} \log \norm{\mathbf{A}^t} = \log \rho(\mathbf{A})$, where $\norm{\cdot}$ is any norm on the space of $\mathbf{A}$. The extension to a sequence of strictly stationary and ergodic random matrices $\{\mathbf{A}_t, t \in \mathbb{Z} \}$, for which $E \ln^{+} \norm{\mathbf{A}_t} < \infty$  (here $\ln^{+} x = \ln x,$ if $x\geq 1$ and 0 otherwise), is such that the top Lyapunov exponent is

\begin{equation} \label{eq:lyapunov.def}
\gamma = \lim_{t \to \infty} \frac{1}{t} E \left(\ln \norm{\mathbf{A}_t \mathbf{A}_{t-1} \cdots \mathbf{A}_1}\right),
\end{equation}

\noindent and $\exp(\gamma)$ is the spectral radius of the sequence $\{\mathbf{A}_t, t \in \mathbb{Z} \}$. 
Hence if $E \ln^{+} \norm{\mathbf{A}_t} < \infty$ for all $t$, it is necessary and sufficient that $\gamma<0$ for a strictly stationary process $\mathbf{Y}_t$.

So for strict stationarity of the squared and original GARCH($p,q$) processes $X_t^2$ and $X_t$ we need $E \ln^{+} \norm{\mathbf{A}_t} < \infty$ for $\mathbf{A}_t$ given by expression~\eqref{transition.matrix} and $\gamma<0$.
The finite moment condition holds for GARCH($p,q$) processes as $\norm{\mathbf{A}_t}<CZ^2_t+D$, for constants $C>0$ and $D>0$ and so $E \ln^{+} \norm{\mathbf{A}_t}<E \ln^{+}(CZ^2_t+D) =E(\ln(CZ^2_t+D)\mid CZ^2_t+D>1) \Pr(CZ^2_t+D>1)<KE(\ln(Z_t)\mid CZ^2_t+D>1)<\infty$, for a suitable constant $K$ and where
the last inequality holds as $E(Z_t)$ is finite by the model definition~\eqref{mult.model.intro}.
Consequently we only need to check if $\gamma<0$.

Unfortunately, expression~\eqref{eq:lyapunov.def} is not an ideal starting point for evaluating $\gamma$. Instead we also have
\begin{equation} \label{eq:lyapunov.monte.carlo}
\gamma_t = \frac{1}{t}\ln \norm{\mathbf{A}_t \mathbf{A}_{t-1} \cdots  \mathbf{A}_1}\quad \text{and}\quad \gamma = \lim_{t \to \infty}\gamma_t,
\end{equation}
\noindent see \citet[Theorem 2.3 - page 30]{fr+za:2010}. So, via expression~\eqref{eq:lyapunov.monte.carlo}, it would appear a relatively simple simulation can be performed to obtain  a reliable Monte Carlo estimate of $\gamma$. However, as we will show in Section~\ref{sec:gamma}, this is far from the case and a mix of careful asymptotic approximation analysis and numerical evaluation is required to evaluate $\gamma$.

In some special cases we do not need to evaluate $\gamma$ to find if the process is strictly stationary, e.g., GARCH($p,q$) processes are always strictly stationary when $\sum_{i = 1}^q \alpha_i +\sum_{j = 1}^p \beta_j \le 1$; this includes all IGARCH($p,q$) processes. It is also known that $\sum_{j = 1}^p \beta_j < 1$ is necessary but not sufficient for strict stationarity. Therefore, numerical evaluation of $\gamma$ is required whenever $\sum_{i = 1}^q \alpha_i+ \sum_{j = 1}^p \beta_j>1$ when $\sum_{j = 1}^p \beta_j < 1$.

\subsection{Existing Results}
\label{sec:first}

\cite{ba+da+mi:02b} show that there exists a unique stationary solution to the
SRE~\eqref{eq.sre} and that this solution exhibits a multivariate regular variation property, i.e., for any $t$, any norm
$\norm{\cdot}$ and all $r>0$,
\begin{equation} \label{eq.mult.var.garchpq}
\frac{\Pr(\norm{\mathbf{Y}_t} > rx, \mathbf{Y}_t/\norm{{\mathbf{Y}_t}} \in \cdot)}{\Pr(\norm{\mathbf{Y}_t} > x)} \stackrel{v}{\to}
r^{-\kappa} \Pr(\hat{\mathbf{\Theta}}_t \in \cdot), \hspace{5mm} \mbox{ as } x \to \infty,
\end{equation}

\noindent where $\stackrel{v}{\to}$ denotes vague convergence \citep{ka:83},
$\kappa \geq 0$, and $\hat{\mathbf{\Theta}}_t$ is a $(p+q)$-dimensional random vector in the unit sphere (with respect to a norm $\norm{\cdot}$) defined by $\mathbb{S}^{p+q}\subset \mathbb{R}^{p+q}$, and their $(p+q)$ elements will be denoted by $\hat{\mathbf{\Theta}}_t = (\hat{\vartheta}_t^{(1)}, \ldots, \hat{\vartheta}_t^{(p+q)})$. If condition~\eqref{eq.mult.var.garchpq} holds then $\mathbf{Y}_t$ is said to exhibit multivariate regular variation with index $\kappa$ and the distribution of $\hat{\mathbf{\Theta}}_t$ is  termed the spectral measure of the vector $\mathbf{Y}_t$. See \cite{re:87} for further details on multivariate regular variation. A consequence of the multivariate regular variation property~\eqref{eq.mult.var.garchpq} for GARCH($p,q$) processes is that all the marginal variables of  $\mathbf{Y}_t$ have regularly varying tails with index $\kappa$, so in particular for $r\ge 1$ and all $t$
\begin{equation} \label{eq.uni.var.garchpq}
\Pr(X_t^2> rx \mid X_t^2> x) \to
r^{-\kappa}, \mbox{ and } \Pr(\sigma_t^2> rx \mid \sigma_t^2> x) \to
r^{-\kappa},
\hspace{5mm} \mbox{ as } x \to \infty.
\end{equation}
So both the squared \garchpq process and its variance have regularly varying tails of index $\kappa$.

It is insightful to consider a slightly rearranged version of limit~\eqref{eq.mult.var.garchpq} and to be specific about which norm we will use. We take the $L_1$ norm, and define radial, $R_t$, and angular (two variants $\mathbf{\Theta}_t$ and $\mathbf{\Theta}^-_t$) random variables by
\begin{eqnarray} \label{eq.setting.vectors}
\nonumber R_t &=& \norm{\mathbf{Y}_t}= X_t^2+ \ldots +X_{t-q+1}^2+ \sigma^2_t + \ldots +\sigma^2_{t-p+1},  \\
\mathbf{\Theta}_t &=& \mathbf{Y}_t/\norm{{\mathbf{Y}_t}}= (X_t^2, \ldots ,X_{t-q+1}^2, \sigma_t^2, \ldots ,\sigma_{t-p}^2,\sigma_{t-p+1}^2)/R_t \nonumber\\
\mathbf{\Theta}^-_t &=& (X_t^2, \ldots ,X_{t-q+1}^2, \sigma_t^2, \ldots ,\sigma_{t-p}^2)/R_t,
\end{eqnarray}
with $\mathbb{S}^{p+q}$ the $(p+q)$ dimensional unit simplex. We have two angular variables as the $p+q$ dimensional variable $\mathbf{\Theta}_t$ has redundancy in its final dimension as  $\norm{\mathbf{\Theta}_t}=1$, and so for studying  the distribution of angular variables it is simpler to work with the $p+q-1$ dimensional variable $\mathbf{\Theta}^-_t$, which is related to  $\mathbf{\Theta}_t$ by 
$\mathbf{\Theta}_t=(\mathbf{\Theta}^-_t, 1-\norm{\mathbf{\Theta}^-_t})$ and $\mathbf{\Theta}^-_t$ being $\mathbf{\Theta}_t$ without its last component. We use this $\mathbf{W}^-$ notation to create a $(p+q-1)$ dimensional vector from any $(p+q)$ dimensional vector $\mathbf{W}$ on the simplex $\mathbb{S}^{p+q}$ throughout. Furthermore,  for $\mathbf{w}\in \mathbb{R}^{p+q-1}$, we use the
notation 
\begin{equation}
H_{\mathbf{\Theta}_t }(\mathbf{w})=\Pr(\mathbf{\Theta}^-_t\le \mathbf{w}),
\label{eq:dist}
\end{equation}
with vector algebra, here and elsewhere, interpreted as being componentwise. 

We will denote the limit random variables, that arise in limit~\eqref{eq.mult.var.garchpq}, for $(R_t, \mathbf{\Theta}_t, \mathbf{\Theta}^-_t)$ by $(\hat{R}_t ,\hat{\mathbf{\Theta}}_t,\hat{\mathbf{\Theta}}^-_t)$, with the distribution function of $\hat{\mathbf{\Theta}}_t$,  denoted by $H_{\hat{\mathbf{\Theta}}_t}$, defined similarly to distribution~\eqref{eq:dist}. Subsequently $H_{\hat{\mathbf{\Theta}}_t}$ is referred to as the spectral measure, the term coming from multivariate regular variation terminology \citep{re:87}. Then, for $r \ge1$, as $x \to \infty$, the limit~\eqref{eq.mult.var.garchpq}  becomes, 
\begin{equation}\label{eqn:spectral_limit}
\Pr(R_t > rx, \mathbf{\Theta}^-_t \le \mathbf{w} \mid R_t> x) \stackrel{v}{\to}
\Pr(\hat{R}_t>r)\Pr(\hat{\mathbf{\Theta}}^-_t \le \mathbf{w}) = r^{-\kappa} H_{\hat{\mathbf{\Theta}}_t}(\mathbf{w}). 
\end{equation}

\noindent 
 From the first expression for the asymptotic form in limit~\eqref{eqn:spectral_limit} we see that the radial variable $R_t$ and the angular variables $\mathbf{\Theta}_t$ become asymptotically independent, as  the radial variable $R_t$ grows due to $x \to \infty$, i.e., the variables 
$ \hat{R}_t$ and  $\hat{\mathbf{\Theta}}_t$ are independent. The second term in this limit shows that $\hat{R}_t$ is a Pareto random variable with tail index $\kappa$, i.e.,
\begin{equation} \label{eqn:Rmarginal}
\Pr(\hat{R}_t>r)=r^{-\kappa} \hspace{5mm} \mbox{ for } r\ge 1.
\end{equation}

\noindent There is additional structure imposed on both $H_{\hat{\mathbf{\Theta}}_t}$ and $\kappa$ by the \garchpq process, which  
is identified by \cite{ba+se:09} and \cite{ja:2010} respectively. We discuss this structure in each case below.
 
For the GARCH($1,1$) process, \cite{la+ta:2012} provided an expression for the spectral measure, for a different description of the angular variable to that used here. For the choice of  the angular variable~\eqref{eq.setting.vectors}, for all $t$, their result translates to 
\begin{equation}
H_{\hat{\mathbf{\Theta}}_t}(w)= \frac{2}{E(\abs{Z}^{2\kappa})} \int_{0}^{\left(\frac{w}{1-w}\right)^{1/2}}(1+s^2)^{\kappa} F_Z(ds),\hspace{1cm} \mbox{ for } 0\le w\le 1,
\label{eqn:HforGARCH11}
\end{equation}
where $F_Z$ is the distribution function of the innovations $Z_t$. When $\max(p,q)\geq2$, through highly skilled use of the multivariate regular variation structure,  \citet[Propositions 3.3, 5.1]{ba+se:09} show, that when  $\mathbf{A}$ is independent and identically distributed to $\mathbf{A}_t$, that
\begin{equation}
E(\norm{\mathbf{A}\hat{\mathbf{\Theta}}_t}^{\kappa}) = 1
\label{eqn:moment}
\end{equation}
\noindent and uniquely
\[
\Pr(\hat{\mathbf{\Theta}}_t \in \cdot)=E(\norm{\mathbf{A}\hat{\mathbf{\Theta}}_t}^{\kappa}; \mathbf{A}\hat{\mathbf{\Theta}}_t/\norm{\mathbf{A}\hat{\mathbf{\Theta}}_t} \in \cdot)
\]
\noindent where the notation $E(X;Y):=E(X \mathbf{1}_Y)$ where $\mathbf{1}_Y$ is the indicator of the event $Y$. Thus
\begin{equation} \label{eqn:Hform}
H_{\hat{\mathbf{\Theta}}_t}(\mathbf{w})=E(\norm{\mathbf{A}\hat{\mathbf{\Theta}}_t}^{\kappa}; (\mathbf{A}\hat{\mathbf{\Theta}}_t/\norm{\mathbf{A}\hat{\mathbf{\Theta}}_t})^- \le \mathbf{w}).
\end{equation}

\noindent \citet[p. 1075]{ba+se:09}  propose an approach to simulate from $H_{\hat{\mathbf{\Theta}}_t}(\mathbf{w})$
for an SRE of the form~\eqref{eq.sre}, with the required distribution $H_{\hat{\mathbf{\Theta}}_t}$ being the invariant distribution of a Markov chain and 
MCMC methods used for its evaluation. However, this method cannot be used for general GARCH($p,q$) processes for the following reasons. Firstly, they make an assumption that $\mathbf{A}_t$ is bounded, which excludes the possibility of $Z_t$ being, for example, Gaussian or $t_{\nu}$ distributed. Much more critically though, \cite{ba+seERR:2011} note that the proof that $H_{\hat{\mathbf{\Theta}}_t}(\mathbf{w})$ is the stationary distribution of the Markov Chain that they proposed was flawed, and their claimed results only hold under one of two very specific conditions on the matrix $\mathbf{A}_t$ in the SRE framework~\eqref{eq.sre}.
Neither of these conditions are satisfied by the form of $\mathbf{A}_t$ for GARCH$(p,q)$ processes, when $\max(p,q)\geq2$, even with a bounded $Z_t$. Thus  the algorithm proposed by \cite{ba+se:09} cannot be used for obtaining $H_{\hat{\mathbf{\Theta}}_t}(\mathbf{w})$ for a GARCH$(p,q)$ process. Our approach in Section~\ref{sec:initial} overcomes both of these restrictions. 

Next we focus on how $\kappa$ is determined. In particular, \cite{ba+da+mi:02b}
showed that there exist a $\kappa>0$ which is the unique positive solution of the equation
\begin{equation} \label{eq:root.equation}
\lim_{t \to \infty} \frac{1}{t} \ln E \left(\norm{\mathbf{A}_t \mathbf{A}_{t-1} \cdots \mathbf{A}_1}^{\kappa}\right) = 0
\end{equation}
where here and throughout the matrix norm we use will use is $\norm{\mathbf{A}} = \sum \abs{a_{ij}}$, where $a_{ij}$
is $(i,j)$th element of matrix $\mathbf{A}$.  For the GARCH($1,1$)  process  \cite{mi+st:00} show that $\kappa$ is simple to evaluate using expression~\eqref{eq:root.equation}.  Specifically, taking $\mathbf{A}_t$ as in expression~\eqref{transition.matrix.garchone} we have that 

\[
\mathbf{A}_t \mathbf{A}_{t-1} \cdots  \mathbf{A}_1 = \mathbf{A}_t \prod_{i=1}^{t-1}(\alpha_1 Z^2_i + \beta_1),
\]

\noindent from which it simply follows that expression~\eqref{eq:root.equation} holds and $\kappa$ satisfies

\begin{equation}
E\left[\left(\alpha_1 Z^2_t + \beta_1\right)^\kappa\right] = 1.
\label{eqn:GARCH11}
\end{equation}

\noindent Setting $\beta_1 = 0$ for the \garchone\, process gives the same result for $\kappa$ derived by \cite{deh+r+r+dev:89} for the ARCH(1) process. For general \garchpq processes no such existing simplification of  equation~\eqref{eq:root.equation} gives an easier expression for $\kappa$. So it is natural to try to find $\kappa$ by a numerical solution of the limit equation~\eqref{eq:root.equation}. However, direct numerical solution is non-trivial due to numerical instabilities.

The only existing feasible method to evaluate $\kappa$ was proposed by \citet[Proposition 4.3.1]{ja:2010}, which  exploits \citet[Proof of Theorem 3]{ke:73}. With $\mathbf{A}$ specified in~\eqref{eq.sre} the conditions required for the results of \cite{ke:73} apply and the equality 
\begin{equation} \label{eq:rhohforkappa}
\int_{\mathbb{S}^{p+q}} E \left[\norm{\mathbf{A} \mathbf{w}}^{k} g\left( \frac{\mathbf{A} \mathbf{w}}{\norm{\mathbf{A} \mathbf{w}}}\right)\right] 
H_k(d\mathbf{w})
= \rho_k \int_{\mathbb{S}^{p+q}} g\left(\mathbf{w}\right) 
H_k(d\mathbf{w}),
\end{equation}

\noindent holds for all continuous functions $g$, all unit measures $H_k$ on the space $\mathbb{S}^{p+q}$, and where $\rho_k$ is a constant. The special case of $g\equiv 1$ in \eqref{eq:rhohforkappa} gives the simplification 

\begin{equation} \label{eq:sampling.togetkappa}
\int_{\mathbb{S}^{p+q}} E \left[\norm{\mathbf{A} \mathbf{w}}^{k}\right] H_k(d\mathbf{w}) = \rho_k \int_{\mathbb{S}^{p+q}}  H_k(d\mathbf{w}) = \rho_k.
\end{equation}

\noindent  For any given $k\in (0, \infty)$, whatever the chosen unit measure $H_k$,  if the pair $(\rho_k, H_k)$ satisfies equality~\eqref{eq:rhohforkappa}, then $\rho_k$ is determined solely by $k$, i.e., not by the choice of $H_k$. 

As we know from condition~\eqref{eqn:moment} that the $\kappa$ moment of $\norm{\mathbf{A}\hat{\mathbf{\Theta}}_t}$ is equal to $1$, where $\hat{\mathbf{\Theta}}_t \sim H_{\hat{\mathbf{\Theta}}_t}$  on the space $\mathbb{S}^{p+q}$ then it follows from property~\eqref{eq:sampling.togetkappa} that $\rho_k = 1$ when $k = \kappa$. \cite{ke:73} and \cite{ja:2010} shows that there is only one solution to equation~\eqref{eqn:moment}, so $\kappa$ is the unique solution of   
\begin{equation} \label{eq:sampling.togetkappa=1}
\int_{\mathbb{S}^{p+q}} E \left[\norm{\mathbf{A} \mathbf{w}}^{\kappa}\right] H_{\kappa}(d\mathbf{w}) = 1,
\end{equation}
and that the unit measure $H_{\kappa}$ must correspond to the distribution function $H_{\hat{\mathbf{\Theta}}_t}$. Thus if we can find, or simulate from, $H_{\hat{\mathbf{\Theta}}_t}$ we can find $\kappa$.
 To evaluate $\kappa$ all that is required is to define a class of unit measures $H_k$, over $k\in (0, \infty)$, which contains within it as an interior point $H_{\hat{\mathbf{\Theta}}_t}$, and then vary $k$ until property~\eqref{eq:sampling.togetkappa=1} is found.

\cite{ja:2010} proposes an algorithm to simulate from a class of functions $H_k$ which adapts the invalid algorithm of \cite{ba+se:09}. This gives a valid method for finding  $\kappa$ and $H_{\hat{\mathbf{\Theta}}_t}$ but critically it only applies when the innovations $Z_t$ have bounded support. Therefore there is a need for an algorithm to cover cases where $Z_t$ have unbounded support. In Section~\ref{sec:initial} we describe such an algorithm that applies whatever the support of $Z_t$. Furthermore, our methods for calculating $\kappa$, have substantial computational performance and efficiency gains compared to the algorithm of \cite{ja:2010};  see Section~\ref{sec:kappaEval}.

\section{Extremal properties of squared GARCH processes}
\label{sec:tailchainsquared}

\subsection{New formulations for $\gamma$ and $\kappa$}\label{sec:gamma.kappa}

Expression~\eqref{eq:lyapunov.monte.carlo} suggests using Monte Carlo for the evaluation of $\gamma$, by taking $t$ to be very large.  This approach suffers from serious numerical instabilities. As the norm of the product is tending to zero it seems sensible to first normalise the
size of the individual terms in the product. We do this be scaling each matrix $\mathbf{A}_i$ by its largest (in magnitude) eigenvalue, which we denote by  $\lambda_i$, with  $\lambda_i>0$ for all $i$. From \cite{ke+sp:84} for an arbitrary $\mathbf{A}_i$, satisfying conditions of Section~\ref{sec:stationarity}, it is guaranteed that $\lambda_i$ is simple and exceeds all other eigenvalues in absolute value.  Let
\[
\Delta_t = \prod_{i=1}^{t} \left(\frac{\mathbf{A}_{t+1-i}}{\lambda_{t+1-i}}\right)
\]
so that the product~\eqref{eq:lyapunov.monte.carlo} can be re-written as
\begin{equation}
\gamma_t=\frac{1}{t}\ln \norm{\Delta_t}+\frac{1}{t}\sum_{i=1}^t \ln \lambda_i,
\label{eq:stable}
\end{equation}

\noindent and let 
\begin{equation}\label{eq:defin.eta}
\eta_t= \frac{1}{t}\ln \norm{\Delta_t} \quad \text{and} \quad \eta= \lim_{t\rightarrow \infty} \eta_t.
\end{equation}
\noindent We study the limit behaviour of both components in equation~\eqref{eq:stable}  in Theorem~\ref{th.gamma}, whose proof is postponed to Appendix~\ref{sec:app.proofs}.

\begin{te} \label{th.gamma}
	If $\mathbf{A}_t, t\in \mathbb{Z}$,  is  a sequence of independent and identically distributed random matrices, with non-negative entries, and
	
	\[
	\mathbf{C}_t := \prod_{i=1}^t \frac{\mathbf{A}_{t+1-i}}{(\lambda_{t+1-i}\exp(\eta))},
	\]
	where $\lambda_i$ is the magnitude of the largest eigenvalue of $\mathbf{A}_i$ and $\eta \in \mathbb{R}$ is such  that
	\begin{equation}
	\ln \norm{\Delta_t}/t  \rightarrow \eta, \mbox{ or equivalently }   \ln \norm{\mathbf{C}_t}/t\rightarrow 0, \mbox{ almost surely as } t\rightarrow \infty,
	\label{eqn:Ccondition}
	\end{equation}
	then $ \gamma = \lim_{t \to \infty}\gamma_t$ if and only if
	\[
	\gamma=E(\ln\lambda)+\eta.
	\]
\end{te}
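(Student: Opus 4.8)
The plan is to pass to the limit in the exact algebraic identity \eqref{eq:stable}, which using \eqref{eq:defin.eta} I write as $\gamma_t = \eta_t + t^{-1}\sum_{i=1}^t \ln\lambda_i$, and to show that each of the two terms on the right converges: the first by the hypothesis \eqref{eqn:Ccondition}, the second by the strong law of large numbers.

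First I would verify that the two formulations of the hypothesis in \eqref{eqn:Ccondition} coincide. Each of the $t$ factors of $\mathbf{C}_t$ is the corresponding factor of $\Delta_t$ divided by the scalar $\exp(\eta)$, so $\mathbf{C}_t = \exp(-t\eta)\,\Delta_t$; absolute homogeneity of the norm gives $\norm{\mathbf{C}_t} = \exp(-t\eta)\norm{\Delta_t}$ and hence $t^{-1}\ln\norm{\mathbf{C}_t} = \eta_t - \eta$. Thus $t^{-1}\ln\norm{\mathbf{C}_t}\to 0$ almost surely is exactly the statement $\eta_t\to\eta$, i.e. that the limit $\eta$ in \eqref{eq:defin.eta} exists.

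Next I would treat the average $t^{-1}\sum_{i=1}^t\ln\lambda_i$. Since the $\mathbf{A}_i$ are independent and identically distributed and the largest-modulus eigenvalue $\lambda_i$ is a measurable function of $\mathbf{A}_i$ (simple and strictly positive by \cite{ke+sp:84}), the $\{\ln\lambda_i\}$ are themselves independent and identically distributed, so the strong law gives $t^{-1}\sum_{i=1}^t\ln\lambda_i \to E(\ln\lambda)$ almost surely once $E\abs{\ln\lambda}<\infty$. Integrability of the positive part is immediate from $\lambda_i\le\rho(\mathbf{A}_i)\le\norm{\mathbf{A}_i}$ together with the bound $E\ln^+\norm{\mathbf{A}_t}<\infty$ already established in Section~\ref{sec:stationarity}. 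For the negative part I would use a Perron--Frobenius lower bound: because the norm is submultiplicative one has $\lambda_i=\rho(\mathbf{A}_i)\ge\rho(\mathbf{M})$ for any principal submatrix $\mathbf{M}$ of the non-negative matrix $\mathbf{A}_i$, and taking $\mathbf{M}$ to be the deterministic companion block built from $(\beta_1,\ldots,\beta_p)$ in the lower-right corner of \eqref{transition.matrix} yields the strictly positive deterministic bound $\lambda_i\ge\beta_p^{1/p}$, so that $E(\ln^-\lambda)<\infty$ (the ARCH case $p=0$ instead needing only $E\abs{\ln\abs{Z_t}}<\infty$, which holds under the assumed continuity of $F_Z$).

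Combining the two convergences in the identity shows that $\gamma_t\to\eta+E(\ln\lambda)$ almost surely, so $\gamma=\lim_t\gamma_t$ exists and equals $E(\ln\lambda)+\eta$; conversely, whenever $\gamma=\lim_t\gamma_t$ is posited, subtracting the two already-identified limits on the right forces $\gamma=E(\ln\lambda)+\eta$, which gives the claimed equivalence. The computation is short precisely because the existence of $\eta$ is assumed rather than proved; the only genuine work is the integrability check, and in particular the lower-tail control $E(\ln^-\lambda)<\infty$. The conceptual point I would stress is that the decomposition isolates the numerically unstable normalized-product term $\eta_t$ from the term $t^{-1}\sum_i\ln\lambda_i$, which is an ordinary average of independent and identically distributed variables and is therefore governed by the strong law and computable as a single expectation.
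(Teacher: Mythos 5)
Your proof is correct and follows essentially the same route as the paper's: factor the product as $\mathbf{C}_t\left(\prod_{i=1}^t\lambda_i\right)\exp(\eta t)$, pull the positive scalars out of the norm, and let hypothesis~\eqref{eqn:Ccondition} kill the $t^{-1}\ln\norm{\mathbf{C}_t}$ term while the strong law of large numbers handles the eigenvalue average. The only difference is that you additionally verify $E\abs{\ln\lambda}<\infty$ (via $\lambda\le\norm{\mathbf{A}}$ for the upper tail and the companion-block lower bound $\lambda\ge\beta_p^{1/p}$ for the lower tail), a check the paper's proof leaves implicit.
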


To assess this result in terms of what is already known we first compare with the GARCH($1,1$) process. In that case
\cite{mi+st:00} show that $\gamma=E[\ln(\alpha_1Z_t^2+\beta_1)]$, but the only, and hence largest in magnitude, eigenvalue of $\mathbf{A}_t$ is $\alpha_1Z_t^2+\beta_1$,
thus Theorem~\ref{th.gamma} is identical to their result when $\eta=0$. To show that $\eta=0$ for all GARCH($1,1$) processes note that
\begin{eqnarray*}
	\prod_{i=1}^{t}\mathbf{A}_{t+1-i}  & = & \mathbf{A}_t \prod_{i=1}^{t-1} (\alpha_1Z^2_i+\beta_1)\\
	& = & \mathbf{A}_t \prod_{i=1}^{t-1} \lambda_i
\end{eqnarray*}
\noindent so
\begin{eqnarray*}
	\frac{1}{t}\ln \norm{\prod_{i=1}^{t}\mathbf{A}_{t+1-i}}  & = & \frac{1}{t} \ln \norm{\mathbf{A}_t} +\frac{1}{t}\sum_{i=1}^{t-1} \ln \lambda_i\\
	& \rightarrow & 0+ E(\ln \lambda).
\end{eqnarray*}
\noindent Hence $\eta=0$ from Theorem~\ref{th.gamma}. 

The practical evaluation of the tail index $\kappa$ is not discussed by \cite{ba+se:09} or subsequent authors.
Using representation~\eqref{eq:root.equation} as the basis for numerical evaluation of $\kappa$ for a GARCH($p,q$) process turns out to be trivial only for very ARCH(1) and GARCH($1,1$) processes. Monte Carlo is essential, but solving the limiting equation~\eqref{eq:root.equation} is non-trivial when $\max(p,q)\ge 2$ due to major numerical instabilities, In Theorem~\ref{th.kappa} we present a new representation which provides both insight into which factors determine $\kappa$ as well as a basis for a method of evaluation with greater numerical stability.

\begin{te} \label{th.kappa}
	Under the same notation and conditions of Theorem~\ref{th.gamma} and that $\gamma<0$, the unique solution 	
	$\kappa>0$, of the limiting equation~\eqref{eq:root.equation}, satisfies 
	\[
	E \left([\lambda\exp(\eta)]^{\kappa}\right) = 1.
	\]
	Furthermore, it follows that 
	\begin{equation}
	\label{eqn:niceeta}
	\eta=-\frac{1}{\kappa}\ln (E \left(\lambda^{\kappa}\right)).
	\end{equation}

\end{te}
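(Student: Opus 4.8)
The plan is to reduce the theorem to a single growth-rate identity and then read off both displayed formulas from it, mirroring the almost-sure decomposition behind Theorem~\ref{th.gamma}. First I would record the exact multiplicative decomposition already implicit in the statement: writing $\Pi_t=\mathbf{A}_t\mathbf{A}_{t-1}\cdots\mathbf{A}_1$ and $L_t=\prod_{i=1}^t\lambda_i$, the scaling by the dominant eigenvalues gives $\Pi_t=L_t\,\Delta_t=L_t\exp(t\eta)\,\mathbf{C}_t$, and since $\norm{\cdot}$ is the entrywise $L_1$ norm it is absolutely homogeneous, so
\[
\norm{\Pi_t}^{\kappa}=L_t^{\kappa}\exp(t\eta\kappa)\,\norm{\mathbf{C}_t}^{\kappa}.
\]
Because $\exp(t\eta\kappa)$ is deterministic, taking expectations and dividing by $t$ yields the clean identity
\[
\frac{1}{t}\ln E\!\left(\norm{\Pi_t}^{\kappa}\right)=\eta\kappa+\frac{1}{t}\ln E\!\left(L_t^{\kappa}\norm{\mathbf{C}_t}^{\kappa}\right),
\]
so the whole theorem rests on evaluating the limit of the last term.

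The key claim I would isolate and prove as a lemma is that
\[
\lim_{t\to\infty}\frac{1}{t}\ln E\!\left(L_t^{\kappa}\norm{\mathbf{C}_t}^{\kappa}\right)=\ln E\!\left(\lambda^{\kappa}\right).
\]
The two ingredients are that the eigenvalues $\lambda_1,\ldots,\lambda_t$ are independent and identically distributed (being functions of the independent matrices $\mathbf{A}_i$), so that $E(L_t^{\kappa})=\{E(\lambda^{\kappa})\}^{t}$ exactly, and that $\norm{\mathbf{C}_t}$ grows subexponentially by the standing assumption~\eqref{eqn:Ccondition}, i.e.\ $t^{-1}\ln\norm{\mathbf{C}_t}\to0$ almost surely. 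Morally the claim says that the subexponential factor $\norm{\mathbf{C}_t}^{\kappa}$ cannot alter the exponential growth rate contributed by $L_t^{\kappa}$, so the limit is simply $\ln E(\lambda^{\kappa})$.

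Granting the lemma, the remainder is immediate. Substituting it into the identity above gives
\[
\lim_{t\to\infty}\frac{1}{t}\ln E\!\left(\norm{\Pi_t}^{\kappa}\right)=\eta\kappa+\ln E\!\left(\lambda^{\kappa}\right),
\]
and since $\kappa$ is by definition the root of the limiting equation~\eqref{eq:root.equation}, the left-hand side vanishes, so $\eta\kappa+\ln E(\lambda^{\kappa})=0$. Exponentiating and again using that $\exp(\eta\kappa)$ is deterministic gives $E([\lambda\exp(\eta)]^{\kappa})=\exp(\eta\kappa)E(\lambda^{\kappa})=1$, the first assertion; solving the same relation for $\eta$ gives $\eta=-\kappa^{-1}\ln E(\lambda^{\kappa})$, which is~\eqref{eqn:niceeta}.

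The hard part will be the lemma, precisely because $L_t$ and $\mathbf{C}_t$ are \emph{not} independent: the dominant eigenvalue $\lambda_i$ and the normalised matrix $\mathbf{A}_i/\lambda_i$ are both functions of the same $\mathbf{A}_i$. Hence the expectation does not factor, and the almost-sure statement~\eqref{eqn:Ccondition} is not by itself enough, since $E(L_t^{\kappa}\norm{\mathbf{C}_t}^{\kappa})$ could in principle be inflated by rare realisations on which $\norm{\mathbf{C}_t}$ is large. For the upper bound I would use a Hölder split with exponents $p>1$ close to $1$ to peel off $\{E(\lambda^{\kappa p})\}^{t/p}$, which reduces the task to moment control $t^{-1}\ln E(\norm{\mathbf{C}_t}^{s})\to0$; for the lower bound I would change measure by the eigenvalue tilt $d\tilde{\mathbb{P}}_t\propto L_t^{\kappa}\,d\mathbb{P}$, under which the pairs $(\lambda_i,\mathbf{A}_i/\lambda_i)$ remain independent and identically distributed, reducing the task to showing that $\norm{\mathbf{C}_t}$ is not exponentially small in $\tilde{\mathbb{P}}_t$-mean. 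Both halves amount to the same substantive point, namely that the exponential inflation of the moment is carried \emph{entirely} by the eigenvalue product $L_t^{\kappa}$ while the normalised product $\mathbf{C}_t$ stays subexponential even after tilting towards large eigenvalues; establishing this upgrade of~\eqref{eqn:Ccondition} from the almost-sure to the mean level, using the non-negativity of the entries and submultiplicativity of the $L_1$ norm, is where the real work lies.
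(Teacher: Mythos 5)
Your reduction is exactly the one the paper uses: the same decomposition $\norm{\mathbf{A}_t\cdots\mathbf{A}_1}^{\kappa}=\norm{\mathbf{C}_t}^{\kappa}\left(\prod_{i=1}^{t}\lambda_i^{\kappa}\right)\exp(\eta\kappa t)$, the same use of the i.i.d.\ property of the $\lambda_i$ to get $E(L_t^{\kappa})=\{E(\lambda^{\kappa})\}^{t}$, and the same final step of setting the limit in~\eqref{eq:root.equation} to zero to read off $E([\lambda\exp(\eta)]^{\kappa})=1$ and hence~\eqref{eqn:niceeta}. That part of your argument is correct and complete.

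The gap is that your key lemma, $\lim_{t\to\infty}t^{-1}\ln E(L_t^{\kappa}\norm{\mathbf{C}_t}^{\kappa})=\ln E(\lambda^{\kappa})$, is stated but not proved: you explicitly defer the upgrade of the almost-sure bound~\eqref{eqn:Ccondition} to a bound in mean, and neither the H\"older split (which still requires $t^{-1}\ln E(\norm{\mathbf{C}_t}^{s})\to 0$, an unestablished moment condition) nor the eigenvalue tilt is carried out. Since you yourself identify this as ``where the real work lies,'' the proposal is a plan rather than a proof. It is worth noting, however, that you have put your finger on precisely the step the paper's own proof treats lightly: the paper asserts a deterministic envelope $\exp(-s_t t)<\norm{\mathbf{C}_t}<\exp(s_t t)$ with $s_t\to 0$ and then pulls $\exp(\pm s_t t)$ outside the expectation, which is exactly the interchange you decline to make without justification (almost-sure convergence of $t^{-1}\ln\norm{\mathbf{C}_t}$ yields a realisation-dependent rate, not a uniform deterministic one). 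So your diagnosis of the crux is sharper than the paper's treatment of it, but to count as a proof you would need to actually establish the uniform moment control, e.g.\ via the H\"older route together with an explicit bound on $E(\norm{\mathbf{C}_t}^{s})$ exploiting the non-negativity of the entries and submultiplicativity of the $L_1$ norm.
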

For all strictly stationary GARCH(1,1) processes, \cite{mi+st:00} show that $\kappa$ must satisfy $E[(\alpha_1Z_t^2+\beta_1)^\kappa]=1$, but this is simply $E[\lambda^\kappa]=1$, so Theorem~\ref{th.kappa} gives that $\eta=0$ for all GARCH($1,1$) processes, as shown directly above. 

The results of Theorem~\ref{th.kappa} are particularly powerful as they allow the simple evaluation of $\kappa$, if $\eta$ is known, or vice-versa. Theorem~\ref{th.gamma} gives a limiting expression from which to approximate $\eta$ and hence $\kappa$ can be found approximately, but better still, Section~\ref{sec:initial} gives a reliable numerical method to calculate $\kappa$ and then Theorem~\ref{th.kappa} provides the ideal way to find $\eta$. These approaches are illustrated in Section~\ref{sec:numerical}. Finally as an immediate consequence of Theorems~\ref{th.gamma} and \ref{th.kappa} if we know the process is stationary, and we know $\kappa$, we can directly calculate $\gamma$ using the following result.
\begin{te} \label{th.combined}
	Under the same notation and conditions of Theorem~\ref{th.gamma},  if $\gamma<0$ and limiting equation~\eqref{eqn:moment} gives 	$\kappa>0$, then	
	\[
	\gamma=E(\ln\lambda)-\frac{1}{\kappa}\ln (E \left(\lambda^{\kappa}\right)).	
	\]
\end{te}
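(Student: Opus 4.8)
The plan is simply to combine the two preceding theorems, both of which are available to me by hypothesis. Theorem~\ref{th.combined} assumes the same notation and conditions as Theorem~\ref{th.gamma}, plus $\gamma<0$ and the existence of a positive solution $\kappa$ to equation~\eqref{eqn:moment}. Under exactly these hypotheses, Theorem~\ref{th.kappa} applies and yields the identity
\[
\eta=-\frac{1}{\kappa}\ln\!\left(E\left(\lambda^{\kappa}\right)\right).
\]
So the value of $\eta$, which a priori is defined only as the almost-sure limit $\lim_{t\to\infty}\frac{1}{t}\ln\norm{\Delta_t}$ in~\eqref{eq:defin.eta}, is pinned down explicitly in terms of $\kappa$ and the single-matrix quantity $E(\lambda^{\kappa})$.

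First I would invoke Theorem~\ref{th.gamma}, whose conclusion is the characterisation
\[
\gamma=E(\ln\lambda)+\eta.
\]
Here I must check that Theorem~\ref{th.gamma}'s ``if and only if'' is being used in the correct direction: that theorem asserts that $\gamma=\lim_{t\to\infty}\gamma_t$ holds precisely when $\gamma=E(\ln\lambda)+\eta$. Since $\gamma$ is \emph{defined} as this limit in~\eqref{eq:lyapunov.monte.carlo}, the identity $\gamma=E(\ln\lambda)+\eta$ holds automatically once the condition~\eqref{eqn:Ccondition} on $\mathbf{C}_t$ is in force, which it is under the shared hypotheses. Then I would substitute the expression for $\eta$ supplied by Theorem~\ref{th.kappa} into this identity, giving
\[
\gamma=E(\ln\lambda)-\frac{1}{\kappa}\ln\!\left(E\left(\lambda^{\kappa}\right)\right),
\]
which is exactly the claimed formula. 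No further computation is needed.

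Since this is a direct corollary, there is essentially no analytic obstacle internal to the proof itself; the entire substance lives in Theorems~\ref{th.gamma} and~\ref{th.kappa}. The only point demanding a little care is consistency of hypotheses across the three statements. I would verify that the two ingredient results are simultaneously applicable: both are stated ``under the same notation and conditions of Theorem~\ref{th.gamma},'' and Theorem~\ref{th.combined} additionally imposes $\gamma<0$ and a positive root $\kappa$ of~\eqref{eqn:moment}, matching precisely the extra assumptions of Theorem~\ref{th.kappa}. The mild subtlety is that~\eqref{eqn:moment} is phrased as $E(\norm{\mathbf{A}\hat{\mathbf{\Theta}}_t}^\kappa)=1$, whereas Theorem~\ref{th.kappa} is phrased via $E([\lambda\exp(\eta)]^\kappa)=1$; these are the same $\kappa$ by the uniqueness results of \cite{ke:73} and \cite{ja:2010} cited after~\eqref{eq:sampling.togetkappa=1}, so I would note that the $\kappa$ appearing in the two theorems genuinely coincides. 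With that identification made, the result follows immediately by chaining the two identities, and the proof is a single line of substitution.
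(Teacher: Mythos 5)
Your proposal is correct and follows exactly the paper's route: the paper presents Theorem~\ref{th.combined} as ``an immediate consequence of Theorems~\ref{th.gamma} and \ref{th.kappa}'', i.e., substituting the expression $\eta=-\frac{1}{\kappa}\ln(E(\lambda^{\kappa}))$ from Theorem~\ref{th.kappa} into $\gamma=E(\ln\lambda)+\eta$ from Theorem~\ref{th.gamma}, and gives no separate proof in the appendix. Your additional remarks on the direction of the ``if and only if'' and on the identification of the two characterisations of $\kappa$ are sensible hygiene but do not change the argument.
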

Unlike Theorem~\ref{th.gamma}, we cannot use Theorem~\ref{th.combined} to test for stationarity of the process as this result only provides an expression for $\gamma$ given that $\gamma<0$, i.e., stationarity needs to be confirmed prior to its use. The only comparable existing result to Theorem~\ref{th.combined} is by \cite{ke+sp:84} where it is shown that 
$\gamma\leq \ln(E(\lambda))$ for general random matrix Markov processes. Finally, Theorem~\ref{th.IGARCHkappa} shows  that for all IGARCH($p,q$) processes $\kappa=1$, in which case  expression~\eqref{eqn:niceeta} gives that $\eta=-\ln (E (\lambda))$ and  Theorem~\ref{th.combined} gives that
$\gamma=E(\ln\lambda)-\ln (E \left(\lambda\right))$.
\begin{te} \label{th.IGARCHkappa}
	For all  IGARCH($p,q$) with $\gamma<0$, we have that $\kappa=1$. Furthermore, if we have a stationary GARCH($p,q$) process with $\kappa=1$ then it must be an IGARCH($p,q$) process.
	\end{te}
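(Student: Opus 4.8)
The plan is to evaluate the function $\Lambda(k):=\lim_{t\to\infty}t^{-1}\ln E\bigl(\norm{\mathbf{A}_t\cdots\mathbf{A}_1}^{k}\bigr)$ from \eqref{eq:root.equation} at the single value $k=1$, and to identify it with $\ln\rho(\bar{\mathbf{A}})$, where $\bar{\mathbf{A}}:=E(\mathbf{A}_t)$ is the mean matrix (constant in $t$ by the i.i.d.\ assumption). The key observation is that every $\mathbf{A}_i$ has non-negative entries, so $\mathbf{A}_t\cdots\mathbf{A}_1$ is entrywise non-negative and, for the norm $\norm{\mathbf{A}}=\sum_{i,j}\abs{a_{ij}}$, one has $\norm{\mathbf{A}_t\cdots\mathbf{A}_1}=\mathbf{1}^{\top}\mathbf{A}_t\cdots\mathbf{A}_1\mathbf{1}$ with $\mathbf{1}$ the all-ones vector. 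Taking $k=1$ and using independence, $E(\norm{\mathbf{A}_t\cdots\mathbf{A}_1})=\mathbf{1}^{\top}E(\mathbf{A}_t)\cdots E(\mathbf{A}_1)\mathbf{1}=\mathbf{1}^{\top}\bar{\mathbf{A}}^{t}\mathbf{1}=\norm{\bar{\mathbf{A}}^{t}}$, and Gelfand's formula then gives $\Lambda(1)=\lim_{t}t^{-1}\ln\norm{\bar{\mathbf{A}}^{t}}=\ln\rho(\bar{\mathbf{A}})$. Here $E(Z_t^2)=1$ (from $E(Z_t)=0$, $\mathrm{Var}(Z_t)=1$) is used to identify $\bar{\mathbf{A}}$ as the matrix in \eqref{transition.matrix} with every $Z_t^2$ replaced by $1$.

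Second, I would reduce $\rho(\bar{\mathbf{A}})$ to the IGARCH sum. Writing an eigenvector of $\bar{\mathbf{A}}$ as $(v_1,\dots,v_q,u_1,\dots,u_p)^{\top}$ with eigenvalue $\mu\ne 0$, the two shift blocks $\boldsymbol{I}_{q-1}$ and $\boldsymbol{I}_{p-1}$ force $v_k=v_1\mu^{-(k-1)}$ and $u_k=u_1\mu^{-(k-1)}$, while the first and $(q+1)$-th rows of $\bar{\mathbf{A}}$ are identical (both equal to $(\alpha^{(q)},\beta^{(p)})$), giving $\mu v_1=\mu u_1$ and hence $v_1=u_1$. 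Substituting shows every positive eigenvalue solves $f(\mu):=\sum_{i=1}^{q}\alpha_i\mu^{1-i}+\sum_{j=1}^{p}\beta_j\mu^{1-j}-\mu=0$. Since $f'(\mu)<0$ on $(0,\infty)$, with $f(0^{+})>0$ (the case $p=q=1$ reduces to $f(\mu)=\alpha_1+\beta_1-\mu$) and $f(\mu)\to-\infty$, the equation $f(\mu)=0$ has a unique positive root; by Perron--Frobenius, as $\bar{\mathbf{A}}\ge 0$ is not nilpotent ($\alpha_q,\beta_p>0$), this root is $\rho(\bar{\mathbf{A}})$. Because $f(1)=\sum_{i=1}^{q}\alpha_i+\sum_{j=1}^{p}\beta_j-1$, we conclude $\rho(\bar{\mathbf{A}})=1$ precisely when the IGARCH identity \eqref{garchpq.second.order.stationary} holds.

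The two directions then assemble using that $\kappa$ is the \emph{unique} positive root of $\Lambda$ (which exists since $\gamma<0$). For an IGARCH process, $\rho(\bar{\mathbf{A}})=1$, so $\Lambda(1)=0$, and uniqueness forces $\kappa=1$. Conversely, if a stationary GARCH($p,q$) process has $\kappa=1$, then $\Lambda(1)=\Lambda(\kappa)=0$, whence $\rho(\bar{\mathbf{A}})=1$ and the IGARCH identity must hold. The concluding formulae $\eta=-\ln\!\left(E(\lambda)\right)$ and $\gamma=E(\ln\lambda)-\ln\!\left(E(\lambda)\right)$ follow immediately by putting $\kappa=1$ into \eqref{eqn:niceeta} and into Theorem~\ref{th.combined}.

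The main obstacle is not the logical skeleton above but the verification in the second step that the unique positive root of $f$ is genuinely the Perron root, rather than a subdominant eigenvalue: when $p\ne q$ the matrix $\bar{\mathbf{A}}$ carries $\min(p,q)$ spurious zero eigenvalues, and one must confirm these (and any complex eigenvalues) do not exceed the positive root in modulus, which is exactly what Perron--Frobenius supplies. A secondary point requiring care is the passage $E(\norm{\mathbf{A}_t\cdots\mathbf{A}_1})=\mathbf{1}^{\top}\bar{\mathbf{A}}^{t}\mathbf{1}$, which relies on the non-negativity of the entries so that the entrywise $L_1$ norm coincides with the bilinear form $\mathbf{1}^{\top}(\cdot)\mathbf{1}$; for a general norm no such clean factorisation of the first moment would be available, and it is this special structure of $\norm{\cdot}=\sum_{i,j}\abs{a_{ij}}$ together with the GARCH moment $E(Z_t^2)=1$ that makes $\kappa=1$ computable at the single point $k=1$.
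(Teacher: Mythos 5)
Your argument is correct, but it is genuinely different from the paper's. The paper works with the fixed-point identity \eqref{eqn:moment}: setting $\kappa=1$ makes $\norm{\mathbf{A}\hat{\mathbf{\Theta}}_t}$ linear in the entries, and after arguing that every component of the spectral measure satisfies $E(\hat{\vartheta}_t^{(i)})=1/(p+q)$ (using stationarity within the $X^2$- and $\sigma^2$-blocks together with $E(Z_t^2)=1$ to equate the two blocks), it computes $E(\norm{\mathbf{A}\hat{\mathbf{\Theta}}_t})=1+\tfrac{2}{p+q}\bigl(\sum_i\alpha_i+\sum_j\beta_j-1\bigr)$, which equals $1$ exactly under \eqref{garchpq.second.order.stationary}. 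You instead stay entirely with the Kesten limit equation \eqref{eq:root.equation}, observing that at $k=1$ the $L_1$ norm of a non-negative product is the bilinear form $\mathbf{1}^{\top}(\cdot)\mathbf{1}$, so independence gives the exact identity $E(\norm{\mathbf{A}_t\cdots\mathbf{A}_1})=\norm{\bar{\mathbf{A}}^{t}}$ and Gelfand's formula reduces everything to $\rho(\bar{\mathbf{A}})$, whose characteristic equation you solve explicitly. Both proofs hinge on the same two facts --- linearity at $\kappa=1$ and $E(Z_t^2)=1$ --- but your route buys something concrete: it bypasses the spectral measure $H_{\hat{\mathbf{\Theta}}_t}$ altogether, and in particular does not need the paper's (rather terse) claim that the means of the angular components are equal across the two blocks, which really requires a Breiman-type tail-equivalence step. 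The price is the extra Perron--Frobenius bookkeeping to confirm that the unique positive root of $f$ is the spectral radius, which you handle correctly: the positive root is an eigenvalue by explicit construction of its eigenvector, so $\rho(\bar{\mathbf{A}})>0$, and since $\rho(\bar{\mathbf{A}})$ is itself a non-negative eigenvalue it must coincide with that root. Your concluding remark that only the entrywise $L_1$ norm makes the first moment factorise cleanly is also well taken; the paper's choice of norm plays the same silent role in its own computation.
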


This seems to be the first time that it has been claimed that any IGARCH($p,q$) process with $\max(p,q)\ge 2$ has $\kappa=1$, although it was proved for $p=q=1$  by \cite{mi+st:00}. The finite mean and infinite variance of the IGARCH($p,q$) process implies that $0.5< \kappa \le 1$. So our result gives much more, e.g., 
all IGARCH($p,q$) processes  have $E(|X_t| ^{2-\epsilon}) < \infty$ for any $\epsilon \in (0, 2]$.  This finding for $\kappa$ is not too surprising though as the 
variance of $X_t$ is infinite when 
$\sum_{i = 1}^q \alpha_i + \sum_{j = 1}^p \beta_j=1$ but was finite when this sum is less than $1$, suggesting $\kappa$ for the IGARCH($p,q$) was a critical boundary point for having a finite variance.

\subsection{Evaluating the Spectral Measure and the Tail Index}
\label{sec:initial}

This section gives the details of our algorithm for sampling from the limit distribution $H_{\hat{\mathbf{\Theta}}_t}$ and then uses this algorithm repeatedly to find $\kappa$. The algorithm requires no assumptions on the support for $Z_t$. 
Throughout we take $t=0$, both to help simplify notation here and as it will be from time $t=0$ that we start the tail chains in Section~\ref{sec:tail_chain}. We will first assume that $\kappa$ is known and present Algorithm~\ref{alg:Paul} for generating from $H_{\hat{\mathbf{\Theta}}_0}$ and then discuss the case when $\kappa$ is unknown.

To simulate from the spectral measure $H_{\hat{\mathbf{\Theta}}_0}$, defined via (\ref{eqn:Hform}), our approach is to introduce a stochastic process whose invariant distribution is $H_{\hat{\mathbf{\Theta}}_0}$. We will then use sequential importance sampling \cite[]{doucet:2000} to generate approximate samples from the state of this stochastic process at consecutive time-steps. We denote the stochastic process by $\widetilde{\mathbf{\Theta}}_s$, for $s=0,1,\ldots,$ and denote its joint distribution function at iteration $s$ by 
\begin{equation}
\widetilde{\mathbf{\Theta}}_s\sim H^{(s)}_{\widetilde{\mathbf{\Theta}}}. 
\label{eqn:PaulchainDist}
\end{equation}
This stochastic process is constructed such that as $s\to \infty$, $\widetilde{\mathbf{\Theta}}_s\stackrel{d}{\to} \hat{\mathbf{\Theta}}_0\sim H_{\hat{\mathbf{\Theta}}_0}$. We perform sequential updates until it appears that the distribution of the state of the stochastic process has converged to the invariant distribution of the process, $H_{\hat{\mathbf{\Theta}}_0}(\mathbf{w})$. The samples at this final iteration are then taken as samples from $H_{\hat{\mathbf{\Theta}}_0}$. This algorithm is similar to the use of sequential Monte Carlo for sampling from Feynman-Kac distribution \cite[e.g.][]{del2000branching} and fixed point distributions \cite[e.g.][]{delMoral:2003}.

Let $\widetilde{\mathbf{\Theta}}_s$, for $s=0,1,\ldots,$ be a Markov process, with initial state an arbitrary distribution on $\mathbb{S}^{p+q}$ and whose transitions for $s\geq 1$ are given by
\begin{equation} \label{eqn:MP}
 \Pr(\widetilde{\mathbf{\Theta}}_s \in \cdot)= \frac{ E(\norm{\mathbf{A}\widetilde{\mathbf{\Theta}}_{s-1}}^{\kappa}; \mathbf{A}\widetilde{\mathbf{\Theta}}_{s-1}/\norm{\mathbf{A}\widetilde{\mathbf{\Theta}}_{s-1}} \in \cdot)}{ E(\norm{\mathbf{A}\widetilde{\mathbf{\Theta}}_{s-1}}^{\kappa})},
\end{equation}
where, as above, $E(X;Y):=E(X \mathbf{1}_Y)$. By construction, the  invariant distribution of this process is $H_{\hat{\mathbf{\Theta}}_0}$. To see this notice that if $\widetilde{\mathbf{\Theta}}_{s-1}$
is drawn from $H_{\hat{\mathbf{\Theta}}_0}$, then the right-hand side of (\ref{eqn:MP}) is equal to
\[
 \frac{ E(\norm{\mathbf{A}\hat{\mathbf{\Theta}}_0}^{\kappa}; \mathbf{A}\hat{\mathbf{\Theta}}_0/ \norm{\mathbf{A}\hat{\mathbf{\Theta}}_0} \in \cdot)}{ E(\norm{\mathbf{A}\hat{\mathbf{\Theta}}_0}^{\kappa})}.
\]
As $E(\norm{\mathbf{A}\hat{\mathbf{\Theta}}_0}^{\kappa}) = 1$, this is equal to the definition of $H_{\hat{\mathbf{\Theta}}_0}$ given by expression~\eqref{eqn:Hform}.

Furthermore, if we have a sample from $\Pr(\widetilde{\mathbf{\Theta}}_{s-1} \in \cdot)$, we can use importance sampling to generate a sample from $\Pr(\widetilde{\mathbf{\Theta}}_{s} \in \cdot)$. This would involve first simulating a value for $\widetilde{\mathbf{\Theta}}_{s}$ via
\[
 \widetilde{\mathbf{\Theta}}_{s}= \mathbf{A}\widetilde{\mathbf{\Theta}}_{s-1}/\norm{\mathbf{A}\widetilde{\mathbf{\Theta}}_{s-1}},
\]
and assigning this value an importance sampling weight proportional to $\norm{ \mathbf{A} \widetilde{ \mathbf{\Theta}}_{s-1}}^\kappa$. Thus we can use sequential importance sampling to generate samples of $\widetilde{\mathbf{\Theta}}_{s}$ values for $s\ge 1$. Specifically, we implement Algorithm~\ref{alg:Paul}, with our choice of initial distribution, in step 1, being chosen to be close to $H_{\hat{\mathbf{\Theta}}_0}(\mathbf{w})$, so as to speed up convergence, see Section~\ref{sec:initial_guess} for details. For details of how we determine convergence in step 6 of Algorithm~\ref{alg:Paul} see Section \ref{S:convergence}.

Our approach is closely related to approaches for sampling from quasi-stationary distributions \cite[see][and references therein]{griffin:2017}. This can be most clearly seen in situations where $\norm{\mathbf{A}}$ is bounded. In this case we can define a stochastic process with transitions given by (\ref{eqn:MP}) and with killing at each iteration, with the probability of survival being proportional to $\norm{\mathbf{A}\widetilde{\mathbf{\Theta}}_s}^\kappa$. In this case the spectral measure, $H_{\hat{\mathbf{\Theta}}_0}$, is the quasi-stationary distribution of the process, i.e., the stationary distribution of the process conditional on survival.

Now consider the situation when $\kappa$ is not known. 
For a trial value of $k$ (for $\kappa$), apply Algorithm~\ref{alg:Paul} until convergence, giving a sample of weighted particles  $\{ \widetilde{\mathbf{\Theta}}^{(j)}(k) , m^{(j)}(k) \}_{j=1}^J$ after the chain is deemed to have converged. Using these particles approximate the expectation 
$E(\norm{\mathbf{A}\widetilde{\mathbf{\Theta}}_0}^{k})$
using the Monte Carlo sample by
\begin{equation}
\tilde{\rho}_k=\int_{\mathbb{R}}\sum_{j=1}^J \norm{\mathbf{A}\widetilde{\mathbf{\Theta}}^{(j)}(k)}^k\frac{m^{(j)}(k)}{\sum_{n=1}^J m^{(n)}(k)} F_Z(dz),
\label{eqn:rho_k}
\end{equation}
where $\tilde{\rho}_k$ is the Monte Carlo approximation of $\rho_k$ and $F_Z$ is the distribution function of the GARCH($p,q$) process. We repeat this evaluation over $k>0$ until we find the unique value of $k$ which gives this weighted mean to be equal to 1. This value is $k=\kappa$.

\begin{minipage}{0.875\textwidth}
\begin{algorithm}[H] \caption{Sampling from $H_{\hat{\mathbf{\Theta}}_0}(\mathbf{w})$} \label{alg:Paul} 
\DontPrintSemicolon
Generate a sample of $\widetilde{\mathbf{\Theta}}_0$ from any distribution  of $\mathbb{S}^{p+q}$. See Section~\ref{sec:initial_guess} for discussion on optimisation this choice. Set $s=1$.\;

Generate $J$ independent copies of $\mathbf{A}$, denote these as $\mathbf{A}_s^{(j)}$ for $j=1,\ldots,J$.\;

Generate $J$ equally weighted particles at time $s-1$ by sampling independently from our approximation to the distribution of $\widetilde{\mathbf{\Theta}}_{s-1}$. Denote these particles as $\mathbf{\Theta}_{s-1}^{\star (j)}$ for $j=1,\ldots,J$.\;
		
Generate $J$ particles at time $s$,
		\begin{equation}\label{eq:update.spec.dist}
		\widetilde{\mathbf{\Theta}}_s^{(j)} = \frac{\mathbf{A}_s^{(j)}\mathbf{\Theta}_{s-1}^{\star (j)}} 
		{ \norm{\mathbf{A}_s^{(j)}\mathbf{\Theta}_{s-1}^{\star (j)}}}, ~~ j=1,\ldots,J.
		\end{equation}\;
Assign each particle a weight,
		\[
		m_s^{\star (j)} = \norm{\mathbf{A}_s^{(j)}\mathbf{\Theta}_{s-1}^{\star (j)}} ^\kappa,\;
		\]
		for $j=1,\ldots,J$, and then normalise these via
	\begin{equation}\label{eq:update.part.weights}
	m^{(j)}_s = \frac{m_s^{\star (j)} }{\sum_{j=1}^{J}m_s^{\star (j)}}.\;
	\end{equation}
	The resulting set of weighted particles, $\{ \widetilde{\mathbf{\Theta}}_s^{(j)} , m^{(j)}_s \}_{j=1}^J$ is our empirical approximation to the distribution of  $\tilde{\mathbf{\Theta}}_{s}$.\;
		
		If we have converged to stationarity, output the set of weighted particles. Otherwise set $s=s+1$ and go to step 2.\;
\end{algorithm}
\end{minipage}

\subsection{Generation of the tail chain of the squared process}
\label{sec:tail_chain}
The tail chain $\{\hat{X}^2_t\}_{t\ge 0}$ can be evaluated using Algorithm~\ref{alg:tail_chain}. 
There are two stages to the algorithm, initialisation and propagation of the chain.  Key to getting the tail chain is finding the joint behaviour of $\hat{\mathbf{\Theta}}_t=(\hat{\vartheta}_t^{(1)}, \ldots ,\hat{\vartheta}_t^{(p+q)})$ over time. 

For initialisation we first need to consider the behaviour of the process conditional on it being in an extreme state, and we take the time of this, for convenience, to be $t=0$. Focusing on limit~\eqref{eqn:spectral_limit} when $t=0$, we have that the limit variables 
$(\hat{R}_0, \hat{\mathbf{\Theta}}_0)$, are independent with distributions~\eqref{eqn:Rmarginal}  and  
$H_{\hat{\mathbf{\Theta}}_0}$, given by expression~\eqref{eqn:Hform}, respectively. We initialise the chain when 
 $\hat{X}^2_0>1$, which is equivalent to $\hat{R}_0\hat{\vartheta}_0^{(1)}>1$.
For propagation of the chain we can use the established results of \cite{ba+da+mi:02b}, which give that for $t\ge 1$ that 
$\hat{\mathbf{\Theta}}_t = \mathbf{A}_t \hat{\mathbf{\Theta}}_{t-1}$ and hence 
\begin{equation}
\hat{\mathbf{\Theta}}_t = \mathbf{A}_t \mathbf{A}_{t-1} \cdots \mathbf{A}_1 \hat{\mathbf{\Theta}}_0.
\label{eqn:TailChain}
\end{equation}
We extract the tail chain $\{\hat{X}^2_t\}_{t\ge 0}$ from the product of $\hat{R}_0$ and $\hat{\vartheta}_t^{(1)}$ for all $t$. 

If interest is in the tail chain of the $\hat{\sigma}_t^2$ process instead, i.e., $\{\hat{\sigma}^2_t\}_{t\ge 0}$ given that 
$\hat{\sigma}^2_0>1$, then exactly the same approach can be taken as in Algorithm~\ref{alg:tail_chain} but with the condition in step 2 changed to $\hat{R}_0\hat{\vartheta}_0^{(q+1)}>1$. 

\noindent \begin{minipage}{0.85\textwidth}
\begin{algorithm}[H] \caption{Obtaining the tail chain of \garchpq} \label{alg:tail_chain}
	\DontPrintSemicolon
Simulate $\hat{\mathbf{\Theta}}_0$ using Algorithm~\ref{alg:Paul} and independently set $\hat{R}_0=U^{-1/\kappa}$ where $U$ is uniform(0,1).\;
Repeat step 1 until $\hat{R}_0\hat{\vartheta}_0^{(1)}>1$.\;
Evaluate $\hat{\mathbf{\Theta}}_t=\mathbf{A}_t\hat{\mathbf{\Theta}}_{t-1}$ for $t=1, \ldots ,T$, for large $T$.\;
The tail chain $\hat{X}_t^2 = \hat{R}_0\hat{\vartheta}_t^{(1)}$ and associated volatilities are $\hat{\sigma}^2_t = 
\hat{R}_0\hat{\vartheta}_t^{(q+1)}$, for $t=0, \ldots ,T$.\;
Repeat steps 1-4 to evaluate properties of the tail chain of $\hat{X}_t^2$ given $\hat{X}^2_0>1$.\;
\end{algorithm}
\end{minipage}

Here $T$ is selected so that $\hat{X}^2_t=\hat{R}_0  \hat{\vartheta}_t^{(1)}<1$ for all $t>T$ with probability as close to $1$ as possible. This is achievable as all components of $\hat{\mathbf{\Theta}}_t$ have negative drift and converge to 0 almost surely. In practice, $T$ is taken as large as possible subject to limits of storage and computational time, we took $T=1000$ to save running chains unnecessarily long, but for processes with weak extremal dependence $T=50$ is more than sufficient.

\subsection{The evaluation of cluster functionals}
\label{sec:CF}

From repeated realisations of the tail chain $\{\hat{X}_t^2\}_{t\ge 0}$ for the squared GARCH process we can derive the properties of key cluster functionals for the $X_t^2$ process, e.g., the extremogram, the extremal index and
the cluster size distribution.

First note that the extremogram for the squared GARCH$(p,q)$ process
is
\[
\chi_{X^2}(\tau)=\Pr(\hat{X}^2_{\tau}>1 | \hat{X}^2_{0}>1) 
\]
Thus we can numerically determine $\chi_{X^2}(\tau)$ as the proportion of tail chains starting above 1 at $t=0$ with an exceedance at $t=\tau$ over different replicate tail chains. Any required precision of this value can be achieved by a suitable selection of the number of Monte Carlo replicate tail chains. To derive both the extremal index and
the cluster size distribution we first define the measure introduced by \cite{roo:88}, namely
\[
\theta_{X^2}^{(i)} = \Pr \bigl(\#\{t=0,1,\ldots :\hat{X}^2_t>1\} = i
\mid \hat{X}_0^2 > 1 \bigr),
\]
i.e., the probability that there are at least $i$ values in a cluster given that we look at a cluster only forwards in time from an arbitrary exceedance. \cite{obr:87} showed that the extremal index $0<\theta_{X^2}\le 1$, is given by $\theta_{X^2}=\theta_{X^2}^{(1)}$ and
\cite{roo:88} showed that the cluster size distribution is given by
\begin{equation} \label{eqn:ratio.theta}
\pi_{X^2}(i) = \frac{\theta_{X^2}^{(i)} -
\theta_{X^2}^{(i+1)}}{\theta_{X_2}^{(1)}} \mbox{ for } i=1,2, \ldots ,
\end{equation}
with the reciprocal of the mean of this distribution being $\theta_{X^2}$.

\section{Tail chain properties for GARCH processes}
\label{sec:tailchain}

First note that if  $X_t^2$ is regularly varying with index $\kappa>0$ and if
\begin{equation} \label{eq.tailbalance}
\Pr(X_t> x \mid \abs{X_t}>x) \to \delta  \hspace{5mm} \mbox{ as } x \to \infty,
\end{equation}
where $0<\delta<1$ then it follows that $X_t$ is a regularly varying random variable, with index $2\kappa$, in both its upper and lower tails. Details of the evaluation of $\delta$ are given in Section~\ref{sec:delta}.

To translate results about the tail chain $\hat{X}^2_t$ of the squared GARCH process,  into properties for the tail chain $\hat{X}_t$ of the GARCH process we adopt a similar strategy to \cite{deh+r+r+dev:89} and \cite{eh+fi+ja+sc:2015}. It is key to recognise that there are two tails chains for $X_t$, an upper and a lower tail chain
$\hat{X}^L_t$ and $\hat{X}^U_t$ respectively, with
\begin{equation}
\hat{X}^U_t=I_t(\hat{X}^2_t)^{1/2} \mbox{ and }\hat{X}^L_t=-(1-I_t)(\hat{X}_t^2)^{1/2}
\label{eqn:thin}
\end{equation}
where $I_t$ is a sequence of independent and identically distributed Bernoulli$(\delta)$ variables, with $I_t=\{0,1\}$ with respective probabilities $\{1-\delta, \delta\}$ and where $\delta$ is given by limit~\eqref{eq.tailbalance},
The sequence $I_t$ is also independent of $\{\hat{Y}_t\}$.

Many properties of the cluster functions for $\hat{X}^U_t$ and $\hat{X}^L_t$ chains can simply be derived using Monte Carlo methods from the  $\hat{X}^2_t$ tail chains by using Bernoulli thinning, implied by expression~\eqref{eqn:thin}, but
some functionals can be explicitly determined, we study a few of these below.

First, we focus on the $\hat{X}^U_t$ tail chain, corresponding to positive events in the GARCH process.
The extremogram for $\hat{X}^U_t$ is given by
\[
\chi_{X^U}(\tau)=\Pr(\hat{X}^U_{\tau}>1 | \hat{X}^U_{0}>1)= \Pr(I_t\hat{X}^2_{\tau}>1)=\delta \chi_{X^2}(\tau)
\]
where $\chi_{X^2}(\tau)$ is the lag $\tau$ extremogram for the $X^2_t$ process.
An event in the tail chain for $\{\hat{X}^2_t\}$ with $i$ exceedances of the level $1$
does not occur in the tail chain of $\{\hat{X}^U_t\}$ with probability $\delta^i$.
Therefore summing over all event lengths, the probability of no exceedances of level $1$ from an event in the tail chain  of
$\{\hat{X}^2_t\}$ is given by
\[
\mathit{\Pi}^U= \sum_{i=1}^\infty \pi_{X^2}(i) (1-\delta)^i
\]
where $\pi_{X^2}(i)$ is the probability that a cluster of length $i$ in the $\{X_t^2\}$ series, see Section~\ref{sec:CF}. The probability of a cluster of length $j$ in the $\hat{X}^U_t$ tail chain is
\[
\pi_{X^U}(j)= \sum_{k\ge j} \pi_{X^2}(k) {k\choose j}\delta^j(1-\delta)^{k-j}/(1-\mathit{\Pi}^U),
\]
where the denominator corresponds to conditioning on the cluster for $\{\hat{X}^2_t\}$ process being retained for the
$\{\hat{X}^U_t\}$ series. Then the mean cluster size, $1/\theta_{X^U}$ for  $\{\hat{X}^U_t\}$ is given by
\begin{eqnarray*}
1/\theta_{X^U} & = & \sum_{j=1}^{\infty} j\sum_{k\ge j} \pi_{X^2}(k) {k\choose j}\delta^j(1-\delta)^{k-j}/(1-\mathit{\Pi}^U)\\
& = & \sum_{k=1}^{\infty} \pi_{X^2}(k)\sum_{j\le k} j \pi_{X^2}(k) {k\choose j}\delta^j(1-\delta)^{k-j}/(1-\mathit{\Pi}^U)\\
& = & \sum_{k=1}^{\infty} \pi_{X^2}(k) k\delta/(1-\mathit{\Pi}^U)\\
& = & \frac{\delta}{\theta_{X^2}(1-\mathit{\Pi}^U)},
\end{eqnarray*}
where $\theta_{X^2}$ is the extremal index of the squared GARCH process, see Section~\ref{sec:CF}.
So the extremal index of  $\{X^U_t\}$ is
\[
\theta_{X^U} =\theta_{X^2}(1-\mathit{\Pi}^U)/\delta.
\]

Similarly, for the lower tail behaviour of $\{X_t\}$ it follows that the extremogram is
\[
\chi_{X^L}(\tau)=\Pr(\hat{X}^L_{\tau}>1 | \hat{X}^L_{0}>1)=(1-\delta)\chi_{X^2}(\tau),
\]
the probability of no values below the level $-1$ by $X_t^L$ from an event in the tail chain  of  $\{\hat{X}^2_t\}$ is given by
\[
\mathit{\Pi}^L= \sum_{i=1}^\infty \pi_{X^2}(i) \delta^i,
\]
the probability of a cluster of length $j$ in the $\hat{X}^L_t$ tail chain is
\[
\pi_{X^L}(j)= \sum_{k\ge j} \pi_{X^2}(k) {k\choose j}(1-\delta)^j \delta^{k-j}/(1-\mathit{\Pi}^L),
\]
and extremal index of  $\{X^L_t\}$ is  $\theta_{X^L} =\theta_{X^2}(1-\mathit{\Pi}^L)/(1-\delta)$.

\section{Numerical Methods} \label{sec:numerical}
\subsection{Introduction}
Throughout this section a range of GARCH($p,q$) models will be illustrated. The details of these models are given here and will be referenced subsequently as GARCH models A-E, where 
\begin{description}
	\item [A]: $p=q=2$ with $(\alpha_1,\alpha_2, \beta_1,\beta_2)=(0.3,0.15,0.2,0.1)$
	\item [B]:  $p=q=2$ with $(\alpha_1,\alpha_2,\beta_1,\beta_2)=(0.07,0.04, 0.8,0.08)$ 
	\item [C]:  $p=q=1$ with $(\alpha_1,\beta_1)=(0.1, 0.9)$
	\item [D]:  $p=q=2$ with $(\alpha_1,\alpha_2,\beta_1,\beta_2)=(0.07, 0.03, 0.8, 0.1)$
	\item [E]:  $p=2$, $q=0$ with $(\alpha_1,\alpha_2)=(1.2, 0.5 )$.
\end{description}
We selected these models to give a variety of stationarity and extremal behaviours. Models A and B are second order stationary. Models C and D, which are IGARCH models, and model E are not second order stationary \citep[p.\ 35]{fr+za:2010}. 

Let $\phi=\sum_{i=1}^q \alpha_i+\sum_{i=1}^p \beta_i$ be the sum of the meaningful GARCH parameters. 
The parameter $\phi$ is increasing from model A to E with $\phi=1$ for models C and D. We will show that all these models are strictly stationary and that the marginal tail index $\kappa$ decreases with increasing $\phi$ for these models. From Theorem~\ref{th.IGARCHkappa} we have that when $\phi=1$ then $\kappa=1$, we will also illustrate this numerically for models C and D. Model B corresponds to the model studied by \citet{mi+st:00}. Case C, though being an IGARCH(1,1) process, is not covered by previous results of \citet{la+ta:2012} given its IGARCH form, but is of interest here as it helps to illustrate the new methods in a case where analytical solutions are possible. 

In Sections~\ref{sec:numerical} and \ref{sec:results} we take the the distribution of the innovation process $Z_t$, 
to be standard Gaussian, a scaled Student-$t_{\nu}$ distribution, and the skew Student-$t_{\nu}$ distribution introduced in \cite{az+ca:03}. In each case the innovation distribution has zero mean and unit variance. First consider the univariate skew-$t$ distribution, denoted by $\textrm{St}(\mu, \omega, \xi,\nu)$, where $(\mu, \omega, \xi,\nu) \in \mathbb{R}\times \mathbb{R}_+\times \mathbb{R}\times (2,\infty)$ are location, scale, skewness and degree of freedom parameters respectively. For the existence of the variance of $Z$ we require that $\nu >2$. The distribution of $Z$ has density
\[
f_Z(z; \mu, \omega, \xi,\nu) = \frac{2}{\omega}f_T(z_S;  \nu) F_T\left(z_S\xi\sqrt{\frac{\nu+1}{\nu + z_S^2}}; \nu+1\right),
\]
\noindent where $z_S=(z-\mu)/\omega$, and $f_T$ and $F_T$ denote,  respectively, the density and distribution function of a Student-$t$ random variable with location and scale parameters 0 and 1 respectively and with $\nu$ degrees of freedom. With this notation we have
\[
E(Z) = \mu + \omega  b_{\nu, \xi} \quad \quad \text{and} \quad \quad \mbox{var}(Z) = \omega^2 \Bigl[\frac{\nu}{\nu -2} - b_{\nu, \xi}^2\Bigr]
\]
where 
\[
b_{\nu, \xi} = \frac{\xi}{(1+\xi^2)^{1/2}}\left(\frac{\nu}{\pi}\right)^{1/2} \frac{\Gamma(\nu/2 + 1/2)}{\Gamma(\nu/2)},
\]
with $\Gamma$ being the gamma function. To impose the moment conditions on $Z$ of model~\eqref{mult.model.intro} we also require that
\[
\omega=\left[\frac{\nu}{\nu-2}- b^2_{\nu, \xi} \right]^{-1/2}
\mbox{ and } \mu=-\omega b_{\nu, \xi}, 
\]
where $\nu$ and $\xi$ are constrained further to ensure that $\omega$ is a positive real number. 
The parameter  $\xi \in \mathbb{R}$ controls the skewness: $\xi>0$ and $\xi<0$ correspond to right and left skew respectively, and $\xi=0$ to a symmetric distribution. 
Important special cases arise when: $\xi = 0$, $Z$ is the (scaled) Student-$t_{\nu}$ distribution; $\nu \to \infty$, $Z$ is a skew-Normal distribution; $\xi = 0$ and $\nu \to \infty$, a standard normal distribution. If not stated below we assume that  $Z_t$ follows a standard normal distribution.

Table~\ref{tab:key-stationarity} presents values for the key stationarity and extremal properties, $\gamma, \eta, \kappa, \theta_{X^2}, \theta_{X^U}, \theta_{X^L}$ and $\delta$ for each of models A-E and for  Student-$t$, asymmetric Student-$t$ and Gaussian innovations. These values are derived using the numerical methods in the rest of Section~\ref{sec:numerical}, 
and their values are discussed in Sections~\ref{sec:numerical} and \ref{sec:results}.

\begin{table}[htbp]
  \centering
\begin{tabular}{|l|r|r|r|r|c|c|l|}
\hline
Model & \multicolumn{1}{c|}{$\gamma$} & \multicolumn{1}{c|}{$\eta$} & \multicolumn{1}{c|}{$\kappa$} & \multicolumn{1}{c|}{$\theta_{X^2}$} & \multicolumn{1}{c|}{$\theta_{X^U}$} & \multicolumn{1}{c|}{$\theta_{X^L}$} & \multicolumn{1}{c|}{$\delta$}  \\
\hline
A - 1 & -0.4186 & 0.071 & 1.27  & 0.64  & \multicolumn{2}{c|}{0.76} &  0.5 \\
\cline{1-8}A - 2 & -0.4611 & 0.041 & 1.23  & 0.69  & \multicolumn{1}{r|}{0.72} & \multicolumn{1}{r|}{0.89} & 0.80  \\
\cline{1-8}A - 3 & -0.3358 & 0.023 & 2.37  & 0.59  & \multicolumn{2}{c|}{0.72} &  0.5 \\
\hline
B - 1 & -0.0400 & 0.003 & 1.26  & 0.38  & \multicolumn{2}{c|}{0.49} &   0.5\\
\cline{1-8}B - 2 & -0.0305 & 0.016 & 1.09  & 0.37  & \multicolumn{1}{r|}{0.41} & \multicolumn{1}{r|}{0.57} & 0.74  \\
\cline{1-8}B - 3 & -0.0155 & 0.005 & 1.92  & 0.16  & \multicolumn{2}{c|}{0.24} & 0.5  \\
\hline
C - 1 & -0.0300 & 0     & 1     & 0.21  & \multicolumn{2}{c|}{0.29} &  0.5 \\
\cline{1-8}C - 2 & -0.0335 & 0     & 1     & 0.29  & \multicolumn{1}{r|}{0.33} & \multicolumn{1}{r|}{0.45} & 0.71  \\
\cline{1-8}C - 3 & -0.0082 & 0     & 1     & 0.03  & \multicolumn{2}{c|}{0.05} & 0.5  \\
\hline
D - 1 & -0.0208 & 0.007 & 1     & 0.21  & \multicolumn{2}{c|}{0.29} &  0.5 \\
\cline{1-8}D - 2 & -0.0234 & 0.007 & 1     & 0.27  & \multicolumn{1}{r|}{0.31} & \multicolumn{1}{r|}{0.44} & 0.72  \\
\cline{1-8}D - 3 & -0.0062 & 0.002 & 1     & 0.03  & \multicolumn{2}{c|}{0.05} & 0.5  \\
\hline
E - 1 & -0.7461 & 0.137 & 0.65  & 0.27  & \multicolumn{2}{c|}{0.40} &  0.5 \\
\cline{1-8}E - 2 & -0.7595 & 0.129 & 0.68  & 0.29  & \multicolumn{1}{r|}{0.40} & \multicolumn{1}{r|}{0.45} & 0.56  \\
\cline{1-8}E - 3 & -0.2411 & 0.152 & 0.25  & 0.13  & \multicolumn{2}{c|}{0.22} & 0.5  \\
\hline
\end{tabular}%
 \caption{Values of key stationarity and extremal properties for models A-E for three innovation distributions: 1 $t_3$; 2 skew $t_3$ with $\xi=1$; and 3 Gaussian, with Model A - 3  denoting GARCH model formulation A with innovation distribution 3.}
 \label{tab:key-stationarity}%
\end{table}%

\subsection{Evaluation of $\gamma$ and $\eta$}
\label{sec:gamma}

Expression~\eqref{eq:lyapunov.monte.carlo} suggests using Monte Carlo for the evaluation of $\gamma$, by taking $t$ to be very large.  We have also introduced, through Theorems~\ref{th.gamma} and \ref{th.combined}, two new ways to evaluate $\gamma$, with the latter only applicable once it is known that the process is stationary, and hence it is known that $\gamma<0$. Here we compare these approaches to illustrate the superior computational stability and reliability of our proposed approaches.
First we illustrate the methods in Figure~\ref{fig.numerical.convergence} for model A which is known, from the discussions in Section~\ref{sec:stationarity}, to be stationary without the requirement of the evaluation of $\gamma$.

Figure~\ref{fig.numerical.convergence} (left) shows that $\gamma_t$, evaluated
using expression~\eqref{eq:lyapunov.monte.carlo}, has 
serious numerical instabilities for large $t$. All ten independent realisations of
$\gamma_t$ appear to be converging to roughly the same negative value as $t$ increases. This finding suggests that
the limit $\gamma$ is negative and so the process is correctly found to be strictly stationary. 
But at different random values of large $t$ each replicate stops at a time when the norm for that replicate is calculated as $0$ to machine precision. In these cases $\gamma_t=-\infty$ for all subsequent $t$ even though $\gamma$ is known to be finite. Hence wrong conclusions about strict stationarity can be reached for model A using this method. This numerical instability for evaluating $\gamma$ does not appear to have been reported. For example, estimates of $\gamma$ using this approach are presented for ARCH(2) processes in \citet[p. 34/35]{fr+za:2010}, however they stop evaluating $\gamma_t$, when $t=1000$, which is before we see  is the critical failure of numerical evaluation in  Figure~\ref{fig.numerical.convergence}. By increasing $t$ we find similar numerical problems to those experienced for model A in Figure~\ref{fig.numerical.convergence}. 
\begin{figure}
	\includegraphics[width=0.225\textwidth, angle= 270]{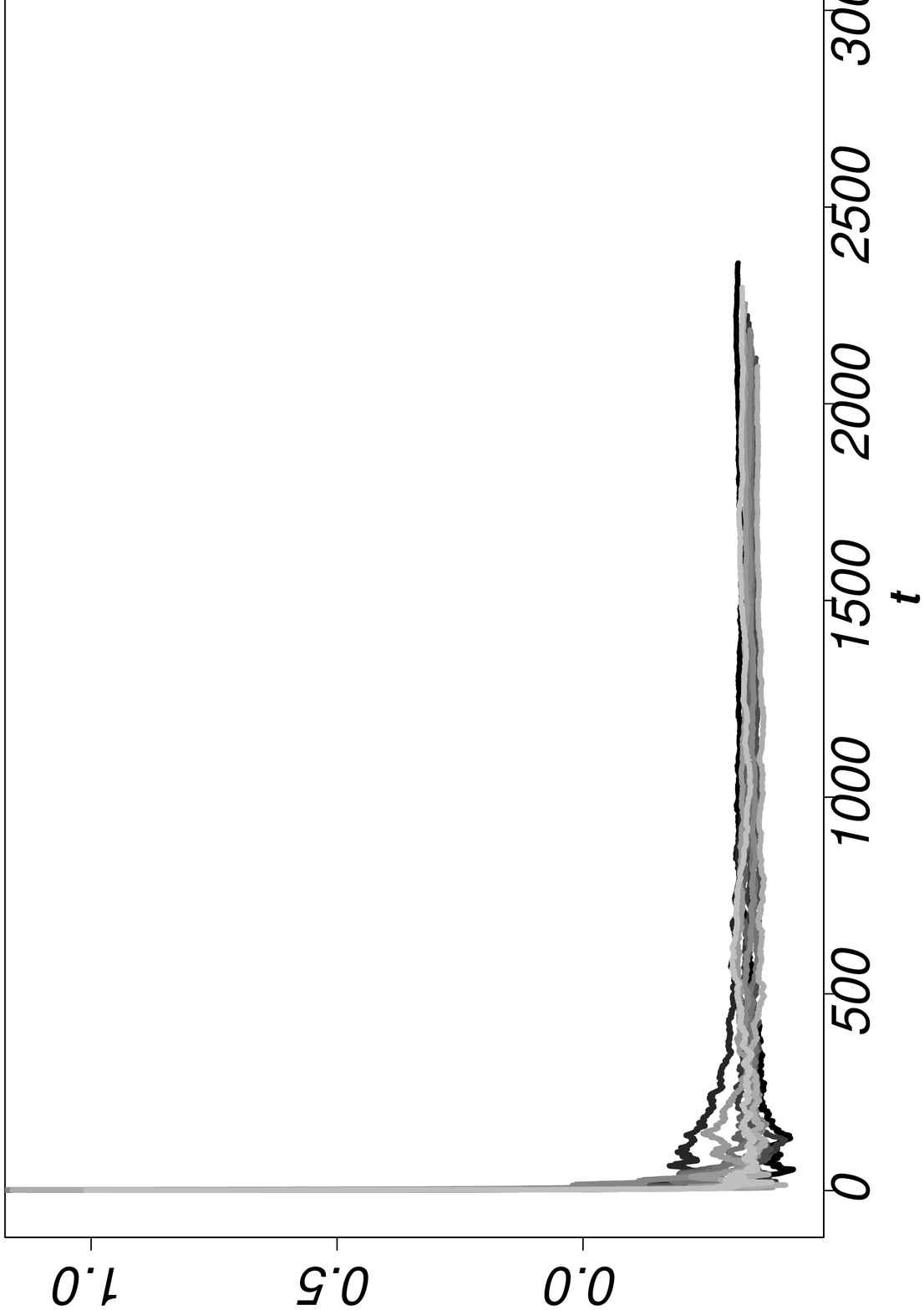}
	\includegraphics[width=0.225\textwidth, angle= 270]{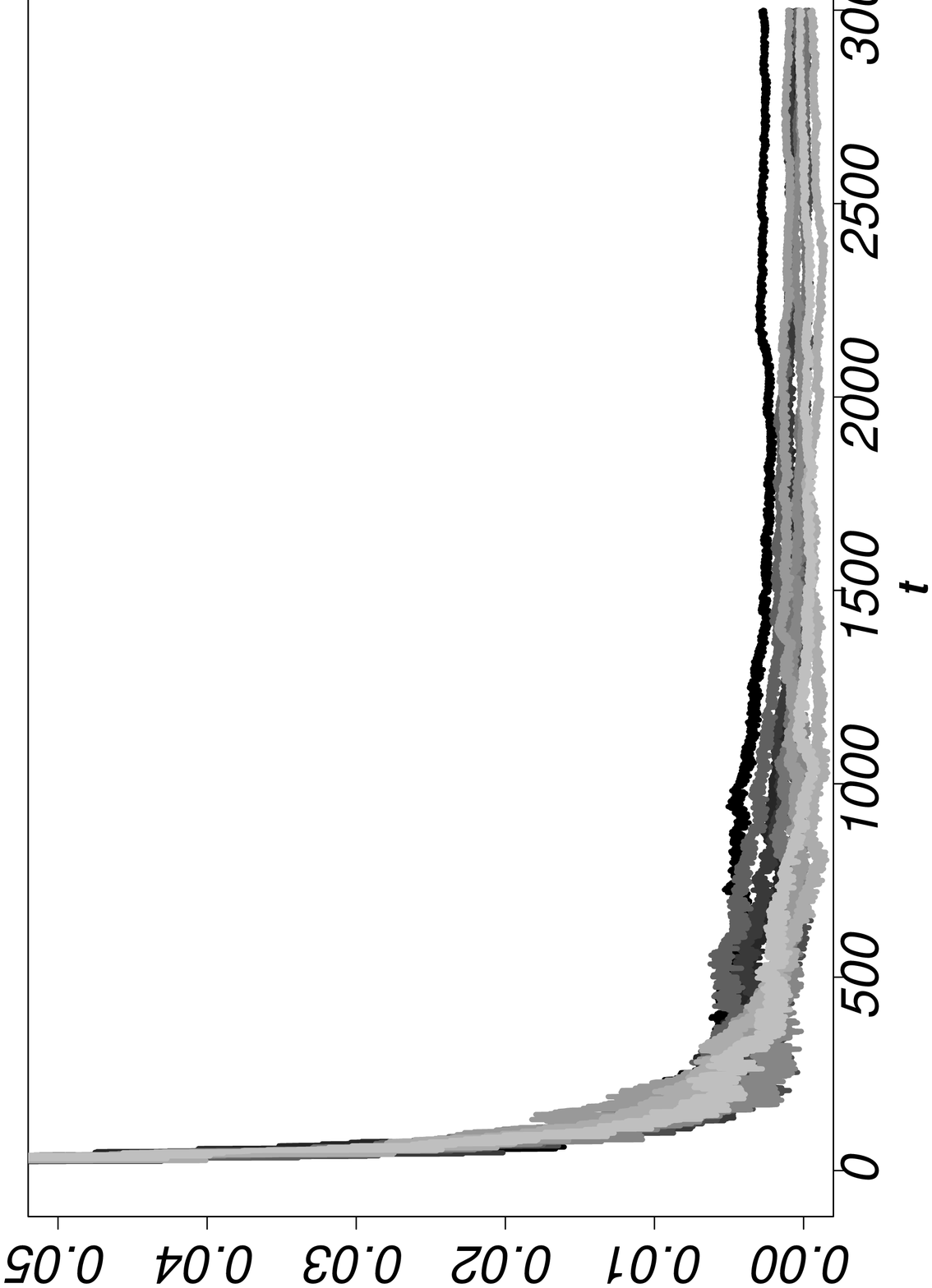}\includegraphics[width=0.225\textwidth, angle= 270]{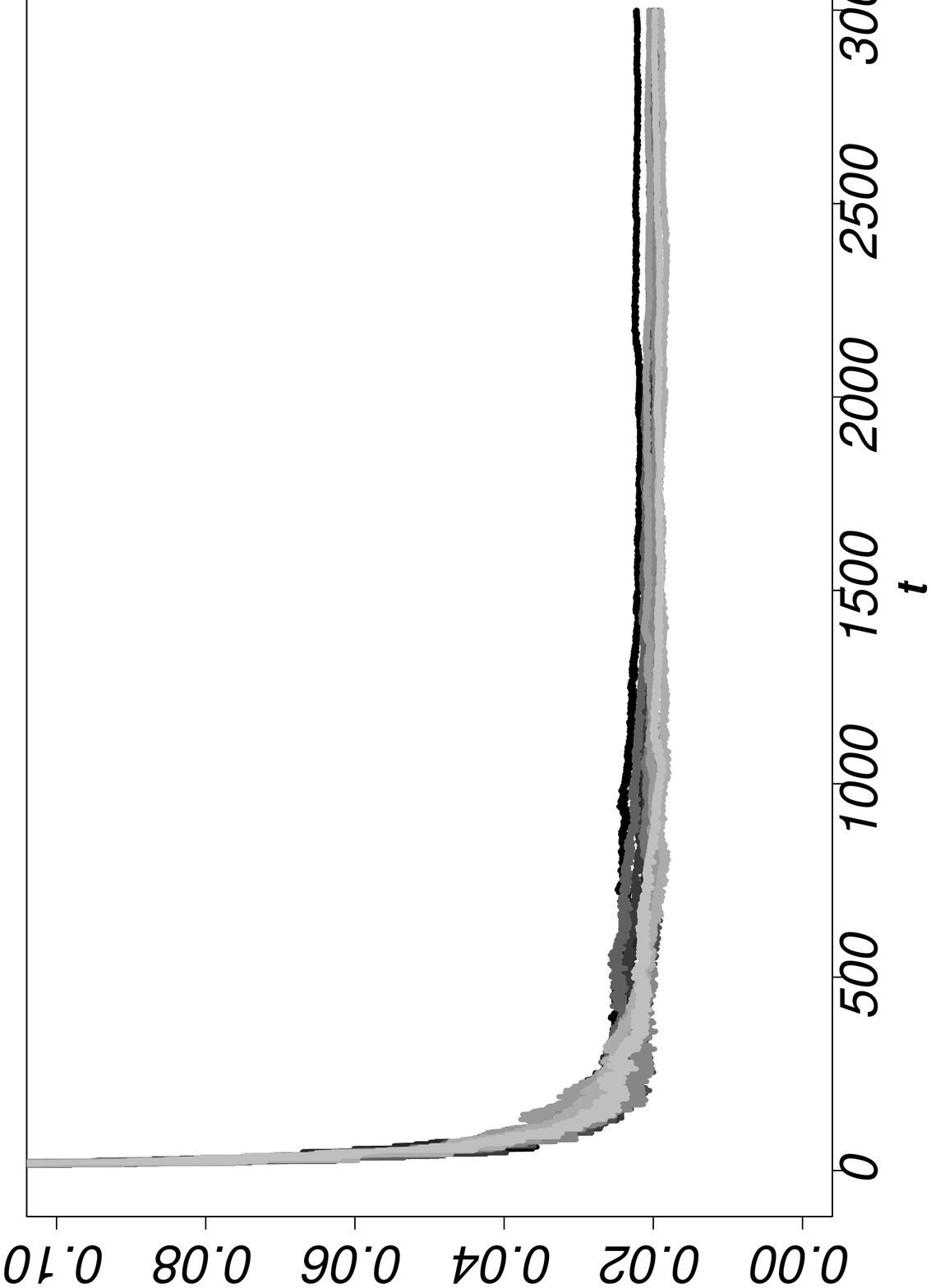}
	\caption{Monte Carlo properties for evaluation of $\gamma$ for GARCH model A against iteration $t$: left shows $\gamma_t$; middle $\frac{1}{t}\ln \norm{\mathbf{C}_t}$; right $\eta_t$.  All panels have the same 10 replicates of $\mathbf{A}_1, \ldots, \mathbf{A}_{t}$ displayed by greyscale lines.} \label{fig.numerical.convergence}
\end{figure}

Theorem~\ref{th.gamma} shows that actually $\gamma=E(\ln\lambda)+\lim_{t\rightarrow \infty} \eta_t$ with $\eta_t$ is defined by expression~\eqref{eq:defin.eta}, if condition~\eqref{eqn:Ccondition} holds. To use this result first we check that 
condition~\eqref{eqn:Ccondition} holds. For model A, Figure~\ref{fig.numerical.convergence}, shows $\frac{1}{t}\ln \norm{C_t}$ for the same replicates as in the left panel. This figure indicates that this quantity appears to be converging to $0$ as $t\rightarrow 0$, i.e., condition~\eqref{eqn:Ccondition} appears to hold.

For model A using the formulation of Theorem~\ref{th.gamma} we find that $\gamma=-0.359+0.019=-0.34<0$, which clearly satisfies the strict stationarity condition. We have evaluated $E(\ln\lambda)$ by both numerical integration and using the Monte Carlo approximation $\sum_{i=1}^{t}\ln(\lambda_i)/t$ for large $t$ where $\lambda_i$ is the largest magnitude of the eigenvalue of $\mathbf{A}_i$, and obtained identical and stable results once Monte Carlo errors are accounted for. Evaluations of the $\eta_t$ term are shown in  Figure~\ref{fig.numerical.convergence}  right panel, for the same 10 replicates as before. The numerical problems are now resolved with convergence similar in all cases, with greater variation between realisations than between iterations over large ranges of  $t$.  The value we present for $\eta=\lim_{t\rightarrow \infty} \eta_t$ is evaluated as the mean over the ten different realisations using when $t=30000$. Although Monte Carlo results are subject to noise,  it is possible to obtain any desired level of accuracy by running sufficient replicates when assessing the variability using central limit results \citep{golds:91}. Using the results of Theorem~\ref{th.kappa}, we can also evaluate $\eta$, but this requires knowledge of the value of $\kappa$. Using the methods of Section~\ref{sec:kappaEval}, based on  Algorithm~\ref{alg:Paul},  for model A we have that $\kappa=2.37$ and so it follows from Theorem~\ref{th.kappa} that $\eta=0.0233$. Thus this shows that trying to evaluate $\eta$ using the Monte Carlo limit, as in Figure~\ref{fig.numerical.convergence}, is liable to a small numerical error.

As stationarity of model A has been derived we can also evaluate $\gamma$ using the even more numerically reliable method given by Theorem~\ref{th.combined}, which comes from the expectations of two functions of the largest eigenvalue of the matrix $\mathbf{A}$. This evaluation requires the knowledge of $\kappa$. As for the evaluation of $\eta$ above, for model A we use $\kappa=2.37$, and we  then obtain that $\gamma=-0.336$. 

All of the values of both $\eta$ and $\gamma$ reported in Table~\ref{tab:key-stationarity} are evaluated using the methods based on Theorems~\ref{th.kappa} and \ref{th.combined} respectively, but of course they can only be used once stationarity has been determined, or at least there is strong evidence that $\gamma$ may be negative based on using the method based on Theorem~\ref{th.gamma}. Table~\ref{tab:key-stationarity} shows that although $\eta=0$ for all GARCH$(1,1)$ processes (confirming Theorem~\ref{th.kappa}) this does not hold for  any of our GARCH$(p,q)$ processes with $\max(p,q)\ge 2$. Furthermore, in all models, we have $\gamma$ closest to 0 with the Gaussian innovation, then the symmetric $t_3$ distribution. There is no obvious pattern in the behaviour of $\eta$ over the factors we explore in Table~\ref{tab:key-stationarity}.

\subsection{Initialising Algorithm~\ref{alg:Paul}}
\label{sec:initial_guess}

To be able generate realisations from the tail chain~\eqref{eqn:TailChain}, through Algorithm~\ref{alg:tail_chain}, we first need to generate samples for $\hat{\mathbf{\Theta}}_0$ using  Algorithm~\ref{alg:Paul}. However, to use Algorithm~\ref{alg:Paul}
we need to be able to sample from a suitable random variable $\widetilde{\mathbf{\Theta}}_0$ on $\mathbb{S}^{p+q}$
with a distribution which is as close as possible to the target limit distribution function $H_{\hat{\mathbf{\Theta}}_0}(\mathbf{w})$, 
so that the rate of convergence of $\widetilde{\mathbf{\Theta}}_s\rightarrow^d \hat{\mathbf{\Theta}}_0$ as $s\rightarrow \infty$ is maximised. 
From limit~\eqref{eqn:spectral_limit} we have that 
\[
\Pr(\mathbf{\Theta}^-_0 \le \mathbf{w} \mid R_0> x) {\to}  H_{\hat{\mathbf{\Theta}}_0}(\mathbf{w}), \hspace{5mm} \mbox{ as } x \to \infty.
\]
So for large enough $x$, i.e., $x\ge u$ for some high threshold $u$, if we treat this limiting
representation as an equality this gives us an initial estimate $H^{(0)}_{\widetilde{\mathbf{\Theta}}}$ of $H_{\hat{\mathbf{\Theta}}_0}$. We select $u$ as a high threshold of $R_t$ such that limit property~\eqref{eqn:spectral_limit}  appears to be well represented, i.e., radial values appear to have a Pareto tail and radial and angular values appear independent.

In practice to obtain $H^{(0)}_{\widetilde{\mathbf{\Theta}}}$ we generate a sample of length $n$ from the required GARCH($p,q$) process and take the empirical distribution of simulated values of $\mathbf{\Theta}_t$ given that $R_t>u$ after a burn in period of $n_b$
i.e., 
\[
H^{(0)}_{\widetilde{\mathbf{\Theta}}}(\mathbf{w})=\frac{\sum_{j=n_b+1}^n \mathbf{1}(R_j>u,
\mathbf{\Theta}^-_j\leq\mathbf{w})}{n_u},
\]
where $\mathbf{1}(F)$ is the indicator function of event $F$ and with $n_u=\sum_{j=n_b+1}^n \mathbf{1}(R_j>u)$.
As initial particles for Algorithm~\ref{alg:Paul} we use all the realisations of $\mathbf{\Theta}_t$ given that $R_t>u$, for $t=1,\ldots ,n$.
We used $n=1.1\times 10^7$, and $u$ to be the $99.99\%$ quantile of $R_t$ giving $J=10^3$ particles, each with equal weight $J^{-1}$.

\subsection{Investigation into convergence of Algorithm~\ref{alg:Paul}} 
\label{S:convergence}

We illustrate the convergence of Algorithm~\ref{alg:Paul} for models A and C. First consider model C where
the true distribution of $\hat{\mathbf{\Theta}}_0$, here a scalar, is given by expression~\eqref{eqn:HforGARCH11}. Hence we can compare the $s$ iteration estimate $\hat{H}^{(s)}_{\widetilde{\mathbf{\Theta}}}(w)$ against the truth
$\hat{H}_{\hat{\mathbf{\Theta}}_0}(w)$.  Figure~\ref{fig.stability.igarch11}  illustrates this distributional convergence as well as that of the distribution of the $J$ particle weights, $\mathbf{m}^{(s)}$ of expression~\eqref{eq:update.part.weights} on iteration $s$. First note that the 95\% pointwise confidence intervals of the initial estimate $\hat{H}^{(0)}_{\widetilde{\Theta}}$ given in Section~\ref{sec:initial_guess} does not the contain the true target distribution $\hat{H}_{\hat{\mathbf{\Theta}}_0}(w)$. Despite our efforts to obtain a good initial guess for Algorithm~\ref{alg:Paul}, the statistically significant difference between them is due to the slow convergence of the distribution of $\Pr(\mathbf{\Theta}_0<w|R_0>u)$ to 
$H_{\hat{\mathbf{\Theta}}_0}(w)$ as $u\rightarrow \infty$. 

After one step of Algorithm~\ref{alg:Paul} we have that 
$H^{(1)}_{\widetilde{\mathbf{\Theta}}}(w)$ is equal to $H_{\hat{\mathbf{\Theta}}_0}(w)$ to within visible detection. Further iterations of the algorithm lead to no visible changes in $\hat{H}^{(s)}_{\widetilde{\Theta}}(w)$ for $s>1$. In fact, in this example, we found essentially a perfect convergence after one iteration whatever the initial distribution estimate indicating a unique solution with the algorithm being robust and highly efficient in converging to it.  Now focus on the particle weights obtained in the algorithm. Initially, i.e., for $s=0$, all weights are equal $J^{-1}$, but, as Figure~\ref{fig.stability.igarch11} shows, within an iteration they have quite a different distribution of weights and that this distribution essentially has converged at $s=2$ to its limit form. Thus Algorithm~\ref{alg:Paul} works exceptionally well in this case where we know the answer. Similar tests over other GARCH(1,1) processes gave identical convergence performances.

\begin{figure}[H]
\begin{center}	
\includegraphics[width=0.25\textwidth, angle= 270]{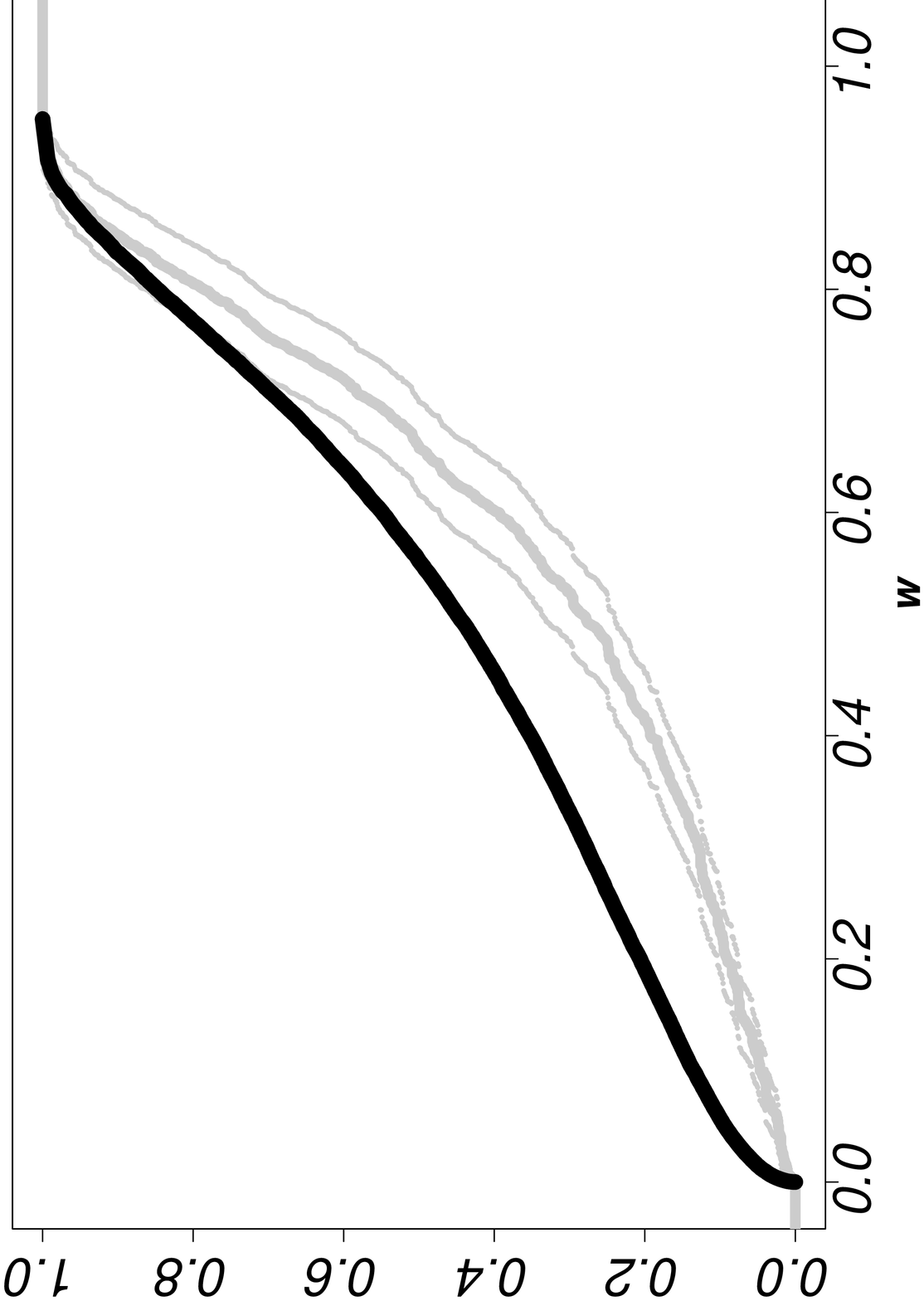}
	\includegraphics[width=0.25\textwidth, angle= 270]{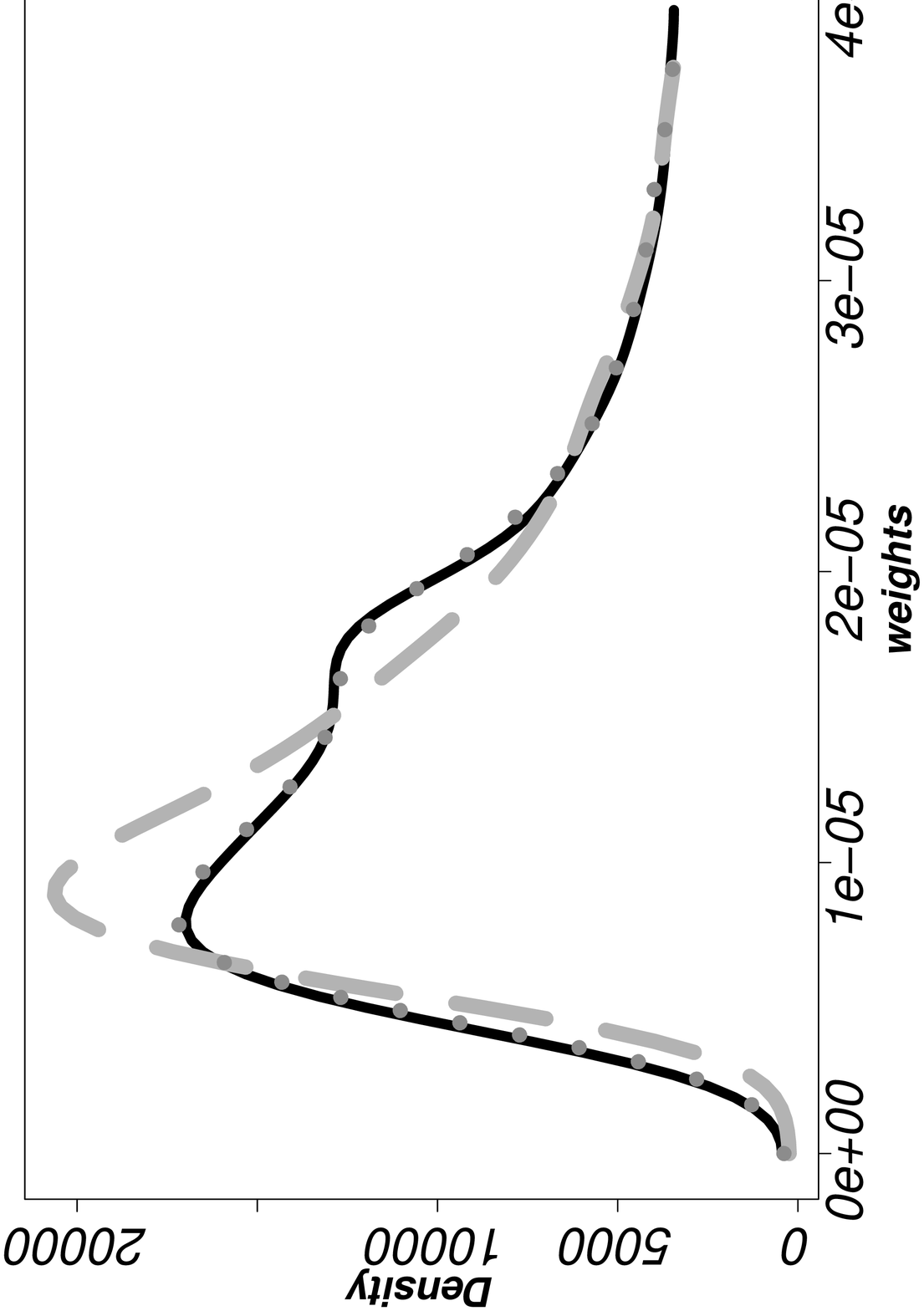}
	\end{center}
	\caption{Illustrations of Algorithm~\ref{alg:Paul} convergence for model C at iterations $s=\{0,1, 2, 100\}$. Left, 
	thick grey solid line is $H^{(0)}_{\widetilde{\mathbf{\Theta}}}(\mathbf{w})$ and the true limit distribution $H_{\hat{\mathbf{\Theta}}}(\mathbf{w})$ is shown by thick black line. For $s=0$ the 95\% confidence intervals are given by light grey lines. Right, kernel density estimate for
the particle mass. Line types are identical in each panel: $s=1$ - dashed grey line, $s=2$ - dotted dark grey line and $s = 100$  black thick solid line.} \label{fig.stability.igarch11}
\end{figure}

Next we assess the convergence of $H^{(s)}_{\widetilde{\mathbf{\Theta}}}(\mathbf{w})$, over $s$, for model A. Here 
$\hat{\mathbf{\Theta}}_0$ is four dimensional and its distribution is not known, so we cannot easily show graphically the convergence of the full joint distribution convergence and even for lower dimensional summaries we can only show the algorithm converges to some limit.  Figure~\ref{fig.stability.garch22} illustrates convergence for each of the marginal distributions of  $H^{(s)}_{\widetilde{\mathbf{\Theta}}}(\mathbf{w})$ over $s$. Other than for the second component the marginals appear to stabilise to their limit after just one iteration, for that margin it occurs in two iterations. The final panel of Figure~\ref{fig.stability.garch22} similarly shows that the distribution of the particle weights of the particles
also converges after two iterations. We also assessed (not shown) the convergence of the dependence structure  of 
$H^{(s)}_{\widetilde{\mathbf{\Theta}}}(\mathbf{w})$ through monitoring how $\mbox{corr}(\widetilde{\vartheta}^{(i)}_s,\widetilde{\vartheta}^{(j)}_s)$ converges to $\mbox{corr}(\hat{\vartheta}^{(i)}_0,\hat{\vartheta}^{(j)}_0)$, where 
$\widetilde{\mathbf{\Theta}}=(\widetilde{\vartheta}^{1}, \ldots ,\widetilde{\vartheta}^{p+q})$. In all cases we found rapid convergence.

We studied a number of other GARCH($p,q$) processes and found excellent convergence of the algorithm, with convergence appearing to occur in $p+q-1$ iterations in all cases with our initialisation method, and convergence to the same value over a range of other initialisation distributions. The theory behind Algorithm~\ref{alg:Paul} indicates that there is a unique solution and that the algorithm will find this, hence our numerical studies support this and show that it works with very high efficiency.

\begin{figure}[H]
\begin{center}	
\includegraphics[width=0.15\textwidth, angle= 270]{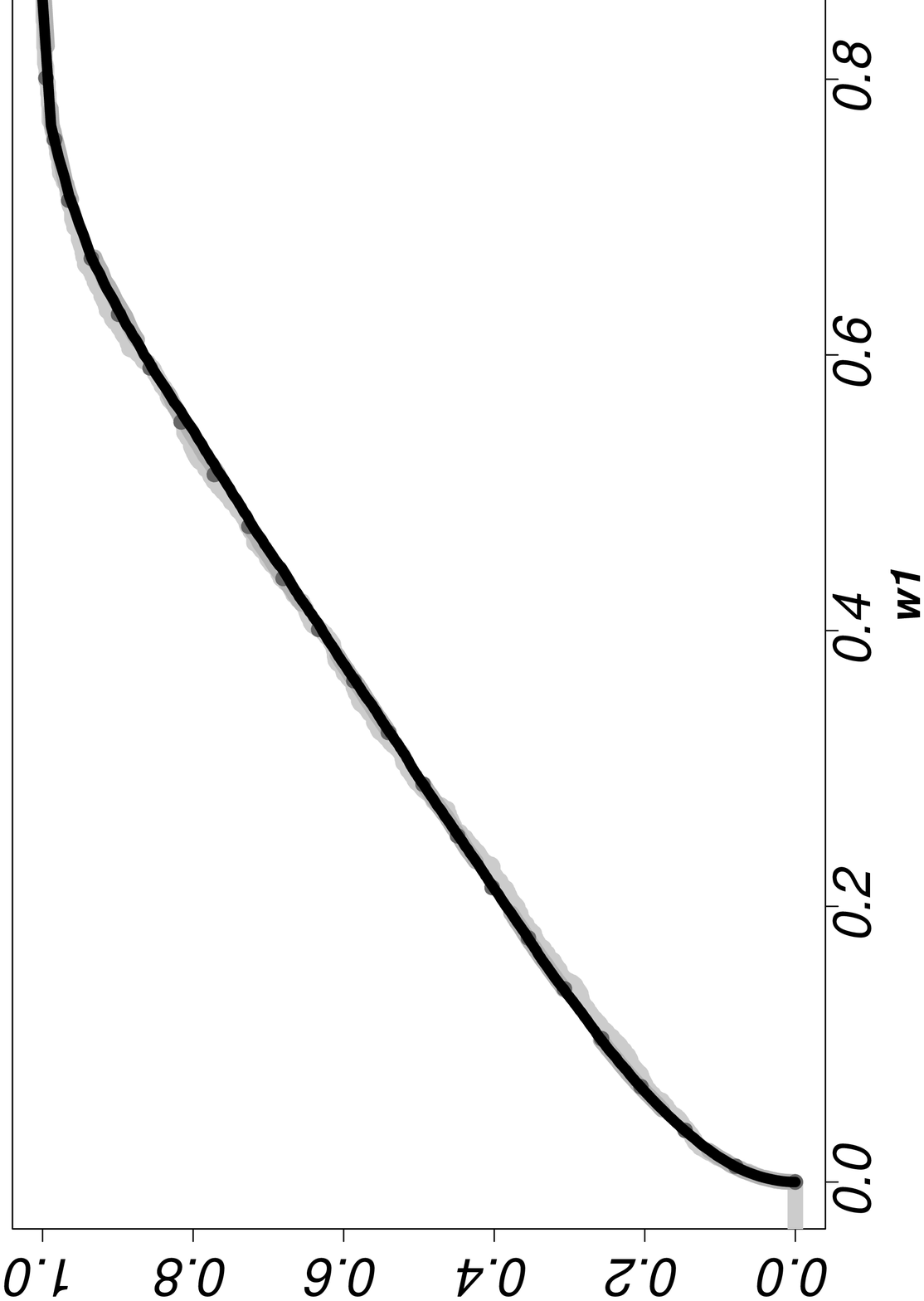}
\includegraphics[width=0.15\textwidth, angle= 270]{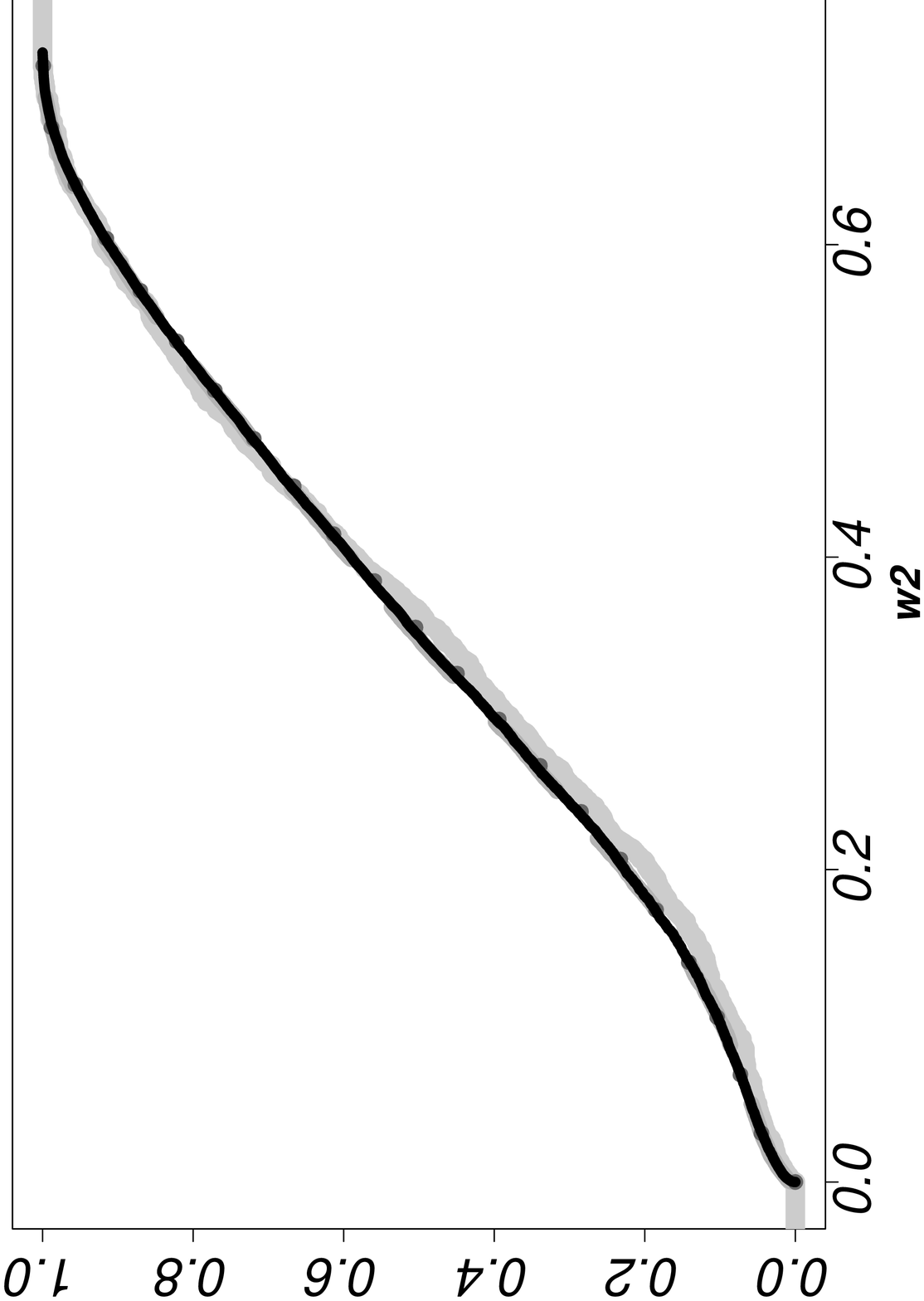}
\includegraphics[width=0.15\textwidth, angle= 270]{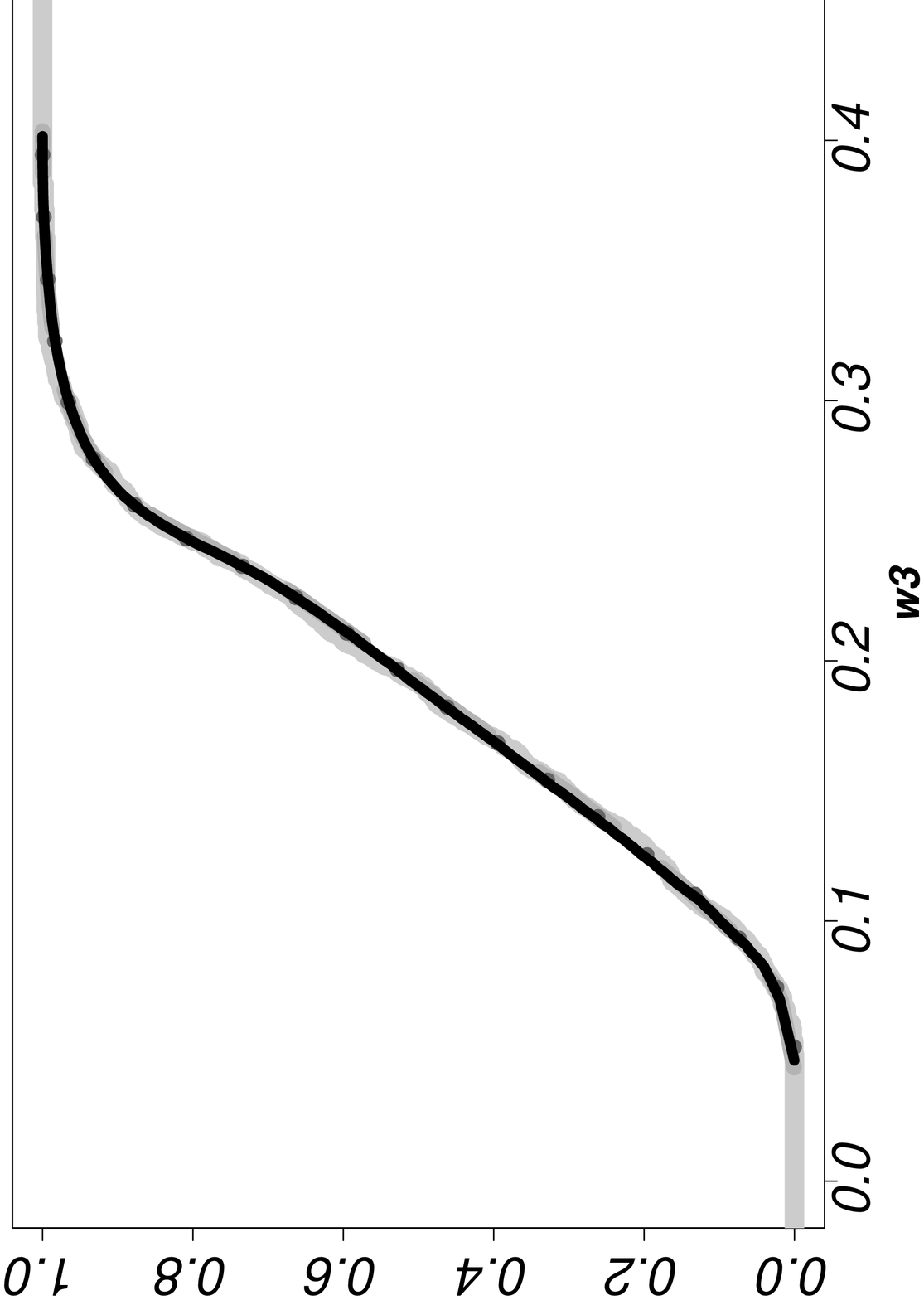}
\includegraphics[width=0.15\textwidth, angle= 270]{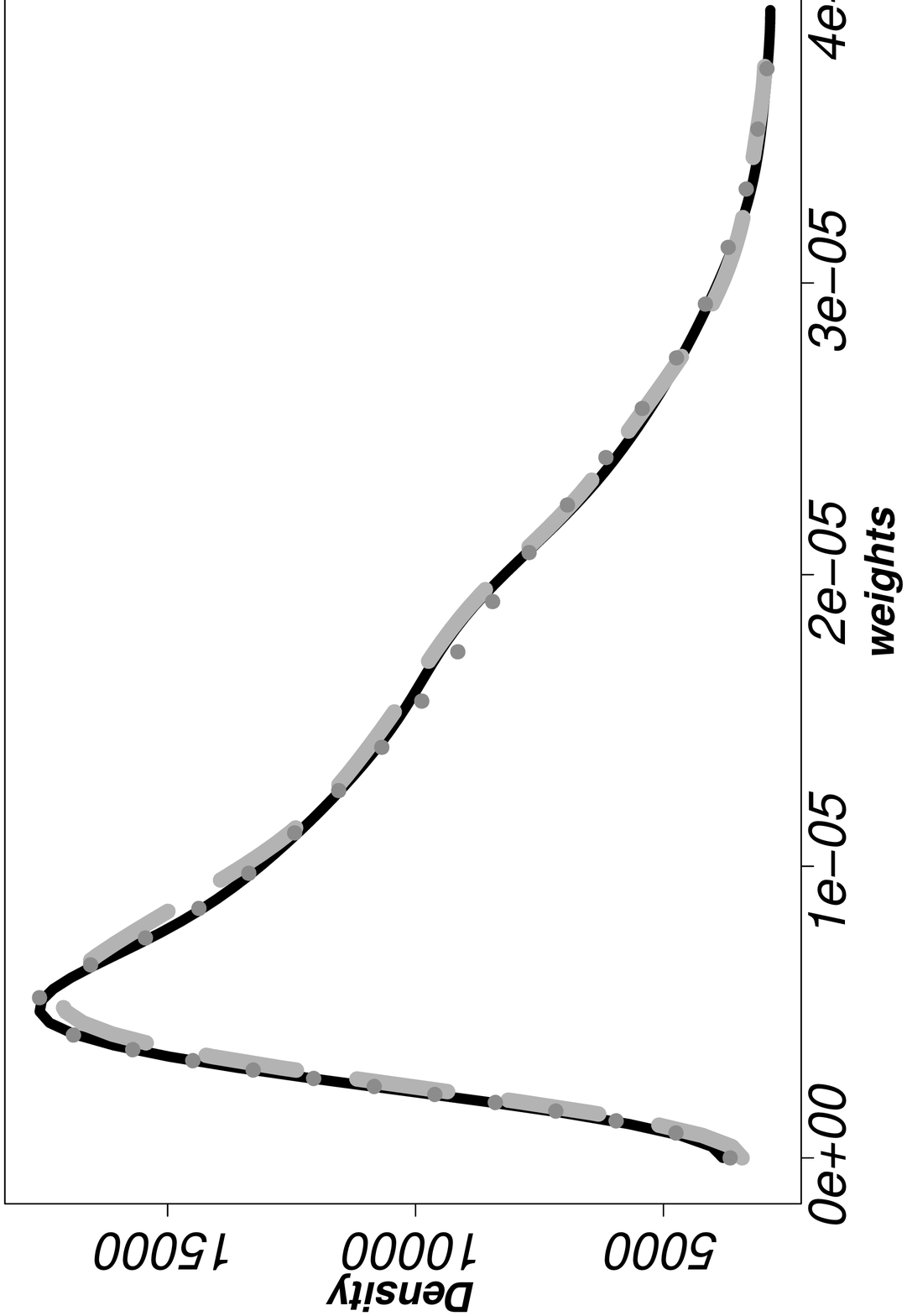}
\end{center}
\caption{Convergence of Algorithm~\ref{alg:Paul} for model A at iterations $s=\{0,1, 2, 100\}$ with marginal distribution convergence for  $H_{\widetilde{\vartheta}^{(i)}_s}(w_i)$, for $i=1,2,3$ and the kernel density for the particle mass. Values for $s>100$ are identical to those for $s=100$. Line types are as for Figure~\ref{fig.stability.igarch11}.}
\label{fig.stability.garch22}
\end{figure}


\subsection{Assessment of the convergence for the Extremal Index}
\label{sec:EIassessment}

We show that our solution for $H_{\hat{\mathbf{\Theta}}_0}(w)$ gives values of functionals, such as the extremal index, which are consistent with estimates obtained from long-run simulations from the GARCH($p,q$) process.
The convergence of Algorithm~\ref{alg:Paul} can be assessed for the cluster functionals by combining outputs from Algorithms~\ref{alg:Paul} and \ref{alg:tail_chain}. We illustrate this by finding $\theta_{X^2}$, the extremal index of the squared process for model A. Using the tail chain $\{\hat{X}_t; t=0,1, \ldots \}$ we have 
\[
\theta_{X^2}= \Pr\bigl(\hat{X}^2_t<1; t=1,2,\ldots \mid \hat{X}^2_0 > 1\bigr). 
\]
We can estimate the extremal index using the runs estimator $\tilde{\theta}_{X^2}(u,m)$, proposed by \citet{sm+we:94}, based on a sample from the GARCH($p,q$) process of length $n$, where
\begin{eqnarray*}
\tilde{\theta}_{X^2}(u,m) & = & \hat{\Pr}\bigl(X^2_t<0; t=1,2,\ldots ,m \mid X^2_0 > 1 \bigr)\\
& = & \frac{\sum_{j=1}^{n-m} \mathbf{1}(\max(X^2_{j+1}, \ldots, X^2_{j+m})<u, X^2_j>u)}{\sum_{j=1}^{n-m} \mathbf{1}(X^2_j>u)}.
\end{eqnarray*}
Figure~\ref{fig.runs.theta} shows $\theta_{X^2}$ together with the runs estimate, based on $n=10^7$, for a range of values of $u$ and $m$. The limit value is within the estimated 95\% confidence intervals for the runs estimate for all $u$ and both values of $m$ and the values of the runs estimate approach the true value $\theta_{X^2}$ as $u$ increases for both $m$ values, although the uncertainty in the estimators increases. This plot also goes to show why it is not possible to derive these extremal features of the GARCH($p,q$) process simply from very long runs as the competing needs of large $u$, for convergence, and large numbers of exceedances of $u$, for numerical stability, makes getting numerically reliable values essentially impossible without taking $n$ to be multiple orders of magnitude larger than here.

\begin{figure}[H]
\begin{center}
	\includegraphics[width=0.35\textwidth, angle= 270]{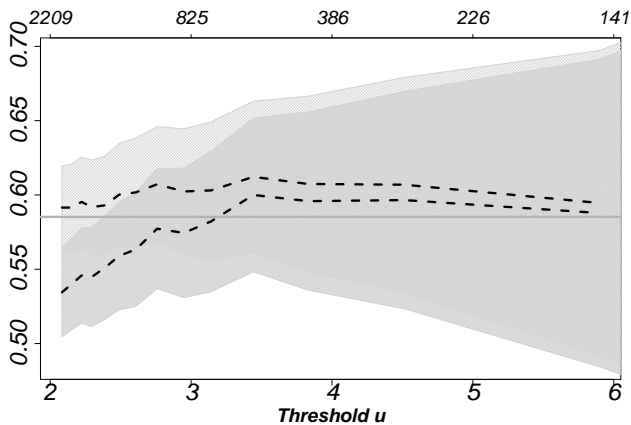}
	\end{center}
	\caption{The squared GARCH process extremal index $\theta_{X^2}$ for the model A shown by the horizontal solid line and runs-method based estimates 
	$\tilde{\theta}(u,m)$ obtained using a simulated process of length $10^7$. The runs estimates for different thresholds $u$ and $m=100$ and $1000$ (top and bottom dashed lines respectively) and the associated pointwise $95\%$ confidence intervals (grey shaded). The number of exceedances for each $u$ is reported along the top axis.} 
	\label{fig.runs.theta}
\end{figure}

\subsection{Evaluation of $\kappa$ } 
\label{sec:kappaEval}

\cite{ba+se:09}, and subsequent authors, imply that the way to evaluate $\kappa$ is to numerical solve the limiting equation~\eqref{eq:root.equation}, although they do not illustrate this. In Figure~\ref{fig.numerical.convergence} (left panel) we showed that there are major numerical instabilities in evaluating $\norm{\mathbf{A}_t \cdots \mathbf{A}_1}$ for large $t$; so in practice it is impossible to solve equation~\eqref{eq:root.equation} directly. In this paper we have discussed three alternative approaches for determining $\kappa$: the algorithm of \cite{ja:2010};  using the formulation for $\kappa$ given by Theorem~\ref{th.kappa};  and exploiting Algorithm~\ref{alg:Paul}. Here we describe, and illustrate, the relative merits of these methods.  

The algorithm of \cite{ja:2010} is only for bounded innovation variables. From a numerical efficiency perspective it suffers from the critical problem that as it is based on rejection sampling, meaning it can get seriously stuck. 
Finally, the routine was written in pure R and has a naive initialisation,  so it is very slow (taking 2/3 days) to evaluate $\kappa$ to an accuracy of three significant figures even when applied for a GARCH(2,1) model. The speed slows at a cubic rate as the number of the GARCH parameters grows. So this algorithm cannot be used for arbitrary GARCH($p,q$) processes, even with bounded innovations.

Both the new approaches that we present for evaluating $\kappa$ are not restricted by the choice of the GARCH dimensions $p$ and $q$, they apply whether the innovations are bounded or unbounded, and they are relatively much faster as they are coded in C wrapped by R and run in parallel. 

The first of our methods is based on the equivalent representation to limiting equation~\eqref{eq:root.equation}, i.e., that $\kappa>0$ satisfies $E[(\lambda\exp(\eta))^{\kappa}]=1$ as given by Theorem~\ref{th.kappa}. As $\lambda$ can be derived analytically (or, less efficiently, numerically) from $\mathbf{A}$ it remains to find $\eta$ and then $\kappa$ can be found when using either numerical integration or Monte Carlo methods to evaluate the required expectation.  We derive an estimate of $\eta$ using the methods presented in Section~\ref{sec:gamma}. Unfortunately, this approach is not ideal as is illustrated in Figure~\ref{fig.numerical.convergence}, which shows that the 10 replicates of $\eta_t$ are not sufficiently stable and self-consistent in their values at large iterations to accurately deduce the precise value of the limit $\eta$.

We find that this approach only works well for calculating $\kappa$ for models where $\abs{\phi -1} > 0.05$  as there is too much sensitivity to the uncertainty of $\eta$ otherwise. Although this is not the ideal way to evaluate $\kappa$, its form gives helpful intuition into  what influences $\kappa$. As seen in Section~\ref{sec:gamma} we can actually evaluate $\eta$ much more accurately, but that needs $\kappa$ to be found, so that would lead to a circular argument.

Our preferred approach to calculating $\kappa$ is to use Algorithm~\ref{alg:Paul}, iterating over $k$ to give $\kappa$,  as this provides no numerical problems whatever the dimension of the GARCH ($p,q$) process. Key to the solution is the evaluation of the Monte Carlo estimate $\tilde{\rho}_k$ of $\rho_k$ in expression~\eqref{eqn:rho_k}.
Figure~\ref{fig.root.function} shows $\tilde{\rho}_k$ against $k$ for each of the models A-E. There is clearly a unique solution for $k>0$ to the equation $\tilde{\rho}_k=1$, with the values of $k=\kappa$ that solve this equation given in Table~\ref{tab:key-stationarity}. However to  achieve this we need to reduce the noise in the Monte Carlo estimates $\tilde{\rho}_k$
of $\rho_k$. For each value of $k$ shown in Figure~\ref{fig.root.function} we used $J = 10^6$ and evaluated the Monte Carlo integral \eqref{eqn:rho_k} with $10^4$ replicates on $Z$ to get $\kappa$ to the required precision. To find $\kappa$, from the curve of $\tilde{\rho}_k$, we used an initial grid search coupled with a bisection method.

\begin{figure}[H]
  \centering
	\includegraphics[width=0.25\textwidth, angle= 270]{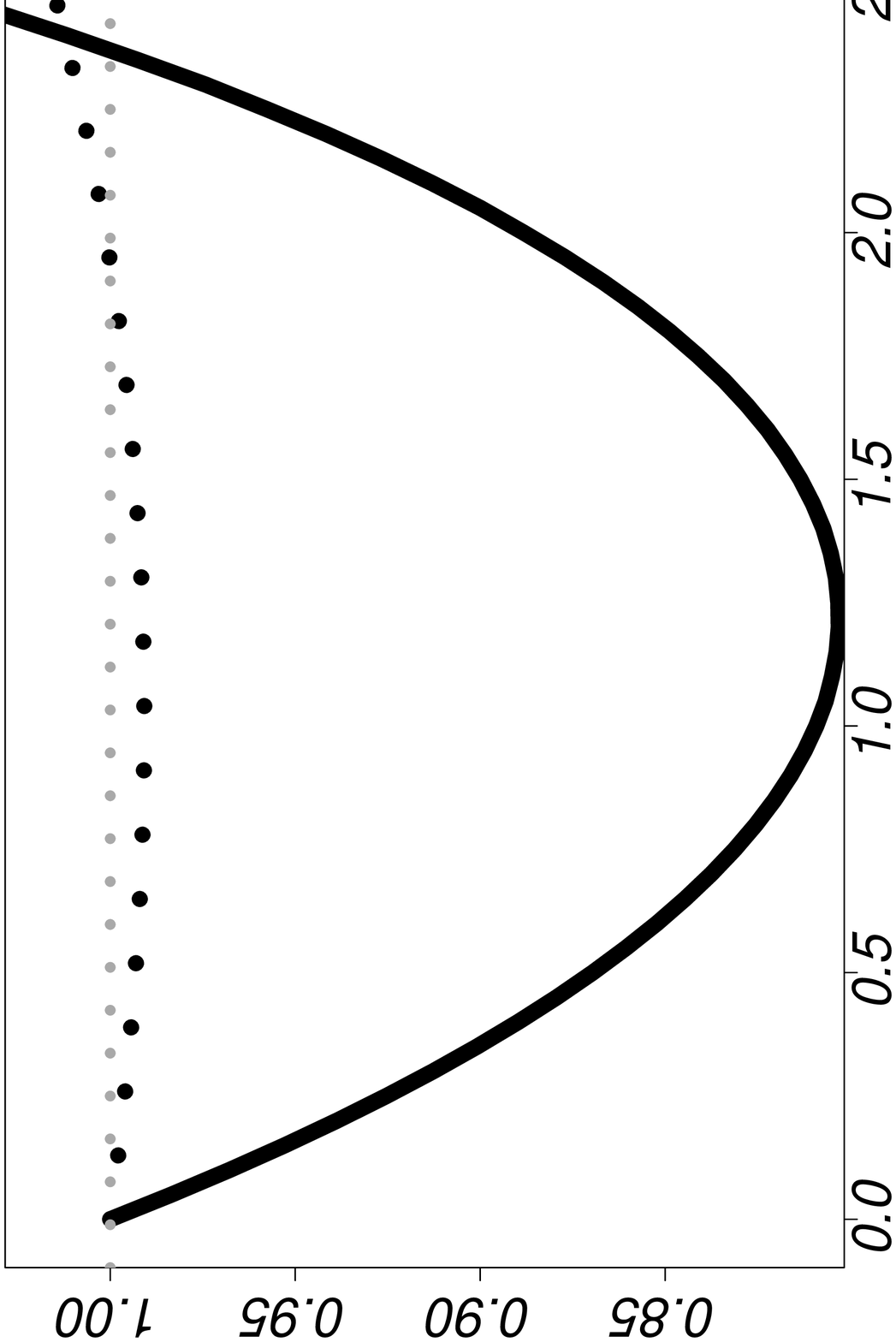}
		\includegraphics[width=0.25\textwidth, angle= 270]{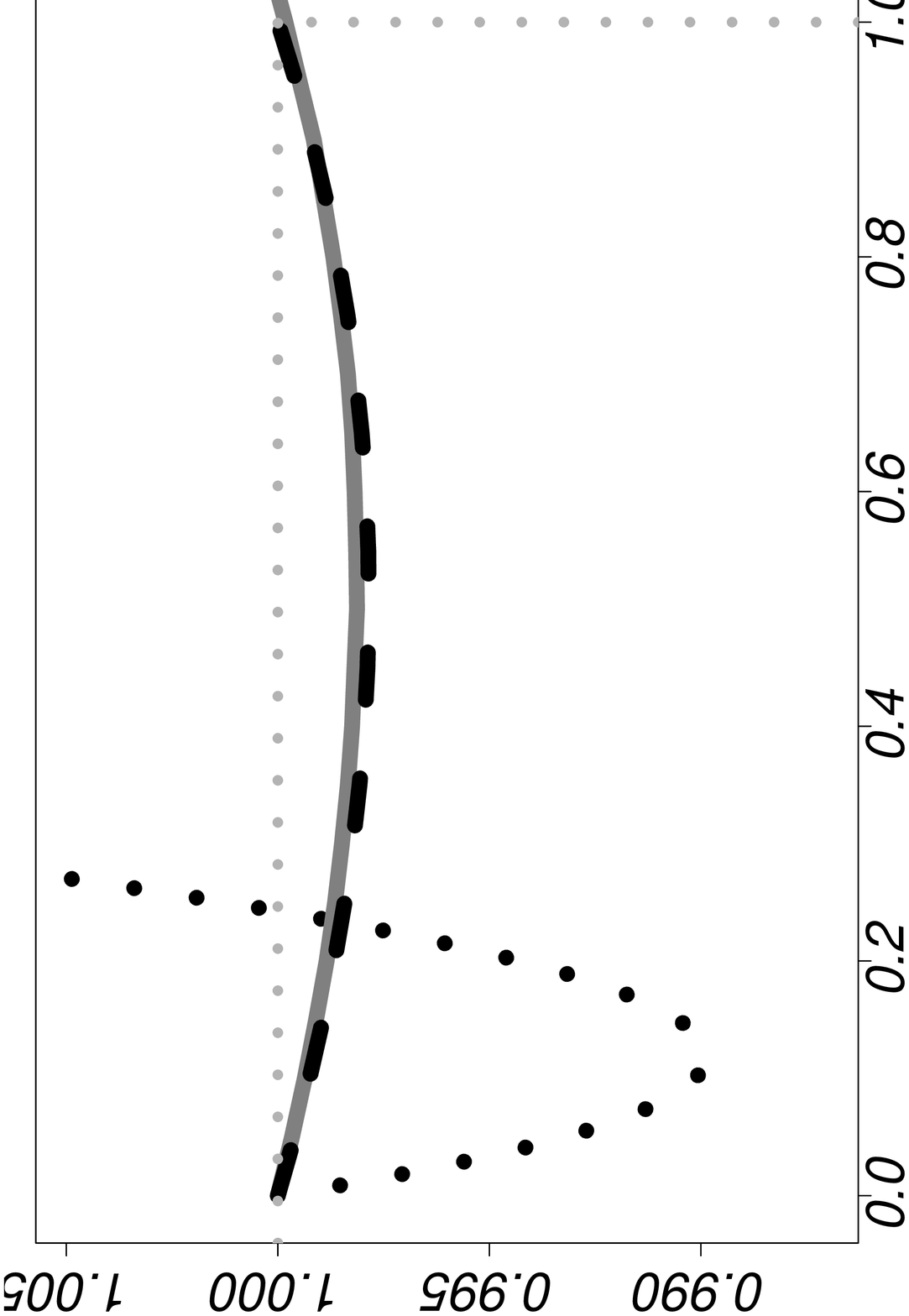}
	\caption{Plots of $(k, \tilde{\rho}_k)$: left, for  models A (---) and B ($\cdots$) which are second order stationary; right for models C (black dashed), D (grey solid) and E (black dotted), which are not second order stationary. In all panels grey dotted lines represent horizontal and vertical lines set at 1.} \label{fig.root.function}
\end{figure}

Figure~\ref{fig.root.function} and Table~\ref{tab:key-stationarity} illustrate that $\phi$, the sum of the meaningful GARCH parameters, has a substantial impact on the value of $\kappa$ with for $\phi>1$, we find $\kappa<1$; when $\phi<1$, then $\kappa>1$; and for  $\phi=1$, $\kappa=1$, with the latter consistent with Theorem~\ref{th.IGARCHkappa}. Unfortunately, when $\phi\not= 1$ no explicit relationship appears to hold between $\phi$ and $\kappa$, as $\kappa$ changes markedly with the innovation distribution. From Table~\ref{tab:key-stationarity} is can be seen that
when $\phi<1$ we have that the shorter the tail of the innovation distribution gives the larger $\kappa$ and hence shorter tails of the GARCH$(p,q)$ marginal distribution; whereas the reverse holds when $\phi>1$; and when $\phi=1$ then $\kappa$ is invariant to the innovation distribution. The case when $\phi>1$ is somewhat surprising as at first thought you would expect that having a heavier tail innovation would result in a heavier tailed GARCH$(p,q)$ process, whereas in fact the opposite occurs.

We finish with empirical diagnostic checks to illustrate that the derived value of $\kappa$ is consistent with the observable tail of the GARCH($p,q$) process. The observable tail can be derived from long run simulations. Specifically we compare the probabilities 
limiting $\Pr(\hat{X}_t^2 > r \mid \hat{X}_t^2 > 1)=r^{-\kappa}$ with the empirical estimate of the probabilities
$\Pr(X_t^2 > rx \mid X_t^2 > x)$ for very large $x$, over a range of $r>1$. 
Figure~\ref{fig.qq.kappa} shows this comparison on a log scale. With the choice of such scaling the true relationship between them has a gradient $\kappa$. The results show that at this far into the distributional tail, and subject to Monte Carlo noise, the empirical distribution is consistent with the limit formulation, and hence also is consistent with the value for $\kappa$ that we have derived.

\begin{figure}[H]
\centering
\includegraphics[width=0.25\textwidth, angle= 270]{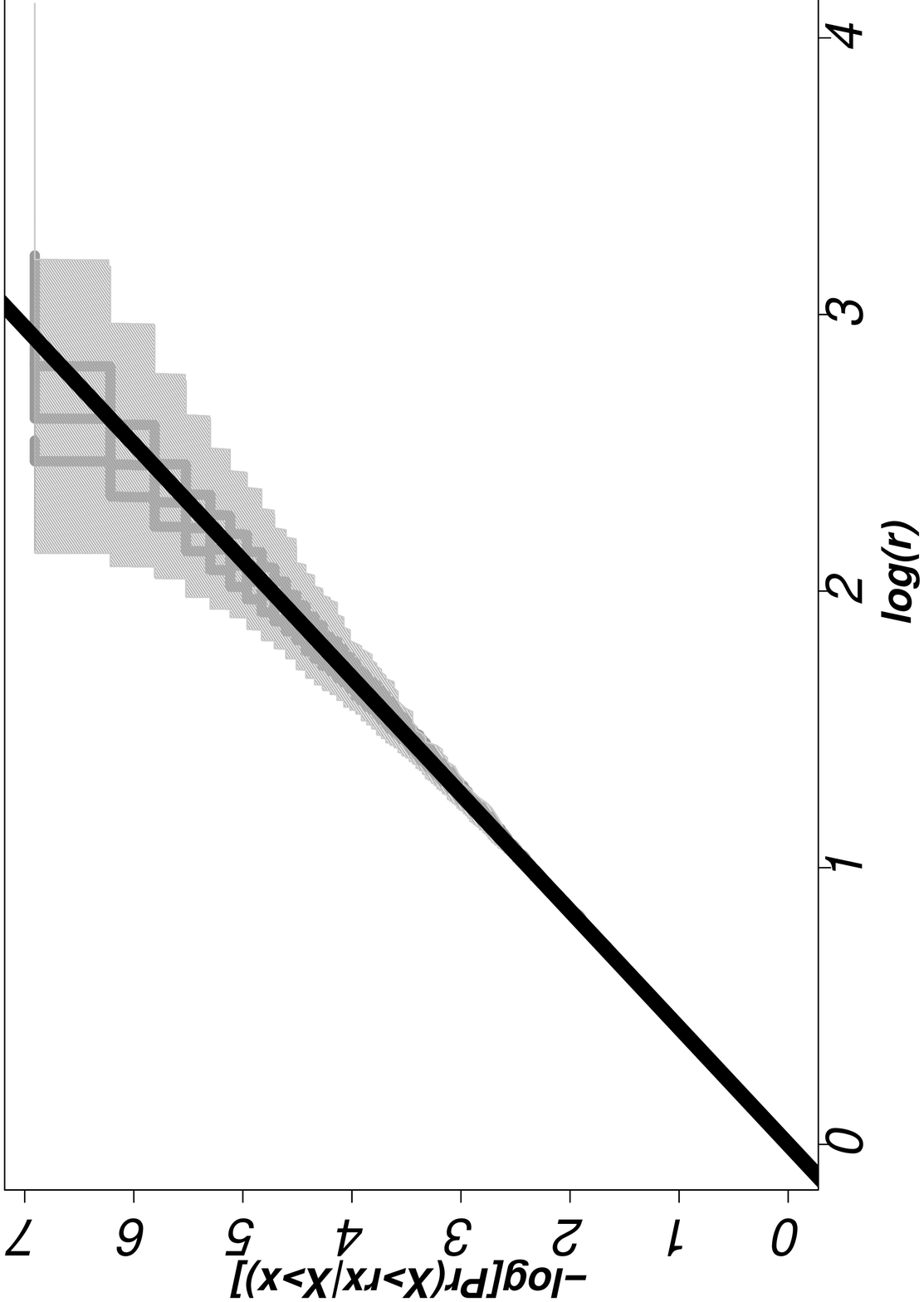}
\includegraphics[width=0.25\textwidth, angle= 270]{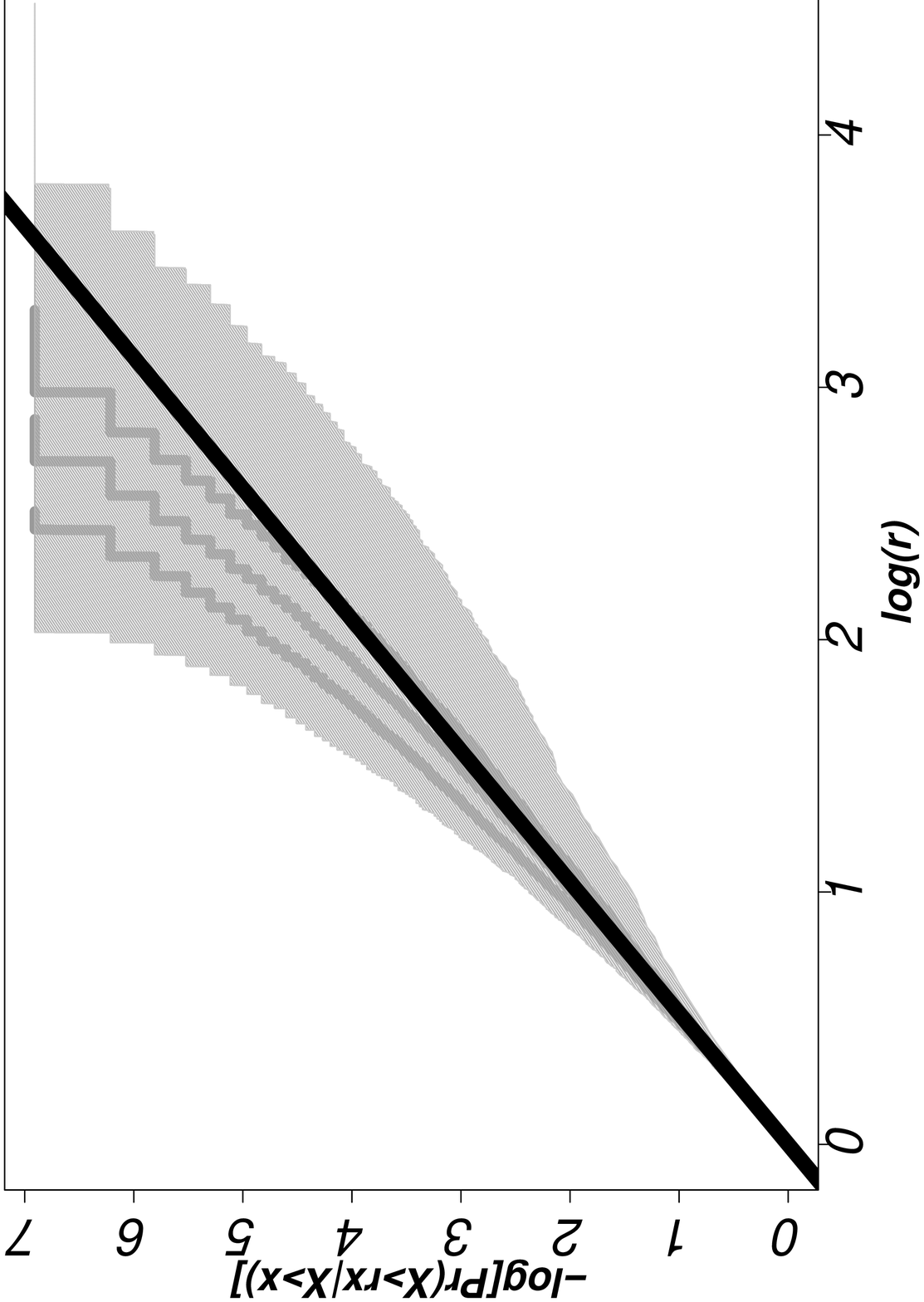}
	\caption{Diagnostic QQ plot for the marginal tail of the squared GARCH models A and B, left and right respectively, comparing empirical and limit distributions. Results are based on 1000 simulations of $5\times 10^7$ GARCH processes with threshold $x$ corresponding to the $0.99998$ marginal quantile. The solid line has a gradient of $\kappa$ and the conditional quantiles of empirical estimators are shown for $2.5\%-97.5\%$ as the shaded region and for $25\%, 50\%$ and $75\%$ quantiles as grey lines.} \label{fig.qq.kappa}
\end{figure}

\subsection{Evaluation of $\delta$} \label{sec:delta}

In Section~\ref{sec:tailchain} we introduced a tail-skewness parameter $\delta$ as a limiting conditional probability~\eqref{eq.tailbalance}. We have not found any previous discussion on the evaluation of $\delta$, which is an important parameter in the calculation of extremal features of GARCH($p,q$) processes when the innovations are asymmetrically distributed, despite \citet{eh+fi+ja+sc:2015} covering this class for GARCH(1,1) processes.
 A natural starting point to evaluate $\delta$ is to take a long-run simulation from the GARCH($p,q$) process and simply estimate the probability~\eqref{eq.tailbalance} empirically for a large enough value of $x$.  However, this is likely to be unreliable in practice. 

Key to the method we propose is the following expression
\begin{eqnarray}
\delta= \lim_{x\rightarrow \infty}\Pr(X_t>x\mid |X_t|>x)  & = & \Pr(\hat{X}_t>1\mid |\hat{X}_t|>1)\nonumber\\
& = & \Pr(\hat{\sigma}_t Z_t>1\mid \hat{\sigma}^2_t Z^2_t>1)\nonumber\\
& = & \Pr(Z_t>0\mid \hat{\sigma}^2_t Z^2_t>1)\nonumber\\
& = & \int_0^{\infty} \Pr(Z_t>0\mid \hat{\sigma}^2_t Z^2_t>1,  \hat{\sigma}^2_t=s)F_{\hat{\sigma}_t^2}(ds)\nonumber\\
& = & \int_0^{\infty} \Pr(Z_t>0\mid Z^2_t>s^{-1})F_{\hat{\sigma}_t^2}(ds)\nonumber\\
& = & \int_0^{\infty} \Pr(Z_t>s^{-1/2}\mid |Z_t|>s^{-1/2})F_{\hat{\sigma}_t^2}(ds),
\label{eqn:waytofinddelta}
\end{eqnarray}
where $F_{\hat{\sigma}_t^2}$ is the distribution of $\hat{\sigma}^2_t$, the $(q+1)$the component of the limit vector 
$\hat{\mathbf{\Theta}}_t$ which is defined by limit~\eqref{eq.mult.var.garchpq}.

For the $\textrm{St}(\mu, \omega, \xi,\nu)$ innovation distribution then expression~\eqref{eqn:waytofinddelta} becomes
\[
\delta=  \int_0^{\infty} 
\frac{1-F_T((s^{-1/2}-\mu)\xi\sqrt{\nu+1}/\omega;\nu+1)}{1-F_T((s^{-1/2}-\mu)\xi\sqrt{\nu+1}/\omega;\nu+1)+F_T(-(s^{-1/2}+\mu)\xi\sqrt{\nu+1}/\omega;\nu+1)}F_{\hat{\sigma}_t^2}(ds)\\
\]
With a sample of $s_1, \ldots , s_m$ from $\hat{\sigma}_t^2$ derived using Algorithm~\ref{alg:Paul} we can get a Monte Carlo approximation, to any desired accuracy though the choice of $m$,  as follows
\[
\delta \approx
\frac{1}{m}\sum_{i=1}^m \frac{1-F_T((s_i^{-1/2}-\mu)\xi\sqrt{\nu+1}/\omega;\nu+1)}{1-F_T((s_i^{-1/2}-\mu)\xi\sqrt{\nu+1}/\omega;\nu+1)+F_T(-(s_i^{-1/2}+\mu)\xi\sqrt{\nu+1}/\omega;\nu+1)}.
\]

Intuitively it seems as though $\delta$ should be equal to $\delta_Z=\lim_{x\rightarrow z_F}\Pr(Z_t>x\mid |Z_t|>x)$, where $z_F$ is the upper end point of $|Z_t|$. This is not the case though, as knowing that $|X_t|$ is large is quite different from knowing that $|Z_t|$ is large, as the former also can be achieved with large volatility and small innovations. For $\textrm{St}(\mu, \omega, \xi,\nu)$ we have that $\delta_Z$ is given by 
\[
\delta_Z=\frac{1-F_T(\xi\sqrt{\nu+1};\nu+1)}{1-F_T(\xi\sqrt{\nu+1};\nu+1)+F_T(-\xi\sqrt{\nu+1};\nu+1)}.
\]  

Figure~\ref{fig.delta} shows how $\delta=\delta(\xi)$ varies with the skew-$t_3$ distribution parameter $\xi$ for $\xi\ge 0$; the values of $\delta$ for $\xi<0$ follow due to $\delta$ being symmetrical about $0.5$, i.e., for $\xi< 0$  then $\delta(\xi)$ is equal to $1-\delta(|\xi|)$.  The figure shows that for a given level of $\xi$, i.e., skewness in the innovation distribution, as $\phi$ increases there is a diminishing
level of skewness in the tails of the GARCH$(p,q)$ process as measured by $\delta$. When $\phi=0$ then 
$\delta=\delta_Z$, and this value is seen to be an upper bound for $\delta$ in Figure~\ref{fig.delta}. These results are consistent with intuition as the larger the value of $\phi$ the more the process is driven by the past values of the process and the less the new innovations (and their skewness) matter. Table~\ref{tab:key-stationarity} illustrates how $\delta$ changes over models when $\xi=1$, with $\delta$ typically decreasing as $\phi$ increases, as the persistence of volatility is  more important than the innovation structure as $\phi$ increases.

\begin{figure}[H]\psfrag{xi}{\LARGE $\xi$} \psfrag{delta}{\LARGE $\delta$}
	\centering
	\includegraphics[width=0.35\textwidth, angle= 270]{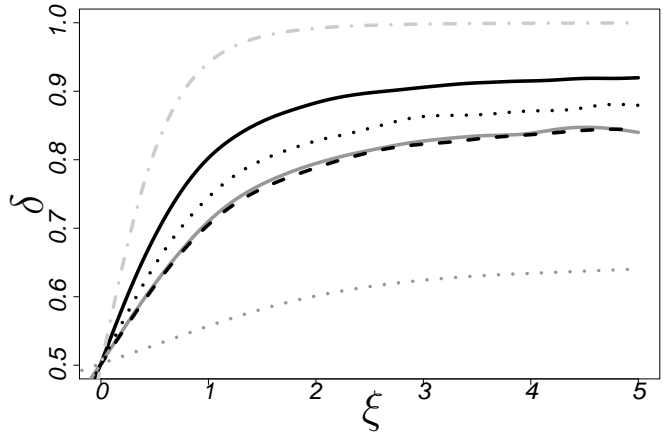}
	\caption{Plot of $\delta=\delta(\xi)$ against $\xi$, for $\xi\ge 0$, for models: A (black continuous), B (black dotted), C (black dashed),  D (grey continuous) and E (grey dotted) for the skew $t$ distribution with $\nu=3$	. Also plotted is  $\delta_Z$ (light grey dashed and dotted). For $\xi< 0$  then $\delta(\xi)$ is equal to $1-\delta(|\xi|)$.	
	Line types are as consistent as possible with Figure~\ref{fig.root.function}.} \label{fig.delta}
\end{figure}

\section{Results for the \garchpq process}
\label{sec:results}

\subsection{Extremogram} \label{sec:extremogram}

Figure~\ref{fig.extremogram.garch22.modAD} gives the extremogram $\chi_{X^2}(\tau)$ for the squared of GARCH process for models A-D with normal and $t$ innovations with $\nu=3$ degrees of freedom. Firstly notice the impact of the innovation distribution on  $\chi_{X^2}(\tau)$. In all cases the heavier tailed innovation distribution leads to weaker extremal dependence at all lags. Models C and D, both IGARCH processes (with $\phi=1$), exhibit much slower decay rates in extremal dependence as lag $\tau$ increases than for models A and B with $\phi<1$, with the level of extremal dependence appearing to be strongly related to  $\phi$. Furthermore, we see for models B and D that $\chi_{X^2}(2)>\chi_{X^2}(1)$, with  $\chi_{X^2}(\tau)$ decaying monotonically for $\tau \ge 2$. The reason for this seems to be that 
$\beta_2>\max(\alpha_1, \alpha_2)$ here. From Section~\ref{sec:tailchain} we have that the evaluation of $\chi_{X^U}(\tau)$ and $\chi_{X^L}(\tau)$ is simple from $\chi_{X^2}(\tau)$ once $\delta$ is known, which we have from Section~\ref{sec:delta} and Table~\ref{tab:key-stationarity}.

An empirical estimate $\tilde{\chi}_{X^2}(\tau, u)$ of the extremogram of $\chi_{X^2}(\tau)$ based on a sample 
of length $n$ from a GARCH($p,q$) process is given by 
\[
\tilde{\chi}_{X^2}(\tau, u)=\frac{\sum_{j=1}^{n-\tau} \mathbf{1}(X^2_j>u, X^2_{j+\tau}>u)}{\sum_{j=1}^{n-\tau} \mathbf{1}(X^2_j>u)},
\]
where $u$ is a threshold. 
Figure~\ref{fig.extremogram.garch22.modAD} shows $\tilde{\chi}_{X^2}(\tau, u)$  for large $n$ and for three threshold choices $u$ corresponding to $0.99, 0.999$ and $0.9999$ quantiles of $X_t^2$. The agreement with 
limit values $\chi_{X^2}(\tau)$ that we have evaluated is very good generally, with the empirical estimates suffering from bias and variance trade-off, as with all threshold based estimates. Model B has the slowest convergence of the empirical estimator, but even here  at the highest threshold there is almost perfect overlap between empirical estimates and the true values for all lags $\tau$. In contrast for model C the highest threshold produces the least good estimate, presumably due to its high variance. This gives strong evidence that our evaluation of $\chi_{X^2}(\tau)$ is accurate, but it also shows how difficult it is to get accurate values from direct simulations due to different convergence rates from apparently rather similar models.

\begin{figure}
	\begin{center}
		\includegraphics[width=0.25\textwidth, angle= 270]{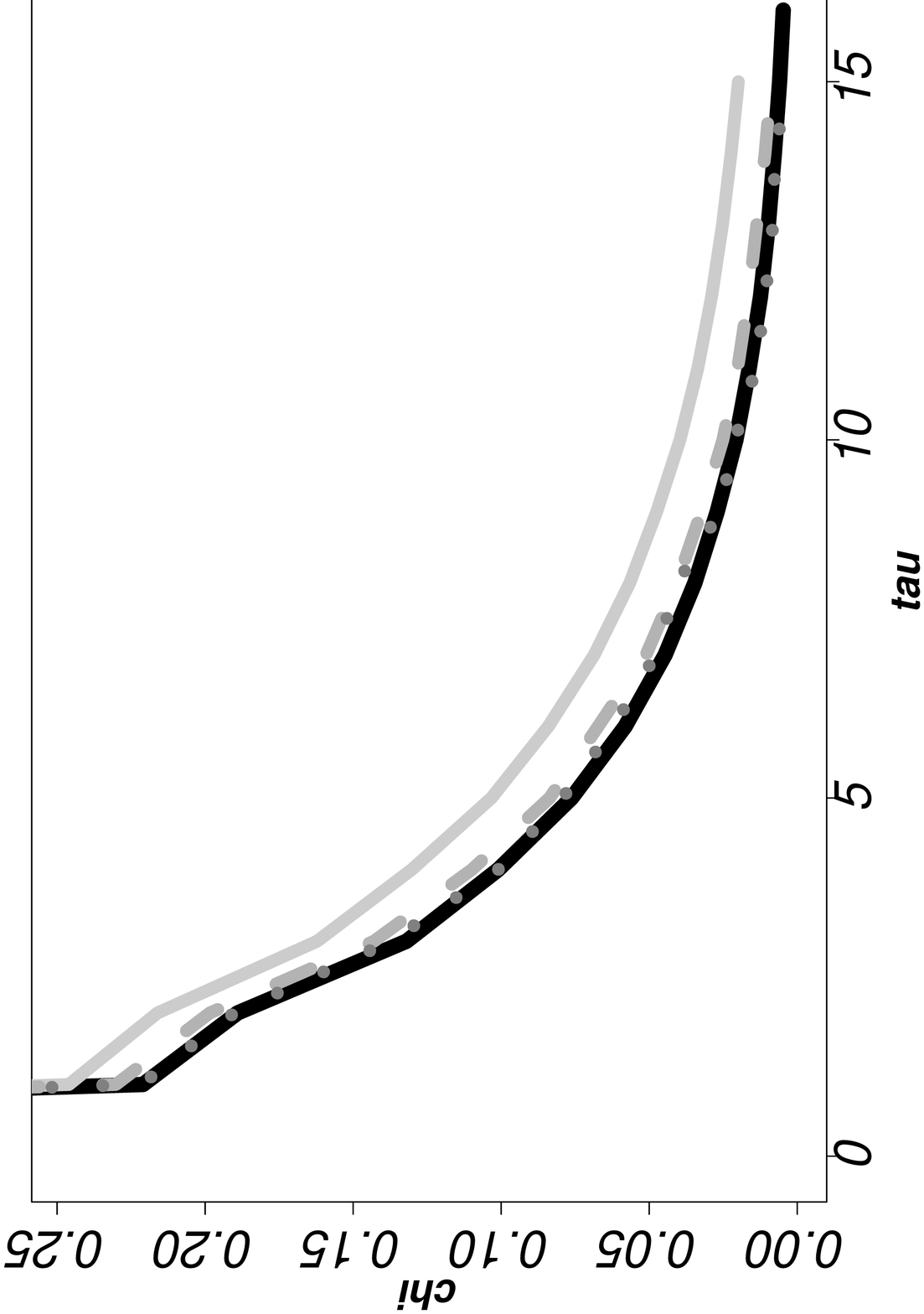}
		\includegraphics[width=0.25\textwidth, angle= 270]{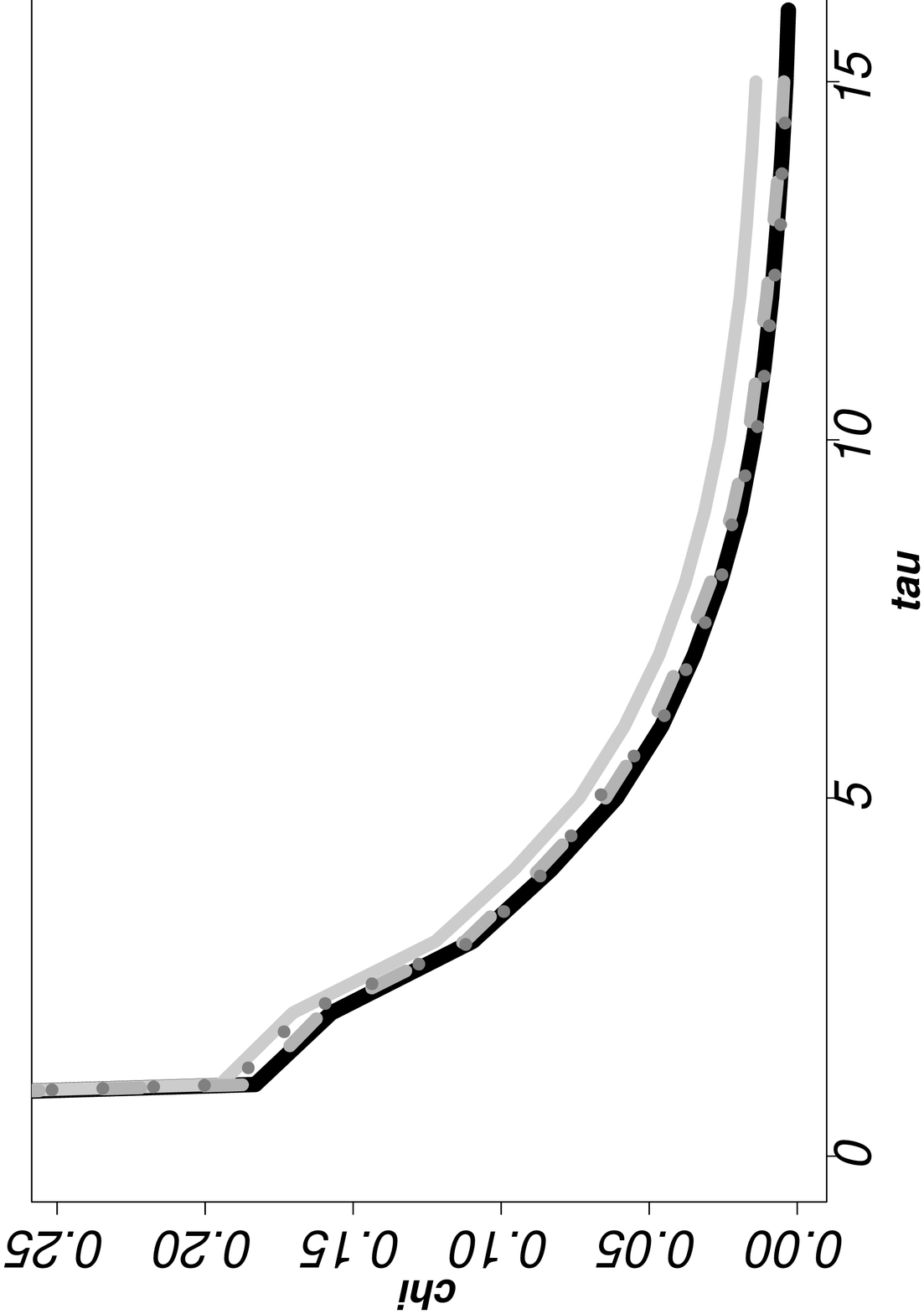}
		\includegraphics[width=0.25\textwidth, angle= 270]{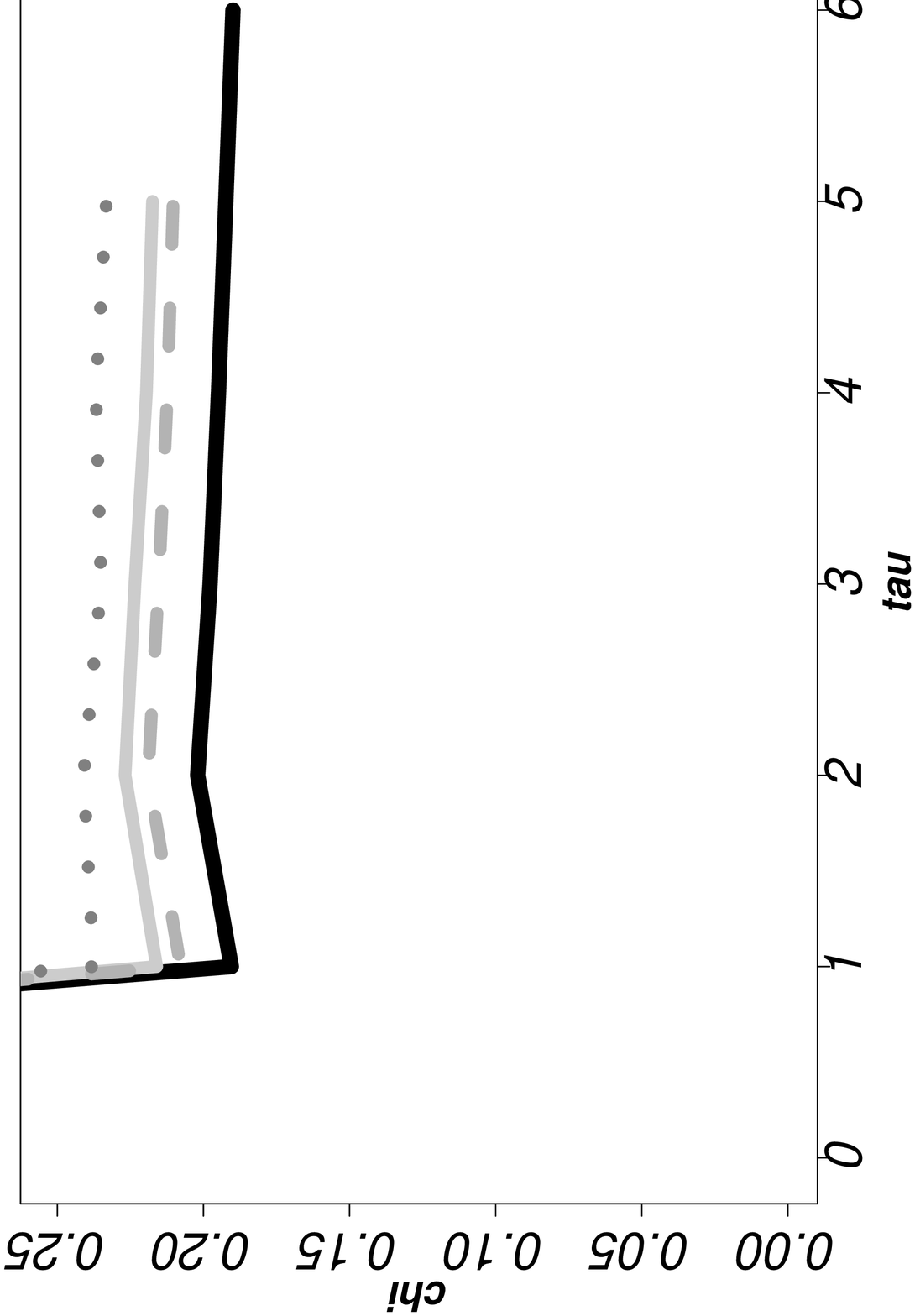}
		\includegraphics[width=0.25\textwidth, angle= 270]{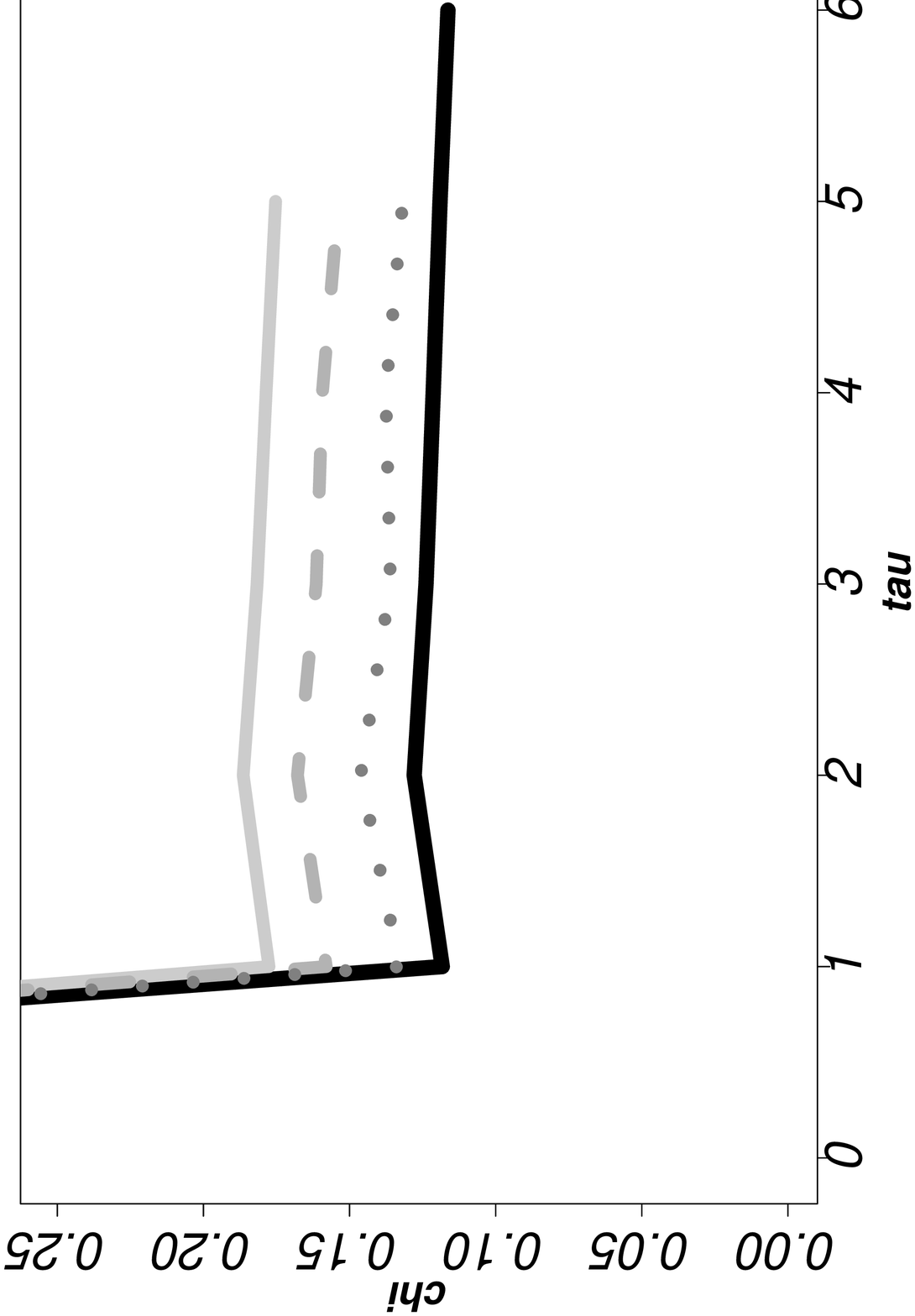}
		\includegraphics[width=0.25\textwidth, angle= 270]{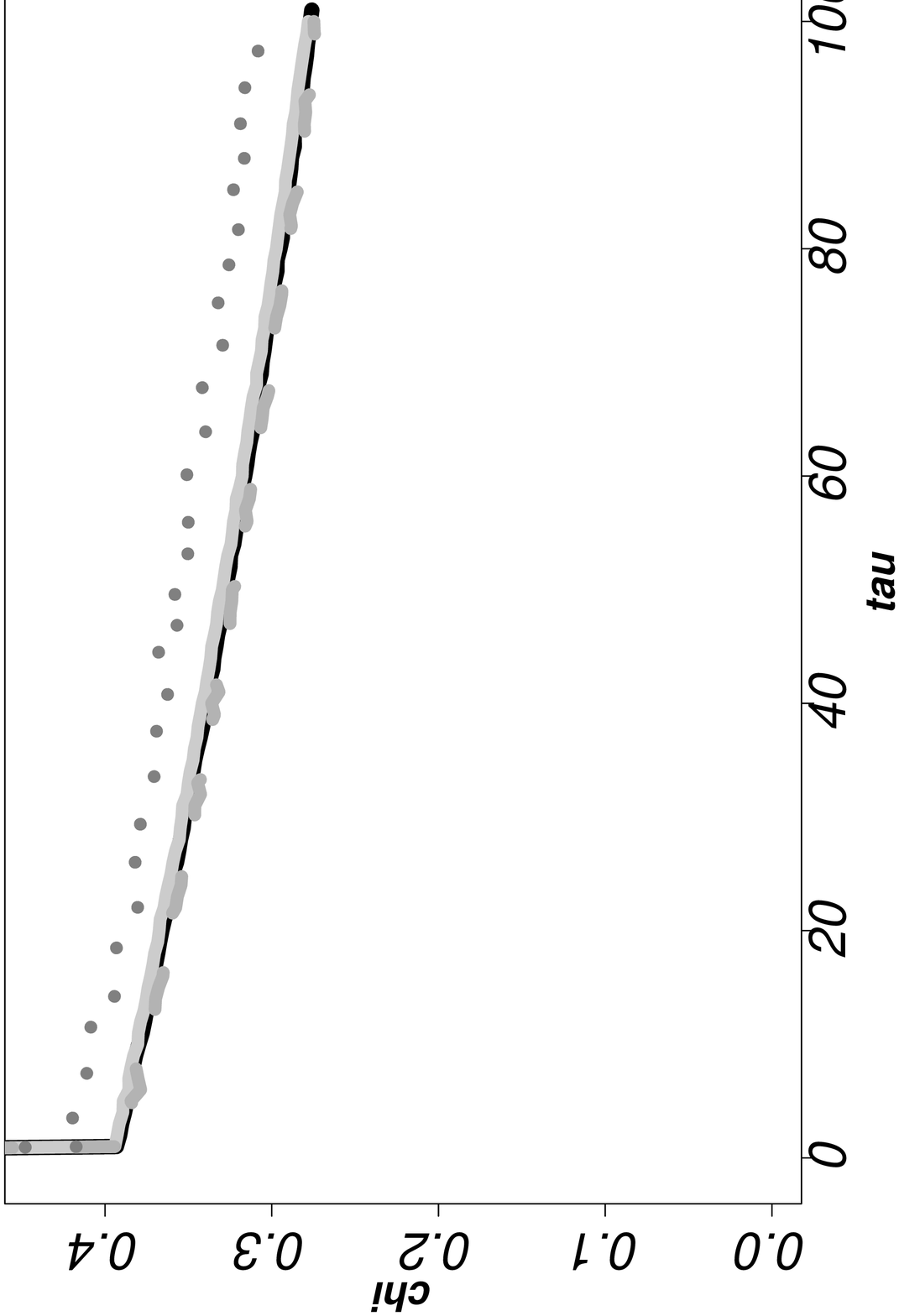}
		\includegraphics[width=0.25\textwidth, angle= 270]{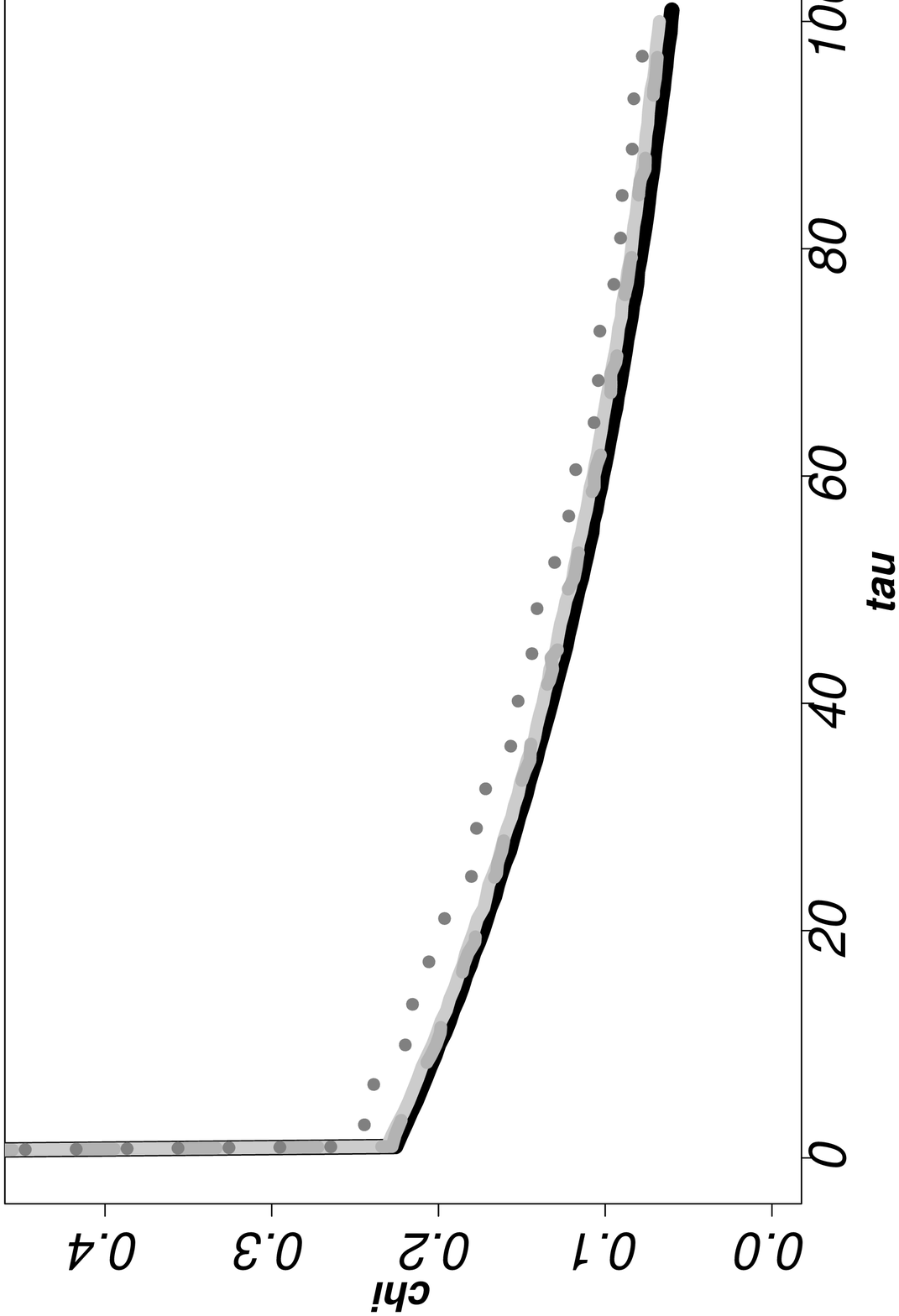}
		\includegraphics[width=0.25\textwidth, angle= 270]{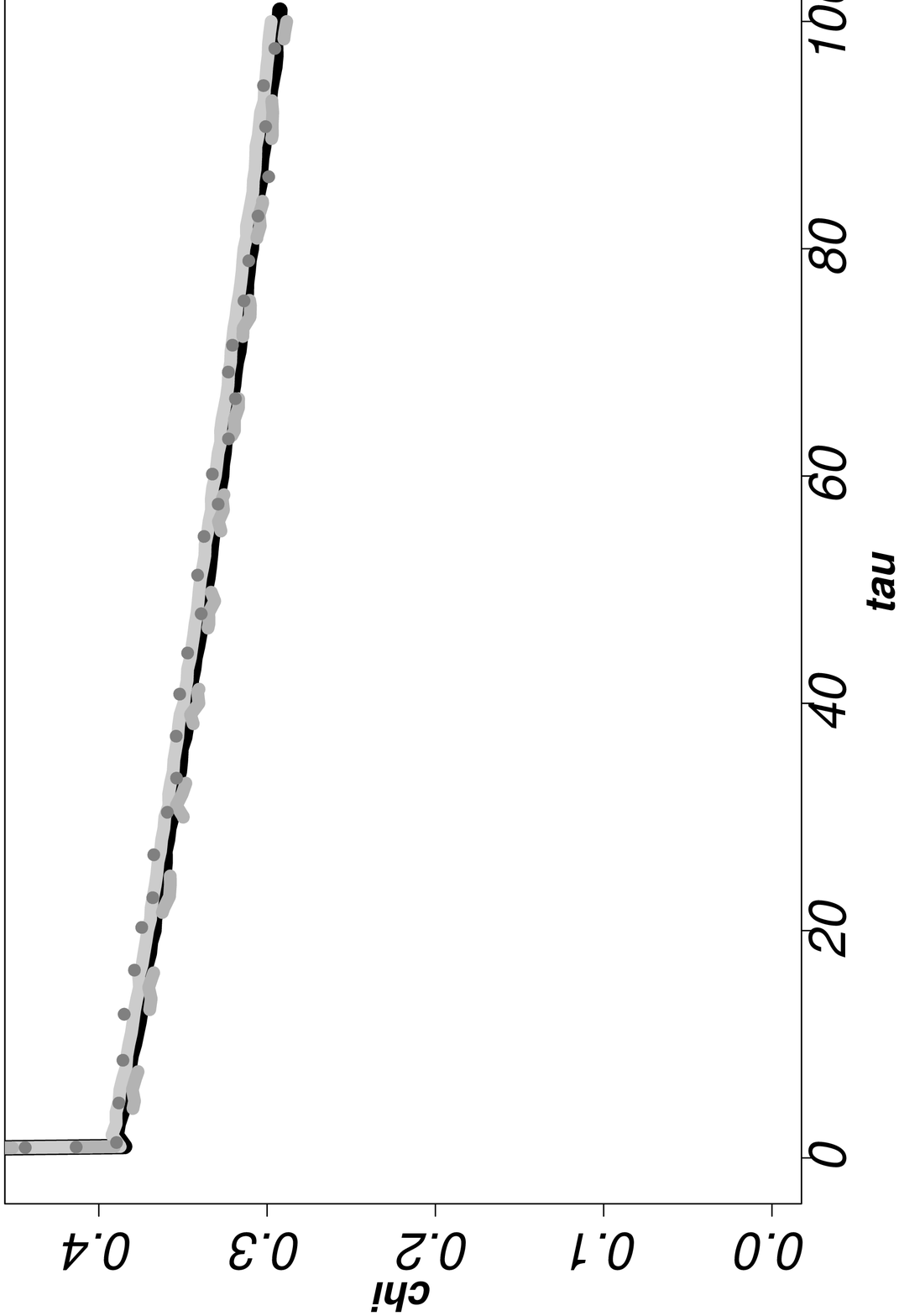}
		\includegraphics[width=0.25\textwidth, angle= 270]{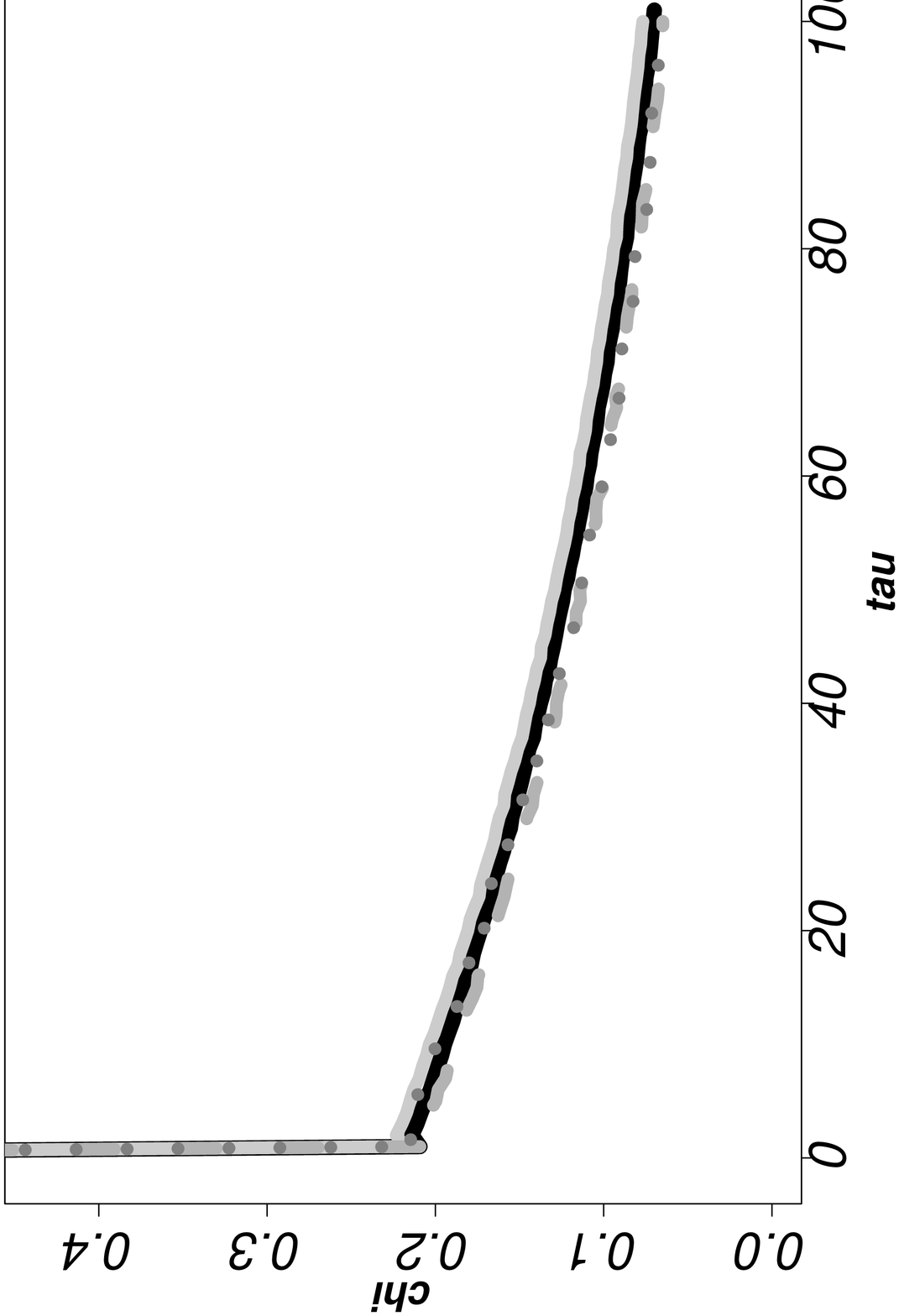}
	\end{center}
	\caption{Extremogram $(\tau, \chi_{\tau})$ for various squared GARCH processes with: $Z_t\sim N(0,1)$ (left panels) and scaled $t(3)$ (right panels) and for models A-D from top to bottom rows respectively.
		Black lines are true limit values and the three grey lines are empirical extremogram estimates $\tilde{\chi}_{\tau}(u)$, based on a sample of size $n=5\times10^7$, at $u$ corresponding to 0.99 (continuous solid light grey), 0.999 (dashed grey) and 0.9999 (dotted dark grey) quantiles of $X_t^2$. } \label{fig.extremogram.garch22.modAD}
\end{figure}

\subsection{Extremal Index}

We finish by looking at how the extremal index $\theta_{X^2}$,  $\theta_{X^U}$ and $\theta_{X^L}$,
of the upper tail of the series $\{X^2_t\}, \{X_t\}$ and $\{-X_t\}$, change over GARCH($p,q$) processes. 
Results for each of these characteristics are given for models A-E and three innovation distributions  are given in Table~\ref{tab:key-stationarity}. Firstly consider the effect of the innovation distribution for a given model on the extremal index of the $\{X^2_t\}$ process, $\theta_{X^2}$ through the associated average size of clusters extremes values, i.e., $1/\theta_{X^2}$. With shorter tailed innovations clusters last longer on average and introducing skewness further reduces the mean cluster size. For increasing $\phi$, for $0<\phi\le 1$, we have increasing average cluster sizes, but that pattern does not follow when $\phi>1$. In all cases, $\min(\theta_{X^U},\theta_{X^U})\ge \theta_{X^2}$,  indicating the extremes of the processes $\{X_t\}$ and $\{-X_t\}$ exhibit less clustering on average than the $\{X^2_t\}$ process. 
We have equality, in this inequality, only  when $\delta=0$ or $1$, and find that as $\delta$ tends to these limits one or other of the processes $\{X_t\}$ and $\{-X_t\}$ has similar cluster of extremes events as the $\{X^2_t\}$ process. For $\xi>0$ more clustering occurs in the upper tail than in the lower tail of the GARCH$(p,q)$ process, with the reverse happening with $\xi<0$.

To give a better idea of how the GARCH($p,q$) parameters affect the extremal index Figure~\ref{fig.ei.garch22} presents a contour plot of $\theta_{X^U}$ for GARCH(2,2) process over different key parameters $(\alpha_1, \alpha_2,\beta_1,\beta_2)$. In particular, in each panel we 
hold fixed two parameters and contour over the other two key parameters of a GARCH(2,2) process. 
Figure~\ref{fig.ei.garch22}, shows that broadly $\theta_{X^U}$  decreases, i.e., average cluster sizes increase,  with increasing $\phi$, up to the boundary case of an IGARCH(2,2) model. Consequently contours are near linear in the parameters. Unless $\max(\beta_1, \beta_2)$ is large then small values of $\alpha_1$ and $\alpha_2$ tend to lead to very limited clustering.  This seems logical given the GARCH formulation, as small $\alpha_1$ and $\alpha_2$ mean that the effect of the large $X_t^2$ value can have limited impact of the subsequent volatilities, so without $\beta_1$ and $\beta_2$ being large, to pick up the momentum of the evolution of the event, the large event is very likely to die out rapidly. 
We also have that we obtain stronger extremal dependence with larger values of the pair $(\alpha_2, \beta_2)$ than for 
$(\alpha_1,\beta_1)$ for equal values of $\phi$, as seen by values of $\theta_{X^U}$ being smaller in  Figure~\ref{fig.ei.garch22} right panel by comparison to the left panel. The reason for this is the stronger effect of $\beta$ coefficients compared to that of the $\alpha$ as the persistency in the volatility is translated into a stronger extremal dependence.

\begin{figure}
\begin{center}
	\includegraphics[width=0.25\textwidth, angle= 270]{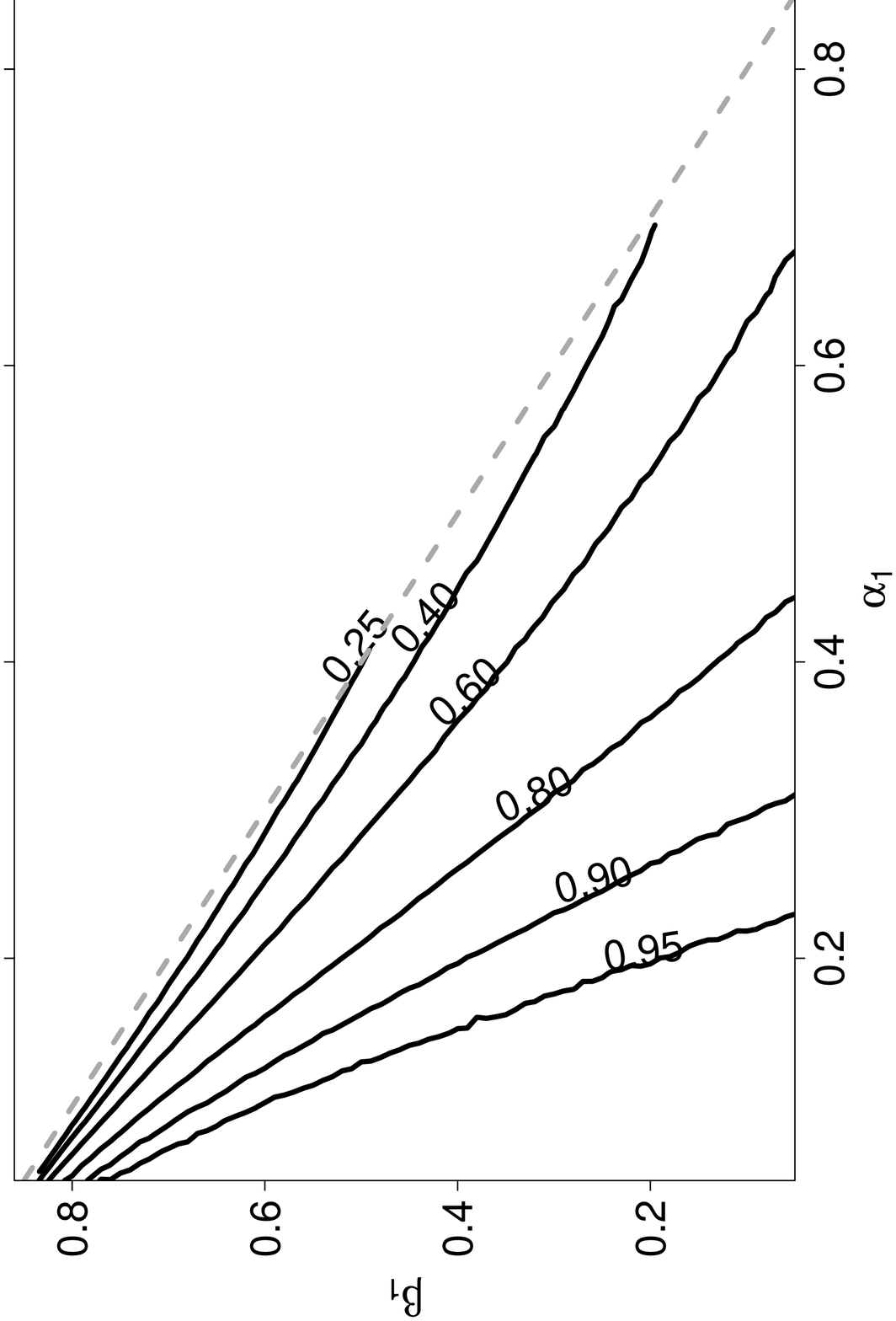}
        \includegraphics[width = 0.25\textwidth, angle= 270]{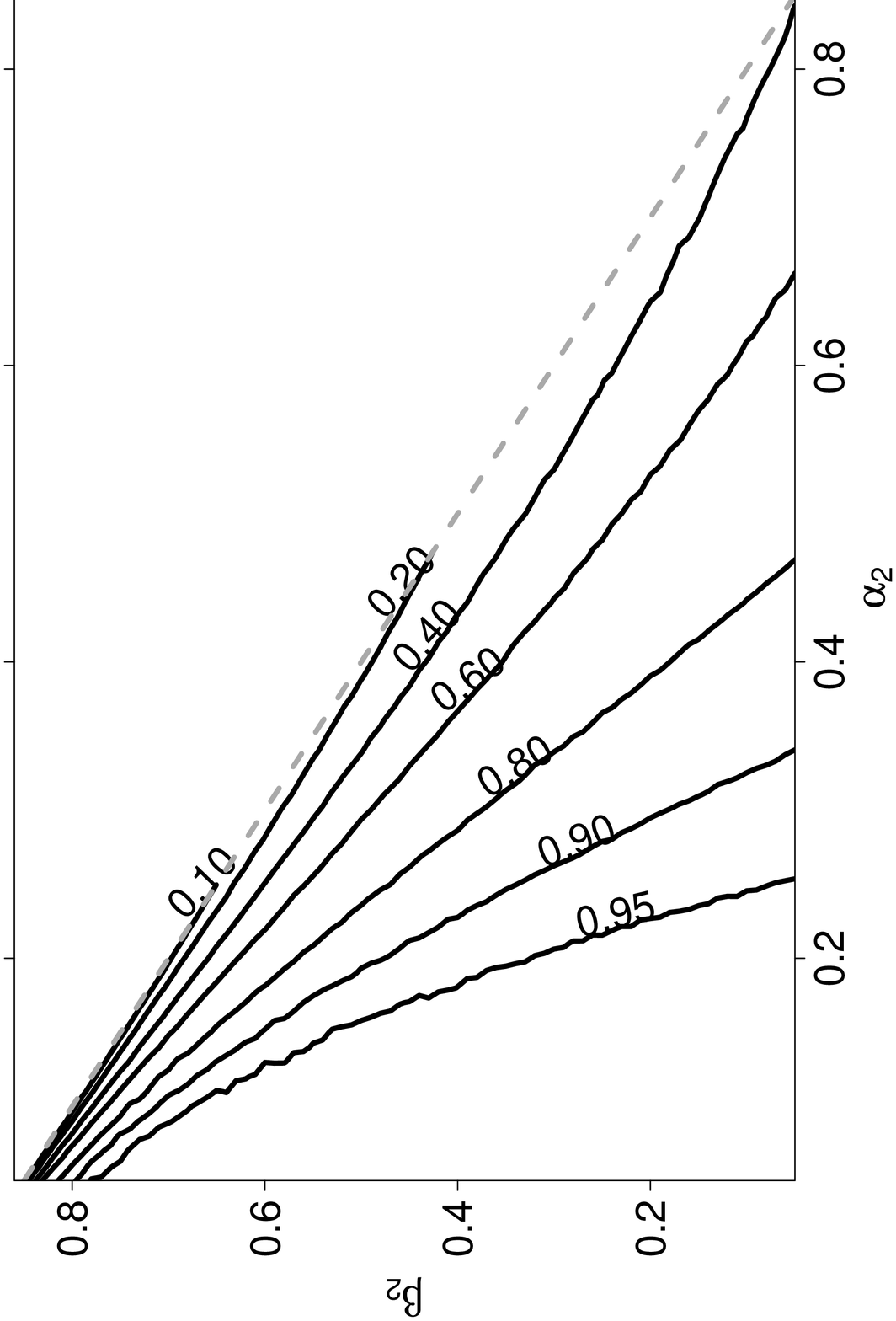}
		\end{center}
	\caption{Contour plots for the extremal index $\theta_{X^U}$ for the GARCH($2,2$) process: 
	left, as function of $(\alpha_1,\beta_1)$ with $\alpha_2 = \beta_2 = 0.05$;
	right, as function of $(\alpha_2,\beta_2)$ with $\alpha_1 = \beta_1 = 0.05$. 
	In both panels the innovation $Z_t$ is standard normal and the grey dashed line is the boundary of the IGARCH($2,2$). } \label{fig.ei.garch22}. 
\end{figure}

\section{Discussion}
\label{sec:discuss}

The new theory and methods we present extend to assessing the strict stationarity and extremal properties for a much broader class of stochastic recurrence equations. Specifically, for any process
\[
\mathbf{Y}_t = \mathbf{A}_t \mathbf{Y}_{t-1} + \mathbf{B}_t, \hspace{5mm} t \in \mathbb{Z},
\]
where $\mathbf{A}_t$ and  $\mathbf{B}_t$  
are stochastic, independent and identically distributed, sequences of matrices and vectors respectively, satisfying the conditions of \cite{ke:73}, then our paper gives methods which: help assess stationarity of the process; determine its tail index of regular variation for the marginals distributions of $\mathbf{Y}_t$; determine ways to simulate from the spectral measure from the limiting joint distribution of $\mathbf{Y}_t$ given some norm, $\norm{\mathbf{Y}_t}$, tends to infinity; and  are able to generate a wide range of properties, such as the extremal index of any marginal of the process $\mathbf{Y}_t$ from its tail chain. Previously none of these properties could be derived due to a combination of  a lack of appropriate numerically stable algorithms. The methods presented here overcome these limitations and provide a broad toolbox of numerically robust approaches to derive the extremal analysis of a wide class of stochastic recurrence equations including all GARCH($p,q$) processes with bounded and unbounded innovation variables.

\begin{appendices}
\section{ Proof of Theorems}\label{sec:app.proofs}

\begin{proof}[Proof of Theorem~\ref{th.gamma}]
	First rewrite the product of independent matrices as
	\begin{equation} \label{eq:prodAscaled}
	\prod_{i=1}^{t}\mathbf{A}_{t+1-i} = \prod_{i=1}^{t}\left(\frac{\mathbf{A}_{t+1-i}}{\lambda_{t+1-i}\exp(\eta)}\right)  \left(\prod_{i=1}^{t}\lambda_i\right) \exp(\eta t)=\mathbf{C}_t \left(\prod_{i=1}^{t}\lambda_i\right)\exp(\eta t).
	\end{equation}
	Thus
	\[
	\norm{\prod_{i=1}^{t}\mathbf{A}_{t+1-i} }  =  \norm{\mathbf{C}_t \left(\prod_{i=1}^{t}\lambda_i \right)\exp(\eta t)}
	= \norm{\mathbf{C}_t}\left(\prod_{i=1}^{t}\lambda_i \right) \exp(\eta t)
	\]
	so as $t \rightarrow \infty$\begin{eqnarray*}
		\frac{1}{t}\ln \norm{\prod_{i=1}^{t}\mathbf{A}_{t+1-i} } & = & \ln(\norm{\mathbf{C}_t})/t+\sum_{i=1}^{t}\ln \lambda_i /t +\eta\\
		& \rightarrow & E(\ln \lambda)+\eta,
	\end{eqnarray*}
	hence $\gamma=E(\ln \lambda)+\eta$.
\end{proof}

\begin{proof} [Proof of Theorem~\ref{th.kappa}]

With the same notation as for Theorem~\ref{th.gamma} we have that
\begin{eqnarray}
\norm{\mathbf{A}_t \mathbf{A}_{t-1} \cdots \mathbf{A}_1}^{\kappa}
&=&  \norm{\mathbf{C}_t\left(\prod_{i=1}^{t}\lambda_i\right)\exp(\eta t) }^{\kappa}\nonumber\\
&=&  \norm{\mathbf{C}_t}^{\kappa}\left(\prod_{i=1}^{t}\lambda_i^{\kappa}\right)\exp( \eta \kappa t).
\label{eqn:prod}
\end{eqnarray}
Under condition~\eqref{eqn:Ccondition} we have that there exists a sequence $s_t>0$ such that $s_t\rightarrow 0$ and $s_t t\rightarrow \infty$ as $t\rightarrow \infty$
\[
\exp(-s_t t)<\norm{\mathbf{C}_t}<\exp(s_t t)
\]
for all $t$. Combining these inequalities with expression~\eqref{eqn:prod} we obtain that 
\[
E \left(\exp(-s_t t)\left(\prod_{i=1}^t\lambda_i ^{\kappa}\right)\right) \exp( \eta\kappa t))  < 
E\left(\norm{\mathbf{A}_t \mathbf{A}_{t-1} \cdots \mathbf{A}_1}^{\kappa}\right)  < 
E \left(\exp(s_t t)\left(\prod_{i=1}^t\lambda_i ^{\kappa}\right)\right) \exp( \eta\kappa t)).
\]
Focusing first on the upper bound we have that 
\begin{eqnarray*}
E \left(\exp(s_t t)\prod_{i=1}^t\lambda_i ^{\kappa}\right) \exp( \eta\kappa t) 
&=&   \exp(s_t t) E \left(\prod_{i=1}^t\lambda_i ^{\kappa}\right) \exp( \eta\kappa t)\\
&=&   \exp(s_t t)\{E\left( \lambda^{\kappa}\right)\}^t\exp( \eta\kappa t)\\
&=&  \exp(s_t t) \{E[(\lambda \exp(\eta))^{\kappa}]\}^t,
\end{eqnarray*}
where the second step comes from $\lambda_i$, $i=1, \ldots, t$ being independent and identically distributed. 
Hence, as $t\rightarrow \infty$
\begin{eqnarray*}
\frac{1}{t} \ln \left\{E \left(\exp(s_t t)\left(\prod_{i=1}^t\lambda_i ^{\kappa}\right)\right) \exp( \eta\kappa t)\right\} 
& = & s_t  + \ln \{E[(\lambda \exp(\eta))^{\kappa}]\}\\
& \rightarrow & \ln \{E[(\lambda \exp(\eta))^{\kappa}]\}.
\end{eqnarray*}
By an identical argument, the lower bound is found to be asymptotically equal to the upper bound. Hence 
\[
\lim_{t\rightarrow \infty}\frac{1}{t} \ln E \left(\norm{\mathbf{A}_t \mathbf{A}_{t-1} \cdots \mathbf{A}_1}^{\kappa}\right)=  \ln\{E\left( (\lambda\exp(\eta))^{\kappa}\right)\},
\]
so this limit is equal to 0, as required by condition~\eqref{eq:root.equation}, only when $\kappa$ satisfies $E[(\lambda \exp(\eta))^{\kappa}]=1$.

\end{proof}

\begin{proof} [Proof of Theorem~\ref{th.IGARCHkappa}]
First suppose that $\kappa=1$, then from property~\eqref{eqn:moment} we have that 
$E(\norm{\mathbf{A}\hat{\mathbf{\Theta}}_t}) = 1$, where $\mathbf{A}$ is independent of $\hat{\mathbf{\Theta}}_t$. Then, by considering the vector $\mathbf{Y}_t$, we see that marginally the first $q$ and the last $p$ components of $\hat{\mathbf{\Theta}}_t$ each have identical marginal distributions. Furthermore, as $E(Z_t^2)=1$, it follows that all $E(\hat{\vartheta}_t^{(i)})=1/(p+q)$ for all $i=1, \ldots , p+q$, where recall $\hat{\mathbf{\Theta}}_t=(\hat{\vartheta}_t^{(1)}, \ldots ,\hat{\vartheta}_t^{(p+q)})$. We also have that 
\[
E(\mathbf{A}) = \begin{pmatrix} \alpha^{(q-1)}&\alpha_q &\beta^{(p-1)} &
\beta_p  \\ \boldsymbol{I}_{q-1} &0_{q-1}& \boldsymbol{0}_{(q-1)\times (p-1)} & 0_{q-1} \\ \alpha^{(q-1)} &\alpha_q &\beta^{(p-1)} &
\beta_p \\
\boldsymbol{0}_{(p-1)\times(q-1)} &0_{p-1} & \boldsymbol{I}_{p-1} & 0_{p-1}\end{pmatrix}.
\]
Let $a_{i,j}$ denote the $(i,j)$th element of $\mathbf{A}$, using the independence of $a_{i,j}$ from $\hat{\vartheta}_t^{(j)}$  then 
\begin{eqnarray*}
E(\norm{\mathbf{A}\hat{\mathbf{\Theta}}_t})  & = & E\left(\sum_{i=1}^{p+q}\sum_{j=1}^{p+q} a_{i,j} \hat{\vartheta}_t^{(j)}\right)\\
& = & \sum_{i=1}^{p+q}\sum_{j=1}^{p+q} E(a_{i,j}) E(\hat{\vartheta}_t^{(j)})\\
& = & \frac{1}{p+q}\sum_{i=1}^{p+q}\sum_{j=1}^{p+q} E(a_{i,j}) \\
& = & \frac{1}{p+q}\left(2\sum_{i=1}^{q} \alpha_i+2\sum_{i=1}^{p} \beta_i+q-1+p-1\right) \\
& = & 1+\frac{2}{p+q}\left(\sum_{i=1}^{q} \alpha_i+\sum_{i=1}^{p}\beta_i-1\right).
\end{eqnarray*}
Thus $E(\norm{\mathbf{A}\hat{\mathbf{\Theta}}_t})=1$ only when $\sum_{i=1}^{q} \alpha_i+\sum_{i=1}^{p}\beta_i=1$, i.e., when the process is IGARCH($p,q$). The argument is simply reversed giving that $\kappa=1$ for any IGARCH($p,q$) process.
\end{proof}
\end{appendices}

\section*{Acknowledgements}
We would like to thank Feridun Turkman for encouraging and helpful discussions.

\bibliographystyle{jrss}
\bibliography{reduced}
\end{document}